\documentclass[
]{ceurart}

\usepackage{listings}
\usepackage{graphbox}
\DeclareRobustCommand{\DLLogo}{%
  \begingroup\normalfont
  \kern-1.75pt\includegraphics[align=c,height=1.25\baselineskip]{dl-logo}\kern-1.5pt%
  \endgroup
}

\usepackage[mathlines]{lineno}
\usepackage{etoolbox}
\usepackage{url}
\usepackage[obeyFinal]{todonotes}
\usepackage[utf8]{inputenc}
\usepackage{csquotes}
\usepackage{rotating}
\usepackage{graphicx}
\usepackage{multirow}
\usepackage{multicol}
\usepackage{tikz}
\usetikzlibrary{arrows}
\usepackage{macros}

\usepackage{comment}

\newif\ifshortversion
\shortversiontrue

\ifshortversion
  \includecomment{shortonly}
  \excludecomment{longonly}
\else
  \includecomment{longonly}
  \excludecomment{shortonly}
\fi

\usepackage{amsthm}
\usepackage[capitalise]{cleveref}
\crefformat{footnote}{#2\footnotemark[#1]#3}
\newtheorem{theorem}{Theorem}
\newtheorem{lemma}[theorem]{Lemma}
\newtheorem{corollary}[theorem]{Corollary}
\newtheorem{observation}[theorem]{Observation}

\newtheorem{claim}{Claim}
\theoremstyle{definition}
\newtheorem{definition}[theorem]{Definition}

\newtheorem{example}[theorem]{Example}

\AddToHook{env/lemma/begin}{\crefalias{theorem}{lemma}}
\AddToHook{env/corollary/begin}{\crefalias{theorem}{corollary}}
\AddToHook{env/observation/begin}{\crefalias{theorem}{observation}}
\AddToHook{env/fact/begin}{\crefalias{theorem}{fact}}
\AddToHook{env/proposition/begin}{\crefalias{theorem}{proposition}}
\AddToHook{env/claim/begin}{\crefalias{theorem}{claim}}
\AddToHook{env/definition/begin}{\crefalias{theorem}{definition}}
\AddToHook{env/remark/begin}{\crefalias{theorem}{remark}}
\AddToHook{env/example/begin}{\crefalias{theorem}{example}}

\usepackage[obeyFinal]{todonotes}
\usepackage{hyperref}

\AtBeginDocument{%
  \makeatletter
  \gdef\xmp@Title{Repair Semantics for EL-bot}%
  \makeatother
}
\begin{document}



\title{
The More the Merrier: Combining Properties for ABox Abduction under
Repair Semantics in ELbot
}[The More the Merrier: Combining Properties for ABox Abduction under Repair Semantics in $\mathcal{EL}_\bot$]


\author[1]{Anselm Haak}[%
orcid=0000-0003-1031-5922,
email=anselm.haak@uni-paderborn.de,
]
\address[1]{Knowledge Representation Group, Paderborn University, Germany}

\author[2]{Patrick Koopmann}[%
orcid=0000-0001-5999-2583,
email=p.k.koopmann@vu.nl,
url=https://pkoopmann.github.io,
]
\address[2]{Knowledge in Artificial Intelligence, 
Vrije Universiteit Amsterdam, 
The Netherlands}

\author[3]{Yasir Mahmood}[%
orcid=0000-0002-5651-5391,
email=yasir.mahmood@uni-paderborn.de,
]
\address[3]{Data Science Group, Paderborn University, Germany}

\author[1]{Anni-Yasmin Turhan}[%
orcid=0000-0001-6336-335X,
email=turhan@uni-paderborn.de
]

\begin{abstract}
  Abduction is a central approach to explain missing entailments from a knowledge base by providing a hypothesis, that would, if added to the knowledge base, make the missing entailment become true.
  Abduction under repair semantics has recently been investigated in detail, where several desirable properties and optimality criteria were considered, such as signature-restrictions and minimality in size and of introduced conflicts. Naturally, hypotheses that satisfy more than one of these properties or combine a property with an optimality criterion would be even more desirable for applications.
  So far, such hypotheses have not been investigated in the literature. 
  In the present paper, we consider the ABox abduction problem for hypotheses satisfying more than one property or additional optimality criteria, for \ELbot under brave and AR semantics. Our main observation is that often requiring additional properties for hypotheses does not lead to an increase of complexity.
\end{abstract}

\begin{keywords}
  Abduction \sep
  Repair semantics  \sep
  Inconsistency-tolerant reasoning
\end{keywords}

\maketitle

\section{Introduction}

Explainability remains a core feature of knowledge-based AI systems, including
those using description logic ontologies. Developing methods that can also
practically explain decisions made by such systems is crucial in order to
exploit this feature. While there is a lot of work on
explaining reasoning results under classical
semantics~\cite{SchlobachC03,AlrabbaaBFHKKKK24,Koopmann25}, this problem is less
investigated for non-classical semantics. In this paper, we consider
\emph{repair-based semantics}, which are a family of well-understood semantics that allow
to define meaningful entailments from inconsistent knowledge
bases~\cite{BienvenuB16}. As the name suggests, these semantics are based on
\emph{repairs}, which are maximal consistent subsets of the ABox. Depending on
the chosen
semantics, we may then consider those entailments that hold in at least one of
the repairs (\emph{brave} semantics), or only those entailments that hold in
all of the repairs (\emph{AR} semantics).

Since inconsistencies can easily
occur in realistic data, inconsistency-tolerant semantics are
often indispensable. Unfortunately, they also make it more
challenging to understand entailments from a user perspective. To explain why an
assertion is entailed,
we may need to provide explanations for the different repairs at the same time.
For explaining \emph{missing} entailments, the situation is even more complex:
Indeed, under repair-based semantics, two things can contribute to the
non-entailment of an assertion $\alpha$:
1)~some other assertions \enquote{block} the entailment
of $\alpha$, because they create a conflict with required assertions, or
2)~assertions that are needed for the entailment are missing. While
explanations based on Situation~1 have been investigated in detail~\cite{bienvenu2019computing,lukasiewicz2022explanations}, we focus here
on explanations based on Situation~2.
Interestingly, no one so far considered explanations that combine Situations~1) and~2), i.e., where assertions can both be added and removed.
We prefer to first fully understand abductive reasoning under repair-based semantics, and investigate more advanced notions of explanations in
the future.
Focusing on Situation~2, this corresponds to explaining missing
entailments using abductive reasoning, which is also a typical means of
explaining missing entailments under classical
semantics~\cite{Elsenbroich2006,Klarman2011,Calvanese2013,du2015towards,%
AlrabbaaBFHKKKK24,Ceylan2020,Koopmann21a}.
In particular, given a knowledge base $\Kmc$ and
an assertion $\alpha$, \emph{ABox abduction} is concerned with finding a set of
assertions $\Hmc$, called \emph{hypothesis}, s.t. $\Kmc\cup\Hmc\models\alpha$,
where usually additional properties are required for $\Hmc$. ABox abduction is
not only useful for explaining missing entailments, it can also be used for
diagnosis~\cite{Elsenbroich2006,ObeidOMO19,Ceylan2020,Koopmann21a} and for
repairing incomplete knowledge
bases~\cite{WeiKleinerDragisicLambrix2014,du2017practical,Haifani2022}.
If we replace the classical entailment relation with entailment under
repair-based entailment, we obtain the notion of ABox abduction under
repair-based semantics.

\newcommand{\dl}[1]{\textit{#1}}

\newcommand{\exA}{\dl{High}}
\newcommand{\exB}{\dl{Low}}
\newcommand{\exC}{\dl{GlycemicCrisis}}
\newcommand{\exs}{\dl{glucoseLevel}}
\newcommand{\exr}{\dl{experiences}}
\newcommand{\exE}{\dl{OverdosedInsulin}}
\newcommand{\exF}{\dl{Ketoacidosis}}
\newcommand{\exO}{\dl{DiabeticComa}}
\newcommand{\exa}{\dl{patient}}
\newcommand{\exb}{\dl{l}}
\newcommand{\exc}{c}
\begin{example}
 Consider the following excerpt from a medical ontology $\Tmc$ on diabetes:
 \begin{align*}
  &\exA\sqcap\exB\sqsubseteq \bot \qquad
  \exists\exs.\exA\sqcap\exE\sqsubseteq \exO \\
  &\exists\exs.\exA\sqsubseteq \exC \qquad\,
  \exists\exs.\exB\sqsubseteq \exC \qquad\\
  &\exC\sqcap\exF\sqsubseteq \exO
 \end{align*}
 Our ABox $\Amc$ contains information about a patient obtained through a glucose
 monitor and a finger stick sensor:
  \[
  \exs(\exa,\exb)\qquad \exA(\exb)\qquad \exB(\exb)
 \]
 Because the two sensors provided contradictory information, the knowledge base
 is inconsistent.
 We observe that the patient passed out and want to use abduction to obtain
 a possible diagnosis. Since our data is contradictory, we apply repair-based
 semantics. Both $\{\ \exE(\exa)\ \}$ and $\{\ \exF(\exa)\ \}$ are hypotheses
 under Brave semantics, while only $\{\ \exF(\exa)\ \}$ is a hypothesis under AR semantics.
 Indeed, the cautious doctor should investigate this hypothesis first.
\end{example}

\begin{figure}[h]
	\centering
	\begin{minipage}{0.35\textwidth}
		\centering
		\scalebox{0.7}{%
			\newcommand{\bravecolor}[1]{\textcolor{red}{#1}}
\newcommand{\arcolor}[1]{\textcolor{blue}{#1}}

%
\begin{tikzpicture}[font=\large]
	\tikzset{
		prop/.style= {draw, rounded corners, thin},
		brave/.style={text=red},
		ar/.style={text=blue}, 
		every path/.style={very thick},
		comb/.style={midway, draw=gray, fill=gray!30, rounded corners, inner sep=2pt}
	}
	\node[prop, label={above:\bravecolor{\SigmaP}/\arcolor{\co\NP}}]%
	(cc) at (150:6) {conflict-confining};
	\node[prop, label={below:\bravecolor{\NP}/\arcolor{\SigmaP}}]%
	(sig) at (-150:6) {$\Sigma$-restricted};
	\node[prop, label={above:\bravecolor{\NP}/\arcolor{\SigmaP}}]%
	(nt) at (30:0) {non-trivial};
%
%
	\draw (cc) to[bend right=20] node[comb] {\bravecolor{\SigmaP}/\arcolor{\SigmaP}} (nt);

	\draw (sig) to[bend left=20] node[comb] {\bravecolor{\NP}/\arcolor{\SigmaP}} (nt);

	\draw (cc) to[bend left=20] node[comb] {\bravecolor{\SigmaP}/\arcolor{\SigmaP}} (sig);

\end{tikzpicture}
%

		}
	\end{minipage}
\hspace{.5cm}
	\begin{minipage}{0.57\textwidth}
		\centering
		\scalebox{0.7}{%
			\newcommand{\bravecolor}[1]{\textcolor{red}{#1}}
\newcommand{\arcolor}[1]{\textcolor{blue}{#1}}

%
\begin{tikzpicture}[font=\large]
	\tikzset{
		prop/.style= {draw, rounded corners, thin},
		brave/.style={text=red},
		ar/.style={text=blue}, 
		every path/.style={very thick},
		comb/.style={midway, draw=gray, fill=gray!30, rounded corners, inner sep=2pt}
	}
	\node[prop, label={above:\bravecolor{\DP}/\arcolor{\co\NP}}]%
	(cc) at (150:6) {conflict-confining};
	\node[prop, label={above:\bravecolor{\DP}/\arcolor{\PiP}}]%
	(subset) at (30:6) {$\subseteq$-minimal};
	\node[prop, label={above:\bravecolor{\NP}/\arcolor{\co\NP}}]%
	(nt) at (-90:0) {non-trivial or $\Sigma$-restricted};
	\node[prop, label={below:\bravecolor{\NP}/\arcolor{\co\NP}}]%
	(card) at (-150:6) {$\leq$-minimal};
	\node[prop, label={below:\bravecolor{\PiP}/\arcolor{\co\NP}}]%
	(conf) at (-30:6) {$\subseteq_c$-minimal};
%
%
	\draw (cc) to[bend right=15] node[comb] {\bravecolor{\DP}/\arcolor{\co\NP} } (nt);

	\draw (subset) to[bend right=15] node[comb] {\bravecolor{\DP}/\arcolor{\PiP}} (nt);

	\draw (subset) to[bend right=10] node[comb] {\bravecolor{\DP}/\arcolor{in $\PiP, \co\NP$-h}} (cc);

	\draw (card) to[bend right=15] node[comb] {\bravecolor{\DP}/\arcolor{\PiP}} (nt);
	\draw (card) to[bend right=10] node[comb] {\bravecolor{\PiP}/\arcolor{\co\NP}} (cc);

	\draw (conf) to[bend right=15] node[comb] {\bravecolor{\PiP}/\arcolor{\PiP}} (nt);

\end{tikzpicture}
%

		}
	\end{minipage}
	\caption{Complexity results for Existence (Left) and Verification (Right) problem under combination of properties. Each node contains a property and its complexity for brave (red) and AR (blue) semantics, highlighting our earlier results~\cite{HaakKMT2026}. The edges represent the complexity when combining properties at the two end points, depicting our novel contributions.}\label{fig:results}
\end{figure}

We recently proposed and investigated a range of desirable properties for
abductive hypotheses under repair-based semantics. Some of these have been
considered also under classical semantics, such as \emph{non-triviality} (the
hypothesis should not contain the observation)~\cite{Elsenbroich2006},
\emph{subset} and \emph{cardinality} minimality~\cite{Calvanese2013}, and
restrictions on the concept and role names to be used
(\emph{signature-restricted hypotheses})~\cite{Koopmann21a}. On the other hand,
the typical requirement of consistency of the hypothesis with the knowledge
base, which is also needed to avoid far-fetched explanations, makes less sense
in an inconsistency-tolerant setting, where the knowledge base may already be
inconsistent from the start. Instead, we may require the hypothesis to be
\emph{conflict-confining}~\cite{du2015towards}, which means that
it does not introduce any new conflicts, or at least
\emph{conflict-minimal}, which means it does not introduce more conflicts than
necessary. In a recent conference publication~\cite{HaakKMT2026}, complementing
the research by \citeauthor{du2015towards}~\cite{du2015towards}
that focused on conflict-confining
hypotheses under IAR semantics, we analysed the computational complexity of
abduction for each of these properties, considering brave and AR semantics for
\DLLite and \ELbot. In particular, we looked at the verification problem (is a
given set of assertions a hypothesis satisfying the requirement), and the existence
problem (is there any such hypothesis). An overview of our earlier
results~\cite{HaakKMT2026} is shown in \Cref{fig:results}. This
investigation considered the different properties in isolation, and left open
the question of what happens if one desires more than
one property to hold, for instance, if the hypothesis should be both
non-trivial and conflict-confining, or subset-minimal among all the
signature-restricted hypothesis. These and other combinations are very natural,
which is why we have a closer look at them in this paper. This time, we focus on
\ELbot. Specifically, we again investigate the theoretical complexity
of the verification and the existence problem for hypotheses that satisfy two
of the properties in \Cref{fig:results}, and that combine one property
with one of the minimality requirements.

\section{Preliminaries}

For a general introduction to description logics, we refer the reader to \citeauthor{DBLP:books/daglib/0041477} (\citeyear{DBLP:books/daglib/0041477}).
In this paper, \emph{ABox assertions} refer to \emph{concept assertions} of the form $A(a)$ and \emph{role assertions} of the form $r(a,b)$ for a concept name $A$, role name $r$ and individuals $a,b$.
In other words, we only use concept assertions without complex concepts, which are also known as \emph{flat} or \emph{simple} assertions.
We assume familiarity with computational complexity~\cite{DBLP:books/daglib/0092426}, in particular with the complexity classes
$\NL, \Ptime, \NP, \co\NP$, $\SigmaP$ and~$\PiP$, as well as
$\DP$, the class of decision problems representable as the intersection of a problem in $\NP$ and a problem in $\co\NP$.

\subsection{Repair Semantics}

If a knowledge base is inconsistent, repair semantics can \enquote{restore} consistent versions and admit meaningful reasoning again.
We focus here on ABox repairs and define these as well as two common kinds of repair semantics next.
We will often use the following notions of consistency and inconsistency w.r.t.\ a TBox \calT.
An ABox \calA is \emph{\calT-consistent}, if $\tup{\calT,\calA}\not\models\bot$, and \emph{\calT-inconsistent} otherwise.
Further, we say that \calA is a \emph{\calT-support} of a concept assertion $\alpha$, if $\tup{\calT, \calA} \models \alpha$.

Let $\calK = \tup{\calT, \calA}$ be an inconsistent knowledge base and \query be a Boolean (conjunctive) query.
A \emph{repair} of $\calK$ is a \calT-consistent subset $\calR \subseteq \calA$ and subset-maximal with this property, i.e., there is no \calT-consistent subset $\calR' \subseteq \calA$ that is a strict superset of \calR.
The somewhat dual notion is a \emph{conflict} or \emph{conflict set} \conflict, which is a \calT-inconsistent subset of the ABox and subset-minimal with this property.
We denote by $\Conf(\calK)$ the set of conflicts of \calK.
We recall entailment under \brave~\cite{BienvenuR13} and \ar semantics~\cite{LLRRS-JWS-15}:
\begin{itemize}
	\item $\calK\models_{\brave} \query$ iff there is a repair $\calR$ of $\calK$ s.t.\ $\tup{\calT,\calR} \models \query$.
	\item $\calK\models_{\ar} \query$ iff $\tup{\calT,\calR}\models \query$ for every repair $\calR$ of $\calK$.
\end{itemize}
The complexity of query entailment under repair semantics is well understood~\cite{BienvenuB16}:
For \ELbot, checking entailment of a concept assertion under \brave semantics is \NP-complete in combined complexity, while it is \coNP-complete under \ar semantics.

\subsection{ABox Abduction for Inconsistent KBs}
\label{sec:abd}

We now introduce the necessary notions for ABox abduction under repair semantics,
as introduced in~\cite{HaakKMT2026}.
We begin with the central definitions of abduction problems and hypotheses in this setting.
\begin{definition}\label{def:hypotheses}
  Let $\calK = \tup{\calT, \calA}$ be an inconsistent KB, $\alpha$ a concept assertion
  (called an \emph{observation}) and $\calS \in \{\brave, \ar\}$ such that $\calK \not\models_\calS \alpha$.
  Then, the pair $\tup{\calK, \alpha}$ is called an \emph{$\calS$-abduction problem}.
  A solution for such a problem, called \emph{\calS-hypothesis}, is an ABox \Hyp using only individuals occurring in  \calK and $\alpha$ s.t.\ $\tup{\calT, \calA \cup \Hyp} \models_\calS \alpha$.
  
  Additionally, for a set $\Sigma$ (signature) containing individual, concept and role names, we define the triple $\tup{\calK, \alpha, \Sigma}$ to be a \emph{\SignatureRestricted \calS-abduction problem}.
  A solution for such a problem is an ABox \Hyp using only symbols from $\Sigma$ that is an \calS-hypothesis for $\tup{\calK, \alpha}$.
\end{definition}
Note also that we do not admit fresh individuals beyond the specified signature as
it is done for example in \cite{Koopmann21a}.
Abduction with fresh individuals easily becomes \ExpTime-hard in \ELbot~\cite{Koopmann21a}.
Because there are at most exponentially many repairs, the complexity is in this case dominated by the complexity of (classical) abduction, so such a setting might make
our complexity analysis less insightful.

To obtain hypotheses that are meaningful for explanation purposes, a number of additional properties and minimality criteria for hypotheses that yield \emph{preferred hypotheses} have been introduced.
For two sets $S_1, S_2$ we may abbreviate $\lvert S_1\rvert \leq \lvert S_2\rvert$ by $S_1 \leq S_2$.
We next provide the definition of these properties and notions of minimality.
\begin{definition}\label{def:properties}
  Let ${\calS} \in {\{\brave,\ar\}}$, $\tup{\calK, \alpha}$ an \calS-abduction problem, and 
  ${\preceq} \in {\{\subseteq, \leq\}}$.
  An ABox \Hyp is
  \begin{enumerate}
    \item\label{def:conf-conf} \emph{conflict-confining for $\calK = \tup{\calT, \calA}$}, provided that
    \mbox{$\Conf(\tup{\calT, \calA \cup \Hyp}) = \Conf(\calK)$}.
  \end{enumerate}
  If \Hyp is an \calS-hypothesis for $\tup{\calK, \alpha}$, we call it
  \begin{enumerate}
    \setcounter{enumi}{1}
    \item\label{def:non-triv} \emph{non-trivial}, if $\alpha \not\in \Hyp$,
    \item\label{def:hyp-minimal} \emph{$\preceq$-minimal}, if there is no $\calS$-hypothesis $\Hyp'$ for $\tup{\calK, \alpha}$ s.t.\ $\Hyp' \prec \Hyp$, and
    \item\label{def:conf-minimal}  \emph{$\preceq_c$-minimal}, if there is no \calS-hypothesis $\Hyp'$ for $\tup{\calK, \alpha}$ s.t.\ $\Conf(\tup{\calT, \calA \cup \Hyp'}) \prec \Conf(\tup{\calT, \calA \cup \Hyp})$.
  \end{enumerate}
\end{definition}
Conflict-confinement can be equivalently defined by requiring that
$\tup{\calT, \calR \cup \Hyp} \not\models\bot$ for every repair~\calR of~\calK.
Note that hypotheses introducing new conflicts can be desirable, as erroneous facts might need to implicitly be replaced by new facts, leading to conflicts when considering both together. 

We use the terms \emph{subset-minimal} for $\subseteq$-minimal and \emph{cardinality-minimal} for $\leq$-minimal.
Further, we also consider minimal variants of hypotheses with additional properties:
For example, a $\subseteq$-minimal conflict-confining \ar-hypothesis need only be $\subseteq$-minimal among all conflict-confining \ar-hypotheses.

We investigate the following reasoning problems for a given (\SignatureRestricted) $\Sem$-abduction problem.
\begin{definition}[Reasoning Problems]\label{def:problems}\ 
	Given a (\SignatureRestricted) \calS-abduction problem
	$\mathfrak{A}=\tup{\calK, \alpha}$ \mbox{($\mathfrak{A}=\tup{\calK, \alpha, \Sigma}$)},
	\begin{enumerate}[topsep=-\parskip+\lineskip]
		\item the \emph{existence problem} asks whether $\mathfrak{A}$ has a solution, and
		\item the \emph{verification problem} asks whether a given ABox \calH is a hypothesis for
		$\mathfrak{A}$ (over $\Sigma$).
	\end{enumerate}
\end{definition}

We restate the following two results that were established in~\cite{HaakKMT2026}.

\begin{lemma}\label{lem:cc-compl}
	Given a KB \calK and an ABox \calH, checking whether \calH is conflict-confining for \calK is \coNP-complete.
\end{lemma}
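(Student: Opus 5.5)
We prove the two directions separately, working in \ELbot with flat assertions throughout.

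\emph{Membership in \coNP.} We use the equivalent characterisation stated after \Cref{def:properties}: \calH is conflict-confining for $\calK=\tup{\calT,\calA}$ iff $\tup{\calT,\calR\cup\calH}\not\models\bot$ for every repair \calR of \calK. We show the complement is in \NP, i.e., that \calH is \emph{not} conflict-confining. Since $\Conf(\calK)\subseteq\Conf(\tup{\calT,\calA\cup\calH})$ always holds (a minimal inconsistent subset of \calA stays minimal in $\calA\cup\calH$), non-confinement means there is a conflict $C\subseteq\calA\cup\calH$ with $C\cap(\calH\setminus\calA)\neq\emptyset$.

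The \NP procedure guesses a subset $\calB\subseteq\calA$ and checks two things in polynomial time, using classical \ELbot reasoning, which is polynomial:
\begin{itemize}
\item $\calB$ is \calT-consistent;
\item $\calB\cup\calH$ is \calT-inconsistent.
\end{itemize}
Such a \calB exists iff some repair $\calR\supseteq\calB$ has $\calR\cup\calH$ inconsistent, by monotonicity. Conversely, any repair witnessing non-confinement is itself such a \calB. No maximality check is needed, so the guess-and-check runs in polynomial time.

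\emph{\coNP-hardness.} We reduce from a \coNP-hard repair problem for \ELbot: AR entailment of a concept assertion, which is \coNP-complete by~\cite{BienvenuB16}. More directly, we reduce the complement of the \NP-complete brave-entailment problem. Given $\calK=\tup{\calT,\calA}$ and an assertion $A(a)$, pick a fresh concept name $B$ and set $\calT'=\calT\cup\{A\sqcap B\sqsubseteq\bot\}$ and $\calH=\{B(a)\}$.

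Since $B$ is fresh, $\calT'$ and \calT agree on subsets of \calA: the repairs and conflicts of $\tup{\calT',\calA}$ are exactly those of \calK. For a repair \calR, the set $\calR\cup\{B(a)\}$ is $\calT'$-inconsistent iff $\tup{\calT,\calR}\models A(a)$. Hence \calH is conflict-confining for $\tup{\calT',\calA}$ iff $\calK\not\models_\brave A(a)$, which gives \coNP-hardness.

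\emph{Anticipated obstacles.} The freshness argument is the point that needs care: we must check that adding $A\sqcap B\sqsubseteq\bot$ creates no new consistency consequences without $B$. This holds because any model can interpret $B$ as empty.

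A second subtlety concerns inconsistency of \calK. If abduction problems are required to have an inconsistent \calK, we make \calK inconsistent by adding a disjoint clash $D(c),E(c)$ with $D\sqcap E\sqsubseteq\bot$, using fresh names $D,E,c$. This clash is isolated from $A(a)$ and does not affect the equivalence above.
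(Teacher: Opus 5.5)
Your membership step fails as stated. It rests on the characterisation ``$\calH$ is conflict-confining iff $\tup{\calT,\calR\cup\calH}\not\models\bot$ for every repair $\calR$ of $\calK$''. That characterisation is valid only when $\calH\cap\calA=\emptyset$; the remark after \cref{def:properties} that you cite tacitly assumes this too. The lemma, however, concerns an arbitrary ABox $\calH$. Take $\calT=\{P\sqcap Q\sqsubseteq\bot\}$, $\calA=\{P(a),Q(a)\}$ and $\calH=\{Q(a)\}$. Since $\calA\cup\calH=\calA$, no conflict is added and $\calH$ is conflict-confining. Yet $\calB=\{P(a)\}$ is $\calT$-consistent and $\calB\cup\calH$ is $\calT$-inconsistent, so your procedure certifies non-confinement. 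The problem is that $\calB\cup\calH$ can be inconsistent merely because assertions of $\calH$ that already lie in $\calA$ re-create an \emph{old} conflict.

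The fix is one line, and it is already contained in your correct preliminary observation that every new conflict meets $\calH\setminus\calA$. Require $\calB\cup(\calH\setminus\calA)$, rather than $\calB\cup\calH$, to be $\calT$-inconsistent. If $\calC$ is a new conflict, then $\calB=\calC\cap\calA$ is a proper subset of $\calC$, hence consistent, and $\calB\cup(\calH\setminus\calA)\supseteq\calC$. Conversely, a minimal inconsistent subset of $\calB\cup(\calH\setminus\calA)$ is not contained in the consistent $\calB$, so it is a conflict of $\tup{\calT,\calA\cup\calH}$ not contained in $\calA$.

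The paper only restates this lemma from earlier work, but it argues membership in the proof of \cref{thm:cc-nt-verification-el} differently. It guesses the new conflict $\calC$ itself and checks $\calC\not\subseteq\calA$, inconsistency, and minimality via $|\calC|$ consistency tests. That certificate is immune to overlap with $\calA$, whereas yours avoids the minimality test.

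Your hardness reduction is the same as the confinement half of the paper's reduction in \cref{thm:elbot-verif-brave-cc}: a fresh concept made disjoint from the queried one. It is correct, because there $\calH$ is disjoint from $\calA$. You should spell out the non-entailment direction of ``$\calR\cup\{B(a)\}$ is $\calT'$-inconsistent iff $\tup{\calT,\calR}\models A(a)$'': take a model $\mathcal{I}$ of $\tup{\calT,\calR}$ with $a^{\mathcal{I}}\notin A^{\mathcal{I}}$ and set $B^{\mathcal{I}}=\{a^{\mathcal{I}}\}$. Forcing $\calK$ to be inconsistent is harmless but unnecessary, since the lemma does not require it.
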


\begin{lemma}\label{lem:ar-triv}
  Let $\tup{\calK, \alpha}$ be an \ar-abduction problem. The following are equivalent:
  \begin{enumerate*}
    \item there is an \ar-hypothesis for $\tup{\calK, \alpha}$,
    \item $\{\alpha\}$ is a conflict-confining \ar-hypothesis for $\tup{\calK, \alpha}$.
  \end{enumerate*} 
\end{lemma}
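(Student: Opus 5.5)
Since (2) implies (1) trivially, I only need to show (1) implies (2). So I will assume some \ar-hypothesis \Hyp for $\tup{\calK, \alpha}$ exists, write $\alpha = A(a)$, and establish two facts: that $\{\alpha\}$ is an \ar-hypothesis, and that it is conflict-confining for \calK. To handle the repairs of $\tup{\calT, \calA \cup \{\alpha\}}$, I will use the following characterization. Every repair $\calR'$ of the extended KB is either a repair of \calK, when $\alpha \notin \calR'$ and $\alpha \notin \calA$, or has the form $\calR \cup \{\alpha\}$ for some $\calT$-consistent $\calR \subseteq \calA$.

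The main step is to show that $\alpha$ is consistent with every repair \calR of \calK, meaning $\tup{\calT, \calR \cup \{\alpha\}} \not\models \bot$; this gives conflict-confinement directly, by the equivalent formulation stated after \Cref{def:properties}. I will argue by contraposition. Suppose some repair \calR of \calK has $\calR \cup \{\alpha\}$ $\calT$-inconsistent. I will then show that for every ABox \Hyp there is a repair of $\tup{\calT, \calA \cup \Hyp}$ that does not entail $\alpha$, which contradicts the existence of \Hyp. The candidate repair is obtained by extending \calR inside $\calA \cup \Hyp$ to a maximal consistent set $\calR'$. Because $\calR$ is maximal consistent in \calA, we have $\calR' \cap \calA = \calR$. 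Now, if $\tup{\calT, \calR'} \models \alpha$, then $\calR' \cup \{\alpha\}$ is consistent, since adding an entailed assertion to a consistent KB preserves consistency. But $\calR \cup \{\alpha\} \subseteq \calR' \cup \{\alpha\}$ is inconsistent, which is a contradiction. Hence $\tup{\calT, \calR'} \not\models \alpha$, and so $\tup{\calT, \calA \cup \Hyp} \not\models_\ar \alpha$.

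Conflict-confinement then follows. Any conflict of $\calA \cup \{\alpha\}$ that contains $\alpha$ would have the form $C \cup \{\alpha\}$ with $C \subseteq \calA$ consistent. Such a $C$ extends to a repair \calR, and $\calR \cup \{\alpha\}$ would then be inconsistent, contradicting the main step. So no new conflicts arise, and $\Conf$ is unchanged.

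Finally, I will check that $\{\alpha\}$ is an \ar-hypothesis. Because no conflicts involve $\alpha$, the repairs of the extended KB are exactly the sets $\calR \cup \{\alpha\}$ with \calR a repair of \calK. Each of these trivially entails $\alpha$. I expect the main obstacle to be the extension argument in the main step, in particular verifying that $\calR' \cap \calA = \calR$ and that no fresh individuals cause trouble; the other steps are routine.
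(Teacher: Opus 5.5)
Your proof is correct. This paper does not prove the lemma; it restates it from the authors' earlier work, so there is no in-paper argument to compare against. Your direct route is sound and self-contained: take a repair $\mathcal{R}$ of $\mathcal{K}$ with $\mathcal{R}\cup\{\alpha\}$ inconsistent, and extend it to a repair $\mathcal{R}'$ of $\langle\mathcal{T},\mathcal{A}\cup\mathcal{H}\rangle$. Then $\langle\mathcal{T},\mathcal{R}'\rangle\models\alpha$ would make $\mathcal{R}'\cup\{\alpha\}\supseteq\mathcal{R}\cup\{\alpha\}$ consistent, which is impossible. From this you get both the absence of conflicts containing $\alpha$ and that every repair of $\langle\mathcal{T},\mathcal{A}\cup\{\alpha\}\rangle$ contains $\alpha$.

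Three small remarks.

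First, the step you flag as the main obstacle is not actually needed. The argument only uses $\mathcal{R}\subseteq\mathcal{R}'$ and monotonicity; the equality $\mathcal{R}'\cap\mathcal{A}=\mathcal{R}$ is true but never used, and individuals coming from $\alpha$ play no role.

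Second, your opening characterization of the repairs of $\langle\mathcal{T},\mathcal{A}\cup\{\alpha\}\rangle$ misses the case $\alpha\in\mathcal{A}$, $\alpha\notin\mathcal{R}'$. You never use this characterization, though. The final step only needs the following: if a repair $\mathcal{R}'$ omitted $\alpha$, then $\mathcal{R}'\cup\{\alpha\}$ would be inconsistent and hence contain a conflict containing $\alpha$, which you have already excluded.

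Third, when $\alpha\in\mathcal{A}$, your main step shows that every repair of $\mathcal{K}$ contains $\alpha$. This contradicts $\mathcal{K}\not\models_{\mathsf{AR}}\alpha$, so (1) cannot hold and the lemma is vacuous in that case.
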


\section{Complexity Results for Combinations of Multiple Properties}
We will now establish the complexity of existence and verification of hypotheses for combinations of two of the properties of being \SignatureRestricted, non-trivial and conflict-confining, as well as of one of these properties together with one of the minimality criteria.
Note that the combination of conflict-confining with one of the minimalities regarding the set of conflicts does not make sense, as the set of conflicts for a conflict-confining hypothesis is always minimal.

Regarding the combination of \SignatureRestriction with other properties or minimality criteria, we immediately obtain the following general observation:
any hardness result for the individual properties directly carries over to the case were \SignatureRestriction is required additionally.
To see this, observe that one can always set $\Sigma$ to the (full) signature of the given abduction problem, ensuring that $\Sigma$ does not actually restrict the solutions.
\begin{observation}\label{obs:sigma}
	Let $P$ be a property of hypotheses considered in this paper.
	Then, the existence (resp., verification) of $\calS$-hypotheses with property $P$ can be reduced in polynomial time to the existence (resp., verification) of $\Sigma$-restricted $\calS$-hypotheses with property $P$ for $\calS\in\{\brave,\ar\}$.
\end{observation}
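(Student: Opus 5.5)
The reduction is the identity on the abduction problem, paired with a signature that does not restrict anything. Given an $\calS$-abduction problem $\mathfrak{A}=\tup{\calK,\alpha}$ with $\calK=\tup{\calT,\calA}$ (and, for verification, a candidate ABox $\Hyp$), I would output the \SignatureRestricted problem $\mathfrak{A}'=\tup{\calK,\alpha,\Sigma}$. Here $\Sigma$ is the set of all individual, concept and role names occurring in $\calK$ and $\alpha$. For verification I would also add the names occurring in $\Hyp$, and pass $\Hyp$ through unchanged. Computing $\Sigma$ is a single scan of the input, so the map is polynomial (indeed logspace).

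Correctness splits into two cases. First suppose $P$ does not refer to other hypotheses, i.e.\ $P$ is conflict-confinement or non-triviality. Then $\Hyp$ is an $\calS$-hypothesis with $P$ for $\mathfrak{A}$ if and only if it is one for $\mathfrak{A}'$ and uses only symbols from $\Sigma$. For verification this holds by the choice of $\Sigma$.

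For existence one must rule out hypotheses that use concept or role names outside $\Sigma$. I would argue that any such $\Hyp$ can be replaced by $\Hyp\cap\Sigma$-assertions, i.e.\ by dropping every assertion containing a name not in $\calK$ or $\alpha$. This relies on the following observation: a fresh concept or role name does not occur in $\calT$, so an assertion mentioning it can take part neither in deriving $\alpha$ nor in any conflict. The observation follows from a standard model-modification argument. Given a model of $\tup{\calT,\calR\cup\Hyp'}$, where $\Hyp'$ is the restricted hypothesis, reinterpret the fresh names as needed to satisfy the dropped assertions. This does not affect satisfaction of $\calT$ or of $\alpha$.

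Consequently, the repairs and conflicts change only by the assertions involving fresh names, which are never in conflicts. Brave and AR entailment of $\alpha$ are preserved, as are conflict-confinement and non-triviality. The individuals are already restricted to those of $\calK$ and $\alpha$ by \Cref{def:hypotheses}.

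The main obstacle is the minimality properties ($\subseteq$, $\leq$, $\subseteq_c$, $\leq_c$), because there the competing hypotheses $\Hyp'$ range over a possibly different set. The required equivalence is: $\Hyp$ is minimal among all hypotheses with property $P$ if and only if it is minimal among the $\Sigma$-restricted ones.

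For verification I would include the names of $\Hyp$ in $\Sigma$. For the direction from unrestricted to restricted, every $\Sigma$-restricted competitor is also an unrestricted one. For the converse, suppose a competitor $\Hyp'$ with fresh names beats $\Hyp$. Then its restriction to $\Sigma$ is still a hypothesis with the relevant properties, by the argument above. It is a subset of $\Hyp'$, so it is also smaller in size, and it has the same conflict set. Hence it also beats $\Hyp$.

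For existence, minimal hypotheses exist whenever any hypothesis exists, since the candidate sets are finite up to fresh names and, after restriction, are finite. So existence reduces to the non-minimal case handled above.
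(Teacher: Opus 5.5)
Your proposal is correct and follows the same route as the paper, which simply sets $\Sigma$ to the full signature of the given problem so that it does not actually restrict the solutions. You also make explicit two points that the paper's one-line justification leaves implicit. First, \Cref{def:hypotheses} does not forbid concept or role names outside $\calK$ and $\alpha$, so you need, and supply, the argument that assertions over such names can be dropped without changing entailment, conflicts, or any of the minimality comparisons. Second, for verification the names occurring in $\Hyp$ must be included in $\Sigma$.
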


\subsection{Existence}
Here, we consider the existence problem for combinations of the three main properties of hypotheses.
Note that minimality criteria are not interesting for existence, as there is a minimal hypothesis with certain properties iff there is any such hypothesis.

First, we briefly motivate the combination of \SignatureRestriction and non-triviality.
One main motivation for using \SignatureRestriction can be to avoid trivial hypotheses.
However, a non-trivial hypothesis might still use all the symbols from the
observation, just in different combinations.
Consequently, we might lose relevant hypotheses if we use this trick.
The following example illustrates such a situation:
\begin{example}\label{ex:nt+sig}
	Let $\calK\dfn \tup{\calT,\calA}$ be a KB with $\calT\dfn \{A\sqcap B \subsum C, D\sqcap \exists r.C\subsum A\}$ and $\calA\dfn \{B(m),r(m,n)\}$.
	Moreover, let $\alpha\dfn C(m)$ be the observation and $\Sigma \coloneqq \{C,D,m,n\}$ be the signature.
	Here, the trivial hypothesis $\{C(m)\}$ is in fact also \SignatureRestricted.
	The hypothesis $\Hyp_1\dfn\{A(m)\}$ is non-trivial but outside our signature.
	Moreover, $\Hyp_2\dfn \{C(n),D(m)\}$	is non-trivial and uses both $C$ and $m$, that is, both the concept name and the individual from the observation.
\end{example}

Turning to the complexity results, the combination of signature restriction and non-triviality does not affect the overall complexity, as the individual cases exhibit the same computational behavior.

\begin{corollary}\label{cor:existence-nt+sig-el}
	The existence problem for non-trivial \SignatureRestricted $\calS$-hypotheses is
	\begin{enumerate*}[label=(\roman*)]
    \item $\NP$-complete, for $\calS=\brave$, and
	  \item $\SigmaP$-complete, for $\calS=\ar$.
  \end{enumerate*}
\end{corollary}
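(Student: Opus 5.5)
Since the statement is a corollary, the plan is to obtain both bounds from the single-property results of \cite{HaakKMT2026} in \Cref{fig:results} together with \Cref{obs:sigma}, so that no new reduction is needed.

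\emph{Hardness.} Existence of non-trivial $\calS$-hypotheses is $\NP$-hard for $\calS=\brave$ and $\SigmaP$-hard for $\calS=\ar$ (\Cref{fig:results}). By \Cref{obs:sigma}, with $P$ being non-triviality, this problem reduces in polynomial time to existence of non-trivial \SignatureRestricted $\calS$-hypotheses. Concretely, we set $\Sigma$ to the full signature of $\tup{\calK,\alpha}$. This places no real restriction, because hypotheses may anyway only use individuals from $\calK$ and $\alpha$, and any name outside $\Sigma$ is fresh. For $\calS=\ar$ one could also take the $\SigmaP$-hardness of \SignatureRestricted existence and restrict $\Sigma$ so that it excludes the concept name of $\alpha$. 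This is an alternative route, but the first one suffices.

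\emph{Membership.} I would first show that a hypothesis can be assumed to be polynomially small. The relevant assertions are flat assertions over $\Sigma$ and the individuals of $\calK$ and $\alpha$, and there are only polynomially many of these. Moreover, both brave and AR entailment are monotone in the added ABox as long as no new conflicts are introduced. Since this monotonicity may fail, I will simply guess an arbitrary set $\Hyp$ of such assertions rather than relying on it; the set is polynomial in size in any case.

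For $\calS=\brave$, the algorithm guesses $\Hyp$ with $\alpha\notin\Hyp$ and, in addition, a repair witness. The witness is a subset $\calR\subseteq\calA\cup\Hyp$ that is $\calT$-consistent with $\tup{\calT,\calR}\models\alpha$. Brave entailment only needs some consistent subset, because every consistent subset extends to a repair. Checking consistency and entailment in \ELbot takes polynomial time, so the whole procedure is in $\NP$.

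For $\calS=\ar$, the algorithm guesses $\Hyp$ and then verifies $\tup{\calT,\calA\cup\Hyp}\models_\ar\alpha$ with a $\coNP$ oracle, giving membership in $\SigmaP$. The checks for $\alpha\notin\Hyp$ and for $\Hyp$ being over $\Sigma$ are syntactic.

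The main point to be careful about is the requirement from \Cref{def:hypotheses} that the problem is only defined when $\calK\not\models_\calS\alpha$. The reduction of \Cref{obs:sigma} keeps $\calK$ and $\alpha$ unchanged, so this precondition is preserved.
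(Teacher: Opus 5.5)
Your proposal is correct and follows the paper's proof. Membership is by guess-and-check: your consistent-subset witness for \brave is equivalent to the paper's guessed repair, and the \ar case uses the same oracle call. Hardness is inherited from the non-trivial case via \Cref{obs:sigma}. The one thing to drop is the aside on shrinking $\Sigma$ to exclude the concept name of $\alpha$: it is not answer-preserving in general, since a hypothesis may need that concept name on another individual, but your proof does not rely on it.
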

\begin{proof}
	We recall the membership for the non-trivial case (\Cref{fig:results}). 
	For \brave semantics, existence of a non-trivial \brave-hypothesis for an instance $\tup{\tup{\calT, \calA}, \alpha}$
	can be checked by guessing a candidate hypothesis \Hyp over the signature of
	$\tup{\calK, \alpha}$, as well as a candidate repair
	\calR of $\tup{\calT, \calA \cup \Hyp}$ and
	verifying in polynomial time that \Hyp is non-trivial, \calR is a repair,
	and $\tup{\calT, \calR} \models \alpha$.
	For \ar semantics, we guess a hypothesis \Hyp as before, and verify that $\Hyp$ is non-trivial and $\tup{\calT, \calA \cup \Hyp} \models_\ar \alpha$.
	The latter is \ar-entailment, which can be handled by querying an \NP oracle.
	
	For the combination, note that both sketched approaches are guess-and-check algorithms, so we can simply include a check for \SignatureRestriction of \Hyp.
  Hence, we get the same complexity upper bounds as for the non-trivial setting.
	The hardness in both cases follows from the lower bound for existence of non-trivial hypotheses combined with \Cref{obs:sigma}.
\end{proof}

Interestingly, adding signature restriction, or non-triviality,
on top of conflict-confinement, yields two cases in which both semantics incur the same
``higher'' complexity.
\begin{restatable}{theorem}{ExistNTCC}\label{thm:elbot-exist-nt-cc}
	The existence problem for conflict-confining non-trivial $\calS$-hypotheses is $\SigmaP$-complete for $\calS\in\{\brave,\ar\}$.
	This also holds for the case of conflict-confining \SignatureRestricted \calS-hypotheses and the case of conflict-confining non-trivial \SignatureRestricted \calS-hypotheses.
\end{restatable}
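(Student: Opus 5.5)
Membership and hardness are handled separately. Both semantics and all three property combinations can share one upper-bound argument and essentially one hardness reduction.

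\emph{Membership.} We use a guess-and-check algorithm with an \NP oracle. First guess a candidate \Hyp over the signature of $\tup{\calK,\alpha}$ (restricted to $\Sigma$ where required). It has polynomial size, since only individuals of \calK and $\alpha$ and flat assertions are allowed. Non-triviality and \SignatureRestriction are checked directly in polynomial time. Conflict-confinement is in \coNP by \Cref{lem:cc-compl}, so one oracle call suffices. The entailment check $\tup{\calT,\calA\cup\Hyp}\models_\calS\alpha$ is in \NP for \brave and in \coNP for \ar, so one more oracle call handles it. Together this gives membership in $\SigmaP$ for all cases and both semantics.

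\emph{Hardness.} For \ar, hardness of the first claim comes from the known $\SigmaP$-hardness of non-trivial \ar-hypotheses (\Cref{fig:results}), together with a small modification showing that conflict-confinement can be imposed for free. The modification replaces the given KB by one in which every hypothesis is automatically conflict-confining. If the earlier reduction already produces such instances, no modification is needed.

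For \brave, we reduce from $\exists\forall$-QBF (validity of $\exists \vec x\,\forall\vec y\,\neg\varphi$, with $\varphi$ in 3CNF, or the dual form). The intended correspondence is as follows:
\begin{itemize}
\item The existential choice is encoded by the hypothesis, which contains $X_i(a)$ or $\bar X_i(a)$ for each $x_i$.
\item The observation is obtained through a simple chain of GCIs, so that \brave-entailment is easy. For example, a concept $\mathit{Ok}$ that each $x_i$-choice must support, with $\alpha=\mathit{Ok}(a)$ requiring all $x_i$ to be set. Making $\alpha$ non-derivable from a single trivial assertion is handled by non-triviality or by excluding $\mathit{Ok}$ from $\Sigma$.
\item The universal part is encoded in the conflict-confinement condition. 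Since it is equivalent to $\tup{\calT,\calR\cup\Hyp}\not\models\bot$ for every repair \calR of \calK, we let \calA contain pairwise conflicting assertions $Y_j(b),\bar Y_j(b)$, so that repairs correspond to assignments of $\vec y$.
\item TBox axioms derive $\bot$ exactly when a clause of $\varphi$ is made true by the combined assignment, e.g.\ via concepts $C_k$ and $\exists$-links joining $a$ and $b$ (a role assertion $r(a,b)$ in \calA).
\item Axioms $X_i\sqcap\bar X_i\sqsubseteq\bot$ prevent choosing both values, since doing so would create a new conflict.
\end{itemize}
Then a conflict-confining hypothesis exists iff some $\vec x$ makes $\varphi$ false under every $\vec y$. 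For signature restriction, $\Sigma$ should exclude $\mathit{Ok}$ and the conflict-relevant auxiliary symbols, which simultaneously blocks the trivial hypothesis. This yields the \SignatureRestricted and combined variants with the same construction.

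\emph{Main obstacle.} The main obstacle is making this single reduction correct. The clause-checking axioms must only produce \emph{new} conflicts involving \Hyp, never conflicts inside \calA alone beyond the intended $Y_j/\bar Y_j$ ones. In addition, hypotheses with extra assertions (e.g.\ directly asserting clause concepts) must either be excluded by $\Sigma$ or be harmless. For the non-trivial-only case without $\Sigma$, such extra assertions must be shown to only add conflicts. We would first try encoding each clause $\ell_1\vee\ell_2\vee\ell_3$ by axioms $L_1\sqcap\exists r.(L_2\sqcap L_3)\sqsubseteq\bot$-style conjunctions over the literal concepts, ensuring $\bot$ arises only with at least one $X$-literal from \Hyp. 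Clauses containing only $y$-literals would be removed beforehand by a standard preprocessing step.
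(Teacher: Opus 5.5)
Your membership argument is the paper's and is fine, but the \brave{} hardness reduction does not work as designed. You let $\bot$ be produced clause by clause (``exactly when a clause of $\varphi$ is made true'', concretely one axiom of the form $L_1\sqcap\exists r.(L_2\sqcap L_3)\sqsubseteq\bot$ per clause). Conflict-confinement then says that for every repair, i.e.\ every $\vec y$, no clause body fires. Since the universal quantifier distributes over this conjunction, this holds iff for each body the chosen $\vec x$ falsifies one of its $\vec x$-literals, because a consistent $\vec y$-part is always realised by some repair. So your instances encode the \NP{} condition ``some $\vec x$ falsifies an $\vec x$-literal of every body'', which is not the $\exists\forall$-QBF you start from. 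If the bodies are the terms of a DNF $\varphi$, then $\exists\vec x\forall\vec y\,\neg\varphi$ is itself in \NP. If $\varphi$ is a CNF (the hard form), $\bot$ must fire only when \emph{all} clauses are satisfied at once, which one axiom per clause cannot express. The fix is conjunctive aggregation: derive $A_c$ from each literal of $c$, combine all $A_c$ (with the concepts certifying that every $x_i$ is set) into $A_\varphi$, and use $A_\varphi\sqcap A_{\bar\varphi}\sqsubseteq\bot$ with $A_{\bar\varphi}$ in the ABox. The obstacle you flag for the plain non-trivial case is also real and left open. Since $\mathcal{EL}_\bot$ has no disjunction, ``each $x_i$ is set'' needs intermediate concepts $S_i$ with $X_i\sqsubseteq S_i$ and $\bar X_i\sqsubseteq S_i$. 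Then $\{S_1(a),\dots,S_n(a)\}$ is a non-trivial hypothesis for $\mathit{Ok}(a)$ that bypasses the $\vec x$-choice and, in your design, creates no new conflict. The paper handles this with the trick of \Cref{ex:brave-cc}. Every derivation is guarded by $C_d$, every auxiliary concept (including the observation concept) is disjoint with $B_d$, and $\{B_d(m),C_d(m)\}$ is already a conflict of $\calA$. Asserting an auxiliary concept then creates a new conflict, while deriving it only yields non-minimal inconsistent sets. In fact the paper needs no new reduction for \brave{} at all. Conflict-confining \brave-existence is already \SigmaP-hard (\Cref{fig:results}), and in that reduction $\{C(m)\}$ creates the new conflict $\{C(m),B_d(m)\}$, so every conflict-confining hypothesis is automatically non-trivial. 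The $\Sigma$-variants then follow from \Cref{obs:sigma}.

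For \ar, a modification making every hypothesis automatically conflict-confining cannot work as a general device. \Cref{ex:AR} shows that non-trivial \ar-hypotheses may be forced to introduce new conflicts, and even in the paper's reduction not every hypothesis is conflict-confining (take both $A_y(m)$ and $A_{\bar y}(m)$). What is needed is weaker and specific to that reduction (axioms $A_x\sqcap A_{\bar x}\sqsubseteq\bot$ and $C\sqcap\bigsqcap_{\ell\in t}A_\ell\sqsubseteq A_\psi$, ABox $\{A_z(m),A_{\bar z}(m)\mid z\in Z\}$). The backward direction is inherited, since a non-trivial conflict-confining hypothesis is in particular non-trivial. For the forward direction one checks that the witness $\{C(m)\}\cup\{A_\ell(m)\mid\ell\in\theta_Y\}$ adds no new conflict. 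This holds because the only disjointness axioms are $A_x\sqcap A_{\bar x}\sqsubseteq\bot$, $\calA$ contains no $Y$-literals, and $\theta_Y$ picks one literal per variable. You also leave the \ar{} $\Sigma$-variants untreated. With non-triviality they follow from \Cref{obs:sigma}. Without it, \Cref{obs:sigma} only transfers \coNP-hardness, since conflict-confining \ar-existence alone is \coNP-complete. There one should take $\Sigma$ without $A_\psi$, which makes $\Sigma$-restriction coincide with non-triviality in this reduction.
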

\begin{proof}[Proof idea]
	For membership,
	we guess a hypothesis $\Hyp$ over the signature of $\calK$ and verify that \Hyp is non-trivial, $\tup{\calT,\calA\cup\Hyp}\models_\calS \alpha$, and $\Hyp$ is conflict-confining.
  The first check can easily be done in polynomial time, while the last two checks can be performed via $\NP$-oracle calls, yielding \SigmaP-membership.
  In particular, $\calS$-entailment has $\NP$ (resp., \coNP) complexity for $\calS=\brave$ ($\calS=\ar$), while the complexity for checking conflict-confinement was stated in \cref{lem:cc-compl}.
  
	When adding \SignatureRestriction, we guess $\Hyp$ over the given signature $\Sigma$ (instead of the signature of $\tup{\calK, \alpha})$, and if we drop the requirement for non-triviality, the corresponding check is dropped.
	In both cases, we obtain $\SigmaP$-membership.

	For hardness, we first consider the case of brave semantics.
	The existence of conflict-confining \brave-hypotheses is $\SigmaP$-complete (\Cref{fig:results}) and we observe that the reduction easily extends to our present case.
	For non-trivial, the proof follows by observing that the employed reduction constructs a hypothesis that is always non-trivial (see appendix).
	When adding \SignatureRestriction (with or without non-triviality), hardness is retained due to \Cref{obs:sigma}.

	For AR semantics, we observe that the existence problem for non-trivial hypotheses (\Cref{fig:results}) is $\SigmaP$-complete.
	We argue that the reduction employed for hardness in that result extends to the combination of non-triviality with conflict-confinement.
	This requires us to prove that the considered hypothesis in the reduction is indeed conflict-confining.
  We find it convenient to present the reduction here for completeness.
	For hardness, we reduce from the validity problem for $\exists\forall$-QBFs.
	Let $\Psi = \exists Y\forall Z \; \psi$ be an $\exists\forall$-QBF, where $\psi$ is in DNF, and let $X = Y \cup Z$.
	We construct a TBox \calT using concept names $N = \{A_x, A_{\bar x} \mid x \in X\} \cup \{C, A_\psi\}$.
	Intuitively, \calT expresses that for any term $t \in \psi$, the conjunction of concepts corresponding to the literals in $t$ entails the concept $A_\psi$, which represents satisfaction of $\psi$.
	To this end, we define   $\calK\dfn \tup{\calT,\calA}$, where
	\begin{align*}
		\calT &\dfn \{\,A_x \sqcap A_{\bar x} \subsum \bot \mid x\in X\, \} \cup \left\{\,C\sqcap \bigsqcap_{\ell\in t}A_\ell \subsum A_\psi \mid t\in\psi\,\right\}, \\ 
		\calA &\dfn \{\,A_z(m), A_{\bar z}(m)\mid z\in Z\}
	\end{align*}
	for an individual name $m$.
	Now, $\tup{\calK, A_\psi(m)}$ is the desired abduction problem.
	Observe that $\Hyp_{\text{triv}} \dfn \{A_\psi(m)\}$ is the trivial \ar-hypothesis for $\tup{\calK, \alpha}$.
	However, we are considering existence of a non-trivial hypotheses here, i.e., one that does not contain $A_\psi(m)$.
  Intuitively, any non-trivial hypothesis corresponds to an assignment satisfying the formula $\forall Z \, \psi$, containing assertions for all literals over $Y$ satisfied by the assignment.
  It is not hard to see that adding such hypotheses to \calK does not introduce new conflicts, and that there is no need to include additional assertions, allowing us to prove the following claim.
	\begin{restatable}{claim}{ClaimExistNTCC}\label{claim:nt-verification-el}
		$\Psi$ is true iff $\tup{\calK, \alpha}$ admits a non-trivial conflict-confining \ar-hypothesis.
	\end{restatable}

  Hardness when adding \SignatureRestriction to either of the cases again follows from \Cref{obs:sigma}. 
\end{proof}

\subsection{Verification}
We now turn towards verification.
We first argue that the combinations of the main properties (without minimality criteria) can be handled similar as for existence.
Observe that the verification problem for either signature-restricted, or non-trivial $\calS$-hypotheses admits the same complexity as $\calS$-entailment for $\calS\in\{\brave,\ar\}$.
Given an instance $\tup{\calK,\alpha}$ of $\calS$-abduction problem with $\calK=\tup{\calT,\calA}$,
verifying that an ABox $\Hyp$ is a non-trivial $\calS$-hypothesis requires that
\begin{enumerate*}[label=(\arabic*)]
  \item $\tup{\calT, \calA\cup\Hyp}\models_\calS \alpha$, and
  \item \Hyp is non-trivial.
\end{enumerate*}
Here, the complexity is dominated by the first check, as the second step can be performed in polynomial time.
Similarly, for \SignatureRestriction, step (2) above can be replaced by a check that $\Hyp$ is \SignatureRestricted, which again takes polynomial time.
Further, the same is still true when combining non-triviality with \SignatureRestriction.
The lower bound follows from hardness of verification for (general) \calS-hypotheses and non-trivial \calS-hypotheses combined with \cref{obs:sigma}, yielding the following corollary.
\begin{corollary}
	Verification of non-trivial $\calS$-hypotheses and of \SignatureRestricted \calS-hypotheses is
  \begin{enumerate*}[label=(\roman*)]
	  \item $\NP$-complete for $\calS=\brave$, and
	  \item $\co\NP$-complete for $\calS=\ar$.
  \end{enumerate*}
	Moreover, the same is true when combining both properties, i.e., for verification of non-trivial \SignatureRestricted \calS-hypotheses.
\end{corollary}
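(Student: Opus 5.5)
We split the proof into upper bounds, which come from a direct decision procedure, and lower bounds, which we transfer from the single-property cases.

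For membership, we fix an instance $\tup{\calK,\alpha}$ with $\calK=\tup{\calT,\calA}$ and a candidate ABox $\Hyp$. Verifying that $\Hyp$ is a non-trivial (resp.\ \SignatureRestricted, or both) \calS-hypothesis amounts to three checks:
\begin{enumerate*}[label=(\alph*)]
\item $\Hyp$ only uses individuals of $\calK$ and $\alpha$ (and, if required, only symbols of $\Sigma$);
\item $\alpha\notin\Hyp$, if non-triviality is required;
\item $\tup{\calT,\calA\cup\Hyp}\models_\calS\alpha$.
\end{enumerate*}
Checks (a) and (b) are syntactic and take polynomial time. Check (c) is \calS-entailment of a concept assertion in \ELbot. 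By the results recalled in the preliminaries~\cite{BienvenuB16}, this is in \NP for $\calS=\brave$: we guess a repair $\calR$ of $\tup{\calT,\calA\cup\Hyp}$ and check consistency, maximality and $\tup{\calT,\calR}\models\alpha$ in polynomial time. For $\calS=\ar$ it is in \coNP, since the complement is witnessed by a repair that does not entail $\alpha$. A polynomial-time conjunction with an \NP (resp.\ \coNP) check stays in \NP (resp.\ \coNP), so all variants, including the combination of both properties, have the stated upper bounds.

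For hardness, we first treat plain verification of \calS-hypotheses, i.e.\ check (c) alone with $\Hyp$ given. This already embeds \calS-entailment. Given an entailment instance $\tup{\calT,\calA'}$ and $\alpha$, we take $\calA\dfn\calA'$ and $\Hyp\dfn\emptyset$. Two technicalities must be handled: the definition of an abduction problem requires $\calK$ to be inconsistent and $\calK\not\models_\calS\alpha$.
\begin{itemize}
\item \emph{Inconsistency.} We add fresh names $E,E'$, a fresh individual $e$, the axiom $E\sqcap E'\subsum\bot$ and the assertions $E(e),E'(e)$. This conflict is isolated and does not affect entailment of $\alpha$.
\item \emph{Non-entailment.} We move one needed piece into the hypothesis instead: a fresh concept name $F$, the axiom $F\sqcap G\subsum A$ where $\alpha=A(a)$ is replaced by a fresh observation $G'(a)$ with $A\sqcap F\subsum G'$, and $\Hyp=\{F(a)\}$. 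Then $\calK$ alone cannot entail $G'(a)$, while $\calK\cup\Hyp$ entails it under \calS iff the original instance entails $\alpha$.
\end{itemize}
The resulting $\Hyp$ is non-trivial since $G'(a)\notin\Hyp$. Hence hardness holds for non-trivial \calS-hypotheses too, matching the earlier result for non-trivial verification in \Cref{fig:results}. Hardness for the \SignatureRestricted variants, and for the combination, then follows from \Cref{obs:sigma}: taking $\Sigma$ to be the full signature does not change the set of solutions, and the non-triviality requirement is simply retained.

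The main obstacle is this hardness transfer. We must ensure the reduced instance is a legal abduction problem (inconsistent, with the observation not already entailed) while preserving \NP-hardness for \brave and \coNP-hardness for \ar. If the gadget above turns out to be awkward, the fallback is to cite the known hardness of verification for non-trivial hypotheses from~\cite{HaakKMT2026} and only apply \Cref{obs:sigma}.
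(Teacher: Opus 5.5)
Your proof is correct and follows the paper's argument. Membership comes from polynomial-time syntactic checks on top of \calS-entailment, and hardness comes from the lower bound for (non-trivial) verification, transferred to the \SignatureRestricted cases via \Cref{obs:sigma}. The only difference is that you re-derive that lower bound by an explicit reduction from entailment instead of citing~\cite{HaakKMT2026}. Your reduction is essentially the paper's $X\sqcap A\sqsubseteq C$ gadget from \Cref{thm:elbot-verif-ar-cc}, and it works for both semantics because the fresh assertion $F(a)$ lies in every repair.

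You should drop the stray axiom $F\sqcap G\sqsubseteq A$, in which $G$ is never defined. It is not needed, and if $G$ were an existing name, $F(a)$ could derive $A(a)$ and break the claimed equivalence.
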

It is already known that adding conflict-confinement increases the complexity for verification, but only for brave semantics (\Cref{fig:results}). 
We next prove that adding non-triviality or \SignatureRestriction does not increase the complexity beyond that of checking that a given ABox is a conflict-confining hypothesis.

\begin{theorem}\label{thm:cc-nt-verification-el}
	Verification of non-trivial conflict-confining $\calS$-hypotheses is
  \begin{enumerate*}[label=(\roman*)]
	  \item $\DP$-complete for $\calS=\brave$, and
	  \item $\co\NP$-complete for $\calS=\ar$.
  \end{enumerate*}
	The complexity remains the same when non-triviality is replaced by $\Sigma$-restriction, and when requiring all three properties.
\end{theorem}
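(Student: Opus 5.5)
Membership and hardness are handled separately. Throughout we use that the verification problem for conflict-confining $\calS$-hypotheses (without further properties) is $\DP$-complete for brave and $\co\NP$-complete for AR semantics (\Cref{fig:results}).

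\emph{Membership.} Given $\tup{\calK,\alpha}$ (and possibly $\Sigma$) and a candidate ABox $\Hyp$, we check three things:
\begin{enumerate*}[label=(\arabic*)]
\item $\tup{\calT,\calA\cup\Hyp}\models_\calS\alpha$,
\item $\Hyp$ is conflict-confining for $\calK$, and
\item $\alpha\notin\Hyp$ and/or $\Hyp$ uses only symbols from $\Sigma$.
\end{enumerate*}
Check (3) takes polynomial time. Check (2) is in $\co\NP$ by \Cref{lem:cc-compl}.
\begin{itemize}
\item For $\calS=\brave$, check (1) is in $\NP$: guess a repair of $\tup{\calT,\calA\cup\Hyp}$ that entails $\alpha$. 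The problem is therefore the intersection of an $\NP$ language with a $\co\NP$ language, with the polynomial check absorbed into either side. Since $\DP$ is exactly this class, membership in $\DP$ follows.
\item For $\calS=\ar$, both (1) and (2) are in $\co\NP$, and $\co\NP$ is closed under intersection. So the whole check is in $\co\NP$.
\end{itemize}
This argument covers every combination of non-triviality and $\Sigma$-restriction with conflict-confinement, because each variant only changes the polynomial check (3).

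\emph{Hardness.} For any combination involving $\Sigma$-restriction, \Cref{obs:sigma} lets us reduce from the version without $\Sigma$-restriction. It therefore suffices to show hardness for non-trivial conflict-confining hypotheses, and also for conflict-confining hypotheses alone, which is already known.

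For AR semantics, $\co\NP$-hardness can be obtained directly from AR entailment, which is $\co\NP$-hard in \ELbot. Given an instance $\tup{\calT,\calA},\alpha$ of the entailment problem, we build the abduction problem $\tup{\tup{\calT,\calA},\beta}$ with a fresh concept name $B$ and the observation $\beta=B(m)$. We add the axiom $A\sqsubseteq B$, where $\alpha=A(a)$, and rename individuals so that $m=a$. We then ask whether $\Hyp=\emptyset$, or a harmless fresh assertion, is a non-trivial conflict-confining hypothesis. The empty hypothesis is trivially conflict-confining and non-trivial, but the definition of an abduction problem requires non-entailment of the observation. To meet this requirement, we instead verify $\Hyp=\{D(a)\}$ with a fresh $D$ and the axiom $D\sqcap A\sqsubseteq B$, using $B(a)$ as the observation. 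Then $\Hyp$ is conflict-confining and non-trivial, $\calK\not\models_\ar B(a)$ holds, and $\Hyp$ is an AR-hypothesis iff $\calK\models_\ar A(a)$. Alternatively, it is enough to check that the existing $\co\NP$-hardness proof for AR verification of conflict-confining hypotheses uses a non-trivial candidate.

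For brave semantics, we plan to inspect the $\DP$-hardness reduction for verification of conflict-confining brave-hypotheses from~\cite{HaakKMT2026}. We expect it to combine an $\NP$-hard brave-entailment instance with a $\co\NP$-hard conflict-confinement instance on disjoint signatures. We then argue that the candidate $\Hyp$ it produces does not contain the observation. If it does, we apply the same fresh-name trick: replace the observation by a fresh $B(a)$ entailed via $D\sqcap\ldots\sqsubseteq B$, with $D(a)\in\Hyp$. The fresh names add no conflicts, so conflict-confinement is preserved.

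The main obstacle is this last step. We must make sure that modifying the $\DP$-hardness reduction to force non-triviality keeps both directions of the equivalence intact, and in particular that the new axioms create no new conflicts and do not make the observation entailed without $\Hyp$.
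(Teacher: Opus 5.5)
Your proposal is correct and follows the paper's proof. Membership uses the same split into brave entailment ($\NP$) or AR entailment ($\coNP$), conflict-confinement ($\coNP$), and polynomial checks for non-triviality or $\Sigma$-restriction; the $\Sigma$-variants inherit hardness via the signature observation; and your final AR reduction (axiom $D\sqcap A\sqsubseteq B$, $\Hyp=\{D(a)\}$, observation $B(a)$) is exactly the paper's.

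The brave step you flag as the main obstacle is harmless. The paper's brave reduction (from $\calK\models_\brave A(a)$ and $\calK\not\models_\brave B(a)$ via $C\sqcap A\sqsubseteq D$, $C\sqcap B\sqsubseteq\bot$, $\Hyp=\{C(a)\}$, observation $D(a)$) is visibly non-trivial. Your fresh-name wrapper would also work unconditionally as a black-box reduction: $D(a)$ lies in no conflict, so the repairs of the extended KB are exactly the old repairs plus $D(a)$, and both conflict-confinement and brave entailment transfer.
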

  \begin{proof}[Proof idea]
	The hardness applies due to the case of verification for conflict-confining hypotheses since the considered hypotheses in each case are indeed non-trivial (see appendix).
  Hardness when adding \SignatureRestriction to either of the cases again follows from \Cref{obs:sigma}.
  We give a bit more detail for the case of conflict-confining non-trivial hypotheses.
	For brevity, we only outline the proof for \brave semantics, which utilises a reduction from the combination of entailment and non-entailment under \brave semantics to verification of non-trivial conflict-confining \brave-hypotheses.
	Given an instance $\tup{\calK, \alpha_1, \alpha_2}$ for some inconsistent KB \calK, the problem asks whether $\calK\models_\brave \alpha_1$ and $\calK\not\models_\brave \alpha_2$.
	This problem is \DP-complete because the first question is \NP-complete and the second question is \coNP-complete under \brave semantics:
  Membership follows immediately, while only using a single KB in the problem is a slight obstacle to showing hardness.
  To address this, reduce from the variant of the problem with two KBs by first ensuring that the first and the second part of the problem use disjoint sets of names, and then defining the new single KB as the disjoint union of both KBs. 
	For the reduction to the abduction problem at hand, assume w.l.o.g.\ that $\alpha_1$ and $\alpha_2$ both use the same individual, but different concept names, and let $\alpha_1 = A(a)$, $\alpha_2= B(a)$, with $\calK = \tup{\calT,\calA}$.
	We consider fresh concepts $C$ and $D$ to construct a KB $\calK' \coloneqq \tup{\calT',\calA}$, for
	\[\calT' \coloneqq \calT \cup \{C \sqcap A\subsum D, C\sqcap B\subsum \bot\},\]
	an observation $\alpha \coloneqq D(a)$, and a hypothesis $\Hyp \coloneqq \{C(a)\}$.
	It can now be shown that \Hyp is a \brave-hypothesis for $\tup{\calK', \alpha}$ iff $\calK \models_\brave A(a)$, and conflict-confining for $\calK'$ iff $\calK \not\models_\brave B(a)$.
  Observing that $\alpha \not\in \Hyp$, we obtain the following claim, proving correctness.
	\begin{restatable}{claim}{ClaimVerNTCC}\label{claim:el-verif-brave-nt-cc}
		\Hyp is a non-trivial conflict-confining \brave-hypothesis for $\alpha$ in $\calK'$ iff $\calK \models_\brave \alpha_1$ and \mbox{$\calK \not\models_\brave \alpha_2$}.
	\end{restatable}

	We next prove membership in each case.
  Let $\tup{\calK, \alpha}$ be an $\calS$-abduction problem with $\calK = \tup{\calT, \calA}$ and $\Hyp$ an ABox.

	We begin with \brave semantics.
	Observe that $\Hyp$ is a non-trivial conflict-confining $\brave$-hypothesis iff
  \begin{enumerate*}[label=(\arabic*)]
    \item \Hyp is non-trivial,
    \item $\tup{\calT, \calA\cup \Hyp} \models_\brave \alpha$, and
    \item $\Conf(\tup{\calT, \calA \cup \Hyp}) = \Conf(\tup{\calT, \calA})$.
  \end{enumerate*}
	Here, (1) can be easily checked in polynomial time.
	Moreover, 
	(2) is instance checking for \ELbot under brave semantics and hence in \NP, while (3) can be checked in \coNP, since the complement can be checked by guessing a conflict~$\calC$ s.t.\ $\calC\in \Conf(\tup{\calT, \calA \cup \Hyp})$ and $\calC\not\in \Conf(\tup{\calT, \calA})$.
	Concretely, this amounts to guessing a $\calC \subseteq \calA\cup\Hyp$ such that:
  \begin{enumerate*}[label=(\alph*)]
		\item $\calC \not\subseteq \calA$, 
	  \item $\tup{\calT, \calC}\models \bot$, and 
		\item $\tup{\calT, \calC\setminus\{\gamma\}}\not\models \bot$ for any $\gamma\in C$.
  \end{enumerate*}
	Checking whether \calC satisfies (a)--(c) can be done in polynomial time.
	Altogether, (a) and (b) together are in \NP, while (c) is in \coNP. 
	Hence, the problem is contained in \DP.

	The membership proof for brave semantics can be adapted to the cases where \SignatureRestriction is added as a requirement, or replaces non-triviality, by noting that checking \SignatureRestriction on top or instead of non-triviality can also be done in polynomial time.

	We now turn to membership for \ar-semantics.
	First, check non-triviality of $\Hyp$ in polynomial time.
	If this is the case, $\Hyp$ is a non-trivial conflict-confining $\ar$-hypothesis iff
  \begin{enumerate*}[label=(\arabic*)]
    \item $\tup{\calT, \calA\cup \Hyp} \models_\ar \alpha$, and
    \item $\Hyp$ is conflict-confining in $\calK$, i.e., $\Conf(\tup{\calT, \calA \cup \Hyp}) = \Conf(\tup{\calT, \calA})$.
  \end{enumerate*}
	As a result, one can guess a candidate repair $\calR$ and a candidate conflict $\calC$ of $\tup{\calT, \calA \cup \Hyp}$ simultaneously, as a counter example to $\Hyp$ being an $\ar$-hypothesis ($\calR$) and conflict-confining ($\calC$).
	More precisely, one needs to check that \calR is a subset-maximal \calT-consistent subset of \calA (hence a repair) and $\tup{\calT, \calR} \not\models \alpha$, as well as that $\calC \not\in \Conf(\tup{\calT,\calA})$, but $\calC\in \Conf(\tup{\calT, \calA\cup\Hyp})$.
  All these checks can be performed in polynomial time.
	The claimed \coNP-membership follows by observing that $\Hyp$ is a conflict-confining hypothesis for $\alpha$ in $\calK$ iff both these checks return false.
  Membership when adding \SignatureRestriction to the conflict-confining and the conflict-confining non-trivial case again follows by noting that \SignatureRestriction can be checked separately in polynomial time.
\end{proof}

\subsubsection*{Combination of Properties with Minimality Criteria}
We next turn towards combinations involving minimality criteria, beginning with $\subseteq$-minimality.
Here, the following general observation regarding our main properties is very helpful.
\begin{observation}\label{obs:downward}
  The properties of being conflict-confining, non-trivial, and \SignatureRestricted are downwards-closed.
  More precisely, let $\tup{\calK, \alpha}$ be an \calS-abduction problem for $\calS \in \{\brave, \ar\}$, \Hyp an ABox and $\Hyp' \subseteq \Hyp$.
  If \Hyp is conflict-confining for \calK, then so is $\Hyp'$.
  Similarly, if $\alpha \not\in \Hyp$, then $\alpha \not\in \Hyp'$, and if \Hyp is an ABox over a signature $\Sigma$, then so is $\Hyp'$.
\end{observation}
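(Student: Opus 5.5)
The three properties are handled separately, and in each case I argue directly from the definitions for an arbitrary $\Hyp' \subseteq \Hyp$. Non-triviality and \SignatureRestriction are immediate. If $\alpha \notin \Hyp$, then $\alpha \notin \Hyp'$ because every element of $\Hyp'$ lies in $\Hyp$. If every assertion of $\Hyp$ uses only symbols from $\Sigma$, the same holds for the assertions of $\Hyp'$.

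The only case with content is conflict-confinement. I will use the characterisation of conflicts as subset-minimal \calT-inconsistent subsets of the ABox, together with monotonicity of the ABox. Since $\calA \subseteq \calA \cup \Hyp' \subseteq \calA \cup \Hyp$, I want to show $\Conf(\tup{\calT, \calA \cup \Hyp'}) = \Conf(\calK)$, assuming $\Conf(\tup{\calT, \calA \cup \Hyp}) = \Conf(\calK)$. Both inclusions rest on the same basic fact: a conflict is minimal relative to its own elements only. So a set $\calC$ is a conflict of any ABox containing it iff $\tup{\calT,\calC} \models \bot$ and $\tup{\calT,\calC\setminus\{\gamma\}} \not\models \bot$ for all $\gamma \in \calC$.

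\emph{Inclusion $\supseteq$.} Every conflict of $\calK$ is a subset of $\calA \subseteq \calA \cup \Hyp'$, so by the fact above it is also a conflict of $\tup{\calT,\calA\cup\Hyp'}$.

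\emph{Inclusion $\subseteq$.} Every conflict $\calC$ of $\tup{\calT,\calA\cup\Hyp'}$ is a subset of $\calA \cup \Hyp$, so by the same fact it is a conflict of $\tup{\calT,\calA\cup\Hyp}$. By assumption, $\calC$ is therefore in $\Conf(\calK)$.

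Alternatively, one could use the equivalent repair-based formulation: $\tup{\calT,\calR\cup\Hyp}\not\models\bot$ for every repair \calR. Then monotonicity of classical entailment in \ELbot gives $\tup{\calT,\calR\cup\Hyp'}\not\models\bot$ directly.

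I expect no real obstacle. The only point to make explicit is that conflict-hood of a fixed set does not depend on the surrounding ABox, which the proof above states and uses.
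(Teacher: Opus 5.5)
Your proof is correct and is the routine argument the paper leaves implicit (the observation is stated there without proof): the two syntactic properties are immediate, and your remark that whether a set $\mathcal{C}$ contained in the ABox is a conflict depends only on $\mathcal{C}$ itself gives both inclusions of $\Conf(\tup{\calT,\calA\cup\Hyp'}) = \Conf(\calK)$. Keep the conflict-based argument as your main proof, because the repair-based characterisation you offer as an alternative (which the paper also asserts) matches the definition only when $\Hyp\cap\calA=\emptyset$: for $\calT=\{A\sqcap B\sqsubseteq\bot\}$, $\calA=\{A(a),B(a)\}$ and $\Hyp=\{A(a)\}$, $\Hyp$ is conflict-confining but inconsistent with the repair $\{B(a)\}$, so that route needs an extra step that first discards $\Hyp\cap\calA$.
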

This implies that an ABox \Hyp being a $\subseteq$-minimal conflict-confining \calS-hypotheses is equivalent to $\Hyp$ being a $\subseteq$-minimal \calS-hypothesis and being conflict-confining (as a separate requirement). The same observation applies to the other two properties as well as combinations of properties.

Using the established results on verification of $\subseteq$-minimal hypotheses and the above observation, we obtain the following complexities when additionally requiring other properties.
\begin{theorem}\label{cor:verification-nt-min-el}
	Verification of $\subseteq$-minimal non-trivial $\calS$-hypotheses is 
  \begin{enumerate*}[label=(\roman*)]
	  \item $\DP$-complete, for $\calS=\brave$, and
	  \item $\PiP$-complete, for $\calS=\ar$.
  \end{enumerate*}
	The complexity remains same when we replace non-triviality by \SignatureRestriction, and when combining both properties.
\end{theorem}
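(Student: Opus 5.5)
The plan is to combine \Cref{obs:downward} with the known complexities for verifying $\subseteq$-minimal $\calS$-hypotheses (\Cref{fig:results}: \DP for \brave, \PiP for \ar). The key reduction is the following. Let $P$ be non-triviality, \SignatureRestriction, or their conjunction. We claim that \Hyp is a $\subseteq$-minimal $\calS$-hypothesis among those with property $P$ iff \Hyp has $P$ and \Hyp is a $\subseteq$-minimal $\calS$-hypothesis outright. For the nontrivial direction, suppose some $\Hyp' \subsetneq \Hyp$ is an $\calS$-hypothesis. By \Cref{obs:downward}, $\Hyp'$ also has $P$, so \Hyp would not be minimal among $P$-hypotheses. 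The converse is immediate.

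For membership, we therefore check $P$ in polynomial time and then run the known verification procedure for $\subseteq$-minimal $\calS$-hypotheses. For \brave semantics, that procedure is a conjunction of an \NP check ($\tup{\calT,\calA\cup\Hyp}\models_\brave\alpha$) and a \coNP check (no $\Hyp\setminus\{\beta\}$ is a \brave-hypothesis). Here it suffices to test removal of single assertions, since \brave-entailment is monotone in the ABox: adding assertions can only create further repairs containing a given support. Adding a polynomial-time check keeps the problem in \DP. For \ar semantics, we use \PiP membership directly: universally guess a candidate smaller hypothesis $\Hyp'\subsetneq\Hyp$ and refute that it \ar-entails $\alpha$ using an \NP oracle, together with the \coNP check that \Hyp itself is an \ar-hypothesis. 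We do not rely on single-assertion removal here, since \ar-entailment is not monotone.

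For hardness, we plan to reuse the reductions establishing \DP-hardness (\brave) and \PiP-hardness (\ar) of verifying $\subseteq$-minimal hypotheses, and argue that the hypothesis to be verified in those constructions never contains the observation. On such instances, the answer for $\subseteq$-minimal non-trivial hypotheses coincides with the answer for plain $\subseteq$-minimality, by the equivalence above. If an existing reduction does produce a trivial candidate, the fallback is to introduce a fresh concept $D$ with an axiom $A\sqsubseteq D$ and use the observation $D(a)$. This makes the old observation $A(a)$ a non-trivial hypothesis and preserves conflicts and repairs, since $D$ occurs in no conflict. Hardness for \SignatureRestricted and for combined variants then follows from \Cref{obs:sigma}, taking $\Sigma$ to be the full signature.

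The main obstacle is checking that the earlier hardness reductions, especially the \PiP one for \ar, indeed yield non-trivial candidate hypotheses, or that the fresh-name wrapper preserves \ar-minimality. Under the wrapper, any \ar-hypothesis for $D(a)$ strictly smaller than \Hyp must either entail $A(a)$ under \ar, or contain $D(a)$ itself. We would exclude the latter by noting that $D(a)\notin\Hyp$ and that we only compare subsets of \Hyp.
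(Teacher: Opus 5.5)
Your proposal is correct and follows essentially the same route as the paper. Membership goes through \Cref{obs:downward}, which reduces the problem to plain $\subseteq$-minimality plus a polynomial-time check of non-triviality or $\Sigma$-restriction on top of the known DP and $\Pi_2^p$ upper bounds; hardness comes from noting that the existing reductions already verify non-trivial hypotheses ($\{C(a),D(a)\}$ against $Q(a)$ for brave, $\{T_{x,i}(a),F_{x,i}(a)\mid 1\le i\le n\}$ against $A(a)$ for AR), together with \Cref{obs:sigma}, so your fresh-concept fallback is sound but not needed.
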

\begin{proof}[Proof idea]
	The hardness in each case follows from the corresponding result on verification of $\subseteq$-minimal $\calS$-hypotheses (see appendix).
	Observe that the reduction in each case uses a non-trivial hypothesis,
	thus the hardness applies to verification of non-trivial subset-minimal hypotheses.
	Again, when adding \SignatureRestriction (either instead of or in combination with non-triviality), hardness is retained by \Cref{obs:sigma}.
	We outline the reduction for $\DP$-hardness with \brave semantics as it it also used in later results.
	We reduce again from a combination of entailment and non-entailment under \brave semantics to verification of $\subseteq$-minimal \brave-hypotheses, similar to hardness under \brave semantics in the proof of \cref{thm:cc-nt-verification-el}.
	Given an instance $\tup{\calK, \alpha_1, \alpha_2}$ for some inconsistent KB \calK, let $\alpha_1 = A(a)$, $\alpha_2= B(a)$, and $\calK = \tup{\calT,\calA}$.
	We now construct a KB $\calK'$, an observation $\alpha$, and an ABox \Hyp as follows.
	Let $\calK' \coloneqq \tup{\calT',\calA}$ with
	\[\calT' \coloneqq \calT \cup \{B \subsum C, C\sqcap D\sqcap A \subsum Q\},\]
	$\alpha \coloneqq Q(a)$, and $\Hyp \coloneqq \{C(a), D(a)\}$ for fresh concepts $C,D,Q$.
	Intuitively, $\Hyp$ is a Brave-hypothesis for $\tup{\calK', \alpha}$ iff $\calK \models_\brave A(a)$ and $\Hyp$ is $\subseteq$-minimal iff $\calK \not\models_\brave B(a)$.
  To see the latter, note that $\subseteq$-minimality is violated iff $\{D(a)\}$ is also a \brave-hypothesis for $\tup{\calK', \alpha}$, which is the case iff $\calK \models_\brave B(a)$.
  The following claim shows correctness.

	\begin{restatable}{claim}{ClaimVerSubsetNT}\label{claim:nt-subset-verification} 
	\Hyp is a $\subseteq$-minimal non-trivial hypothesis for $\tup{\calK', \alpha}$ iff $\calK \models_\brave \alpha_1$ and $\calK \not\models_\brave \alpha_2$.
	\end{restatable}

  Membership follows from the the upper bound for verification of $\subseteq$-minimal \calS-hypotheses together with \cref{obs:downward}:
  By the observation, it is sufficient to check that \Hyp is a $\subseteq$-minimal \calS-hypotheses, and separately checking that it is non-trivial or \SignatureRestricted.
  Since the latter two properties can be checked in \Ptime, we obtain the desired upper bounds.
  For this, note that both \DP and \PiP are closed under intersection with languages from \Ptime.

\end{proof}

Verification of $\subseteq$-minimal conflict-confining hypotheses enjoys the same complexity for \brave-hypotheses, but the precise complexity of \ar-hypotheses remains open for now.
\begin{theorem}\label{thm:verification-subset-min-el}
	Verification of $\subseteq$-minimal conflict-confining $\calS$-hypotheses is
  \begin{enumerate*}[label=(\roman*)]
	  \item $\DP$-complete for $\calS=\brave$, while it is
    \item in $\PiP$ and \coNP-hard for $\calS=\ar$.
  \end{enumerate*}
  The same holds when additionally requiring non-triviality, \SignatureRestriction, or both.
\end{theorem}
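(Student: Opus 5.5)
By \Cref{obs:downward}, an ABox \Hyp is a $\subseteq$-minimal conflict-confining \calS-hypothesis iff it is a $\subseteq$-minimal \calS-hypothesis (among all \calS-hypotheses) and, separately, conflict-confining for \calK. The same holds with non-triviality and \SignatureRestriction added. I will use this characterisation for the upper bounds and reuse earlier reductions for the lower bounds.

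\textbf{Membership.} For \brave semantics I split the check into an \NP part and a \coNP part and conclude \DP-membership.
\begin{itemize}
  \item The \NP part is $\tup{\calT,\calA\cup\Hyp}\models_\brave\alpha$: guess a repair and check entailment.
  \item The \coNP part contains two checks. First, $\subseteq$-minimality: for no $\gamma\in\Hyp$ does $\Hyp\setminus\{\gamma\}$ \brave-entail $\alpha$. Since brave entailment is monotone in the ABox, it suffices to test the polynomially many maximal proper subsets, which is a conjunction of polynomially many \coNP checks. Second, conflict-confinement, which is in \coNP by \cref{lem:cc-compl}.
  \item Non-triviality and \SignatureRestriction are checked in \Ptime.
\end{itemize}
The monotonicity step needs justification. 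If $\tup{\calT,\calR}\models\alpha$ for a repair $\calR$ of $\calA\cup\Hyp'$, then $\calR$ extends to a repair $\calR'$ of $\calA\cup\Hyp$ with $\calR\subseteq\calR'$, and so $\tup{\calT,\calR'}\models\alpha$ as well.

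For \ar semantics, membership in \PiP follows from the known \PiP upper bound for verifying $\subseteq$-minimal \ar-hypotheses (\Cref{fig:results}), intersected with the \coNP check of conflict-confinement and the polynomial checks. \PiP is closed under these intersections.

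\textbf{Hardness for \brave semantics.} I would reuse the \DP-hardness reduction from \cref{cor:verification-nt-min-el}: $\calT'=\calT\cup\{B\subsum C, C\sqcap D\sqcap A\subsum Q\}$ with $\Hyp=\{C(a),D(a)\}$. The remaining task is to show that this \Hyp is conflict-confining, since $C,D,Q$ are fresh and occur in no axiom with $\bot$ on the right. The argument goes as follows.
\begin{itemize}
  \item Every concept inclusion whose right-hand side is $\bot$ comes from $\calT$, and these inclusions do not mention $C$, $D$ or $Q$.
  \item Entailments of $C$, $D$ or $Q$ never feed back into $\calT$.
  \item Hence any $\calT'$-inconsistent subset of $\calA\cup\Hyp$ remains inconsistent after removing its \Hyp-assertions.
  \item So every conflict lies inside $\calA$, and the conflicts of $\calA$ are unchanged because $\calT'$ is a conservative extension of $\calT$ regarding consistency.
\end{itemize}
Then \Hyp is a $\subseteq$-minimal conflict-confining \brave-hypothesis iff $\calK\models_\brave\alpha_1$ and $\calK\not\models_\brave\alpha_2$. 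Since \Hyp is non-trivial, hardness also covers the non-trivial variant, and \Cref{obs:sigma} covers \SignatureRestriction.

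\textbf{Hardness for \ar semantics.} \coNP-hardness should follow directly from \coNP-hardness of verifying conflict-confining \ar-hypotheses (\Cref{fig:results}). Here I would check that the hypothesis constructed in that reduction is a singleton, or otherwise trivially $\subseteq$-minimal, and non-trivial. If that reduction's hypothesis is not minimal, the fallback is a direct reduction from \ar non-entailment $\calK\not\models_\ar B(a)$, using fresh $C,D$ with $\calT'=\calT\cup\{C\sqcap B\subsum\bot, C\subsum D\}$, observation $D(a)$, and $\Hyp=\{C(a)\}$. In this construction, \Hyp is always a $\subseteq$-minimal hypothesis, and it is conflict-confining iff no repair entails $B(a)$, i.e.\ iff $\calK\not\models_\brave B(a)$. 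That condition is \coNP-hard for \brave, so this fallback gives \coNP-hardness.

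\textbf{Main obstacle.} The delicate step is verifying conflict-confinement in the reused \brave reduction. I need to rule out that $C(a)$ or $D(a)$ participate in conflicts through derived consequences, which is where the freshness argument has to be made carefully.
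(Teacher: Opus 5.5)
Your proposal follows the paper's proof essentially step for step. Membership comes from \cref{obs:downward} together with closure of \DP and \PiP under intersection with \coNP; your direct \DP argument for \brave via monotonicity of brave entailment is a correct unpacking of the cited bound. \DP-hardness for \brave reuses the $\{C(a),D(a)\}$ reduction from \cref{cor:verification-nt-min-el} and notes that this \Hyp is conflict-confining; your freshness argument for that is correct and more careful than the paper's. \coNP-hardness for \ar uses the singleton $\{X(a)\}$ from the conflict-confinement reduction.

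The step that actually fails is not conflict-confinement. It is the equivalence you import from \cref{claim:nt-subset-verification}, and the paper's proof has the same problem. In $\calT'=\calT\cup\{B\sqsubseteq C,\ C\sqcap D\sqcap A\sqsubseteq Q\}$, $C$ is derivable only from $B$. So the proper subset $\{D(a)\}$ is a \brave-hypothesis iff a \emph{single} repair of \calK entails both $A(a)$ and $B(a)$, which is not the same as $\calK\models_\brave B(a)$.

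Here is a counterexample. Take $\calT=\{A\sqcap B\sqsubseteq\bot\}$ and $\calA=\{A(a),B(a)\}$, so $\calK\models_\brave A(a)$ and $\calK\models_\brave B(a)$. The repairs of $\calA\cup\Hyp$ are $\{A(a),C(a),D(a)\}$ and $\{B(a),C(a),D(a)\}$, so \Hyp is a hypothesis. Neither $\{C(a)\}$ nor $\{D(a)\}$ is one, so \Hyp is $\subseteq$-minimal, and a no-instance is mapped to a yes-instance.

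You can repair this in either of two ways:
\begin{enumerate}
\item Restrict to source instances that are disjoint unions of two KBs over disjoint signatures, as in the \DP-hardness argument inside \cref{thm:cc-nt-verification-el}. Repairs then factor componentwise, so a repair entailing $A(a)$ and one entailing $B(a)$ can be merged.
\item Replace $B\sqsubseteq C$ by $D\sqcap B\sqsubseteq Q$. Then \Hyp is a hypothesis iff $\calK\models_\brave A(a)$ or $\calK\models_\brave B(a)$, and $\{D(a)\}$ is one iff $\calK\models_\brave B(a)$. This gives exactly the desired condition, and \Hyp remains non-trivial and conflict-confining.
\end{enumerate}

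A minor point on your \ar fallback: $\{C(a)\}$ is not always an \ar-hypothesis there. It is one iff $C(a)$ lies in every repair, i.e.\ iff $\calK\not\models_\brave B(a)$. Also, the source problem is \brave non-entailment, not \ar non-entailment. The final equivalence still holds, and your primary $\{X(a)\}$ route makes the fallback unnecessary anyway.
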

\begin{proof}[Proof idea]
  Membership for both semantics follows from the membership of verification of $\subseteq$-minimal \calS-hypotheses, again using \cref{obs:downward}:
  By the observation, it is sufficient to verify that \Hyp is a $\subseteq$-minimal \calS-hypothesis, and separately checking that it is conflict-confining.
  As the latter is in \coNP and both \DP as well as \PiP are closed under intersection with languages from \coNP, we obtain the desired upper bounds.
  As both non-triviality and \SignatureRestriction can even be checked in \Ptime, the same upper bound still holds when adding one or both of these properties.

  $\DP$-hardness for \brave semantics follows due to the same complexity of verification of $\subseteq$-minimal \brave-hypotheses. We note that the considered hypothesis in the reduction employed for this case in \Cref{cor:verification-nt-min-el} is actually conflict-confining. As a result, \Cref{claim:nt-subset-verification} can be reused after replacing non-triviality by conflict-confinement.
  As argued, the constructed hypothesis is in fact both non-trivial and conflict-confining, thus the hardness also applies when adding non-triviality in the combination.
  Adding \SignatureRestriction can again be handled using \cref{obs:sigma}.

	Finally, \coNP-hardness for \ar semantics follows from the reduction employed in the \coNP-hardness proof for verification of conflict-confining \ar-hypotheses \cite{HaakKMT2026}.
  Precisely, reduce from \ar-entailment as follows.
  Given a KB $\calK = \tup{\calT, \calA}$ and a concept assertion $A(a)$, define the new KB $\calK' = \tup{\calT', \calA}$, where $\calT' \coloneqq \calT \cup \{X \sqcap A \sqsubseteq C\}$, and let $\alpha \coloneqq C(a)$ and $\Hyp \coloneqq \{X(a)\}$.
  We have $\calK \models_\ar A(a)$ iff $\tup{\calT', \calA \cup \Hyp} \models_\ar \alpha$.
  Furthermore, it is easy to see that \Hyp is conflict-confining and minimality follows from $|\Hyp| = 1$.
  Further, \Hyp is non-trivial, while \SignatureRestriction can again be handled using \cref{obs:sigma}, yielding hardness when adding one or both of these properties.
\end{proof}

Cardinality-minimality behaves similar to subset-minimality, if we consider either non-trivial or \SignatureRestricted hypotheses.
It was shown in our earlier work~\cite{HaakKMT2026} that $\leq$-minimal \calS-hypotheses are of size $1$, since there exists some \calS-hypothesis iff the (singleton) trivial hypothesis is one.
This does not apply when considering $\leq$-minimal non-trivial (resp., \SignatureRestricted) hypotheses, as the trivial hypothesis is now disallowed (resp., can be disallowed).
This explains the increased complexity in both cases when compared to the base case (verification of $\leq$-minimal $\calS$-hypotheses).

\begin{theorem}\label{cor:verification-nt-min-card-el}
	Verification of $\leq$-minimal non-trivial $\calS$-hypotheses is 
  \begin{enumerate*}[label=(\roman*)]
	  \item $\DP$-complete for $\calS=\brave$, and
	  \item {$\PiP$}-complete for $\calS=\ar$.
  \end{enumerate*}
	The complexity remains the same when we replace non-triviality by $\Sigma$-restriction, and when combining both properties.
\end{theorem}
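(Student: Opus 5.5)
We prove membership and hardness separately, treating non-triviality first and then transferring to $\Sigma$-restriction (alone or combined with non-triviality).

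\emph{Membership.} $\Hyp$ is a $\leq$-minimal non-trivial $\calS$-hypothesis iff three conditions hold:
\begin{enumerate*}[label=(\arabic*)]
  \item $\Hyp$ is non-trivial;
  \item $\tup{\calT,\calA\cup\Hyp}\models_\calS\alpha$;
  \item no non-trivial $\calS$-hypothesis $\Hyp'$ with $|\Hyp'|<|\Hyp|$ exists.
\end{enumerate*}
Here $\Hyp'$ ranges over ABoxes with individuals from $\calK$ and $\alpha$; for $\Sigma$-restriction it ranges over the symbols of $\Sigma$.

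For $\calS=\brave$, condition (2) is in \NP. For condition (3) we use downward closure (\cref{obs:downward}) together with monotonicity of brave entailment in the ABox: adding assertions to $\calA\cup\Hyp'$ can only enlarge the set of consistent subsets. So if a smaller non-trivial hypothesis exists, there is one of size exactly $|\Hyp|-1$. Moreover, a brave hypothesis only needs a consistent support $\calR\subseteq\calA\cup\Hyp'$ entailing $\alpha$. Hence the complement of (3) amounts to guessing $\Hyp'$ with $|\Hyp'|<|\Hyp|$ and a $\calT$-consistent $\calR\subseteq\calA\cup\Hyp'$ with $\tup{\calT,\calR}\models\alpha$, all checkable in polynomial time. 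So (3) is in \coNP and the whole problem is in \DP.

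For $\calS=\ar$, condition (2) is in \coNP. The complement of (3) is ``guess $\Hyp'$, then check an \ar-entailment'', which lies in \SigmaP. Hence (3) is in \PiP, and so is the conjunction. The checks for non-triviality and $\Sigma$-restriction are polynomial and are simply added to each guess-and-check.

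\emph{Hardness for brave.} Reuse the reduction of \cref{cor:verification-nt-min-el}: $\calT'=\calT\cup\{B\subsum C,\ C\sqcap D\sqcap A\subsum Q\}$, $\alpha=Q(a)$, $\Hyp=\{C(a),D(a)\}$. We must show that $\Hyp$ is $\leq$-minimal among non-trivial hypotheses iff $\calK\not\models_\brave B(a)$, given $\calK\models_\brave A(a)$. The only way a singleton non-trivial hypothesis can yield $Q(a)$ is $\{D(a)\}$, which works iff $C(a)$ is derivable, i.e.\ iff $\calK\models_\brave B(a)$. We must also rule out the other singletons, namely $\{C(a)\}$ and singletons over the signature of $\calT$; since $D$ is fresh, none of these produces $D(a)$ and hence none derives $Q(a)$. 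So the \DP-hardness argument carries over directly.

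\emph{Hardness for \ar.} This is the main obstacle, since we need \PiP-hardness where the existence problem is \SigmaP-hard. We reduce from the complement of existence of non-trivial \ar-hypotheses, using the $\exists\forall$-QBF construction of \cref{thm:elbot-exist-nt-cc}: $\Psi$ is false iff no non-trivial hypothesis exists. There, non-trivial hypotheses correspond to $Y$-assignments of size $|Y|$. We add a fresh concept $F$, an axiom $F\sqcap G_1\sqcap\dots\sqcap G_k\subsum A_\psi$ with fresh $G_i$, and candidate $\Hyp=\{F(m),G_1(m),\dots,G_k(m)\}$ where $k\ge|Y|$. This $\Hyp$ is always a non-trivial \ar-hypothesis, and it is $\leq$-minimal iff no non-trivial hypothesis of size at most $k$ exists. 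We choose $k$ so that any QBF-induced hypothesis (size $|Y|$) is strictly smaller, and ensure that the new names give no cheaper alternatives.

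The delicate step is verifying that no non-trivial hypothesis mixes the fresh names with the $A_\ell$ assertions to beat $|\Hyp|$. This requires every proper subset of $\Hyp$ to be useless, which holds because all of $\Hyp$ is needed for the new axiom. It also requires that every hypothesis avoiding $F,G_i$ contains a full $Y$-assignment, which is the content of \cref{claim:nt-verification-el}.

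For $\Sigma$-restriction, alone or combined with non-triviality, hardness follows via \cref{obs:sigma}, and membership follows by the same guess-and-check arguments restricted to $\Sigma$.
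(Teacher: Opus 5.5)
Your proposal follows the paper's proof closely. Membership is argued the same way: for \brave, an \NP\ check plus a \coNP\ check that guesses a smaller non-trivial $\Hyp'$ with a consistent support; for \ar, the oracle version giving \PiP. \DP-hardness reuses the same $\{C(a),D(a)\}$ construction, \PiP-hardness under \ar\ uses the same padding with fresh concept names, and $\Sigma$-restriction is handled via \cref{obs:sigma}. The one structural difference is under \ar. You compose with the concrete QBF instance and take the size of its witnesses as the threshold. The paper instead pads an arbitrary instance of the existence problem for non-trivial \ar-hypotheses with $N+1$ fresh names, where $N=|\calA_\Sigma|$ bounds the size of any hypothesis over the signature of $\calK$. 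The generic bound needs no knowledge of what the witnesses look like.

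That matters here, because your threshold is off by one. The witness from \cref{claim:nt-verification-el} is $\{C(m)\}\cup\{A_\ell(m)\mid \ell\in\theta_Y\}$, which has size $|Y|+1$, not $|Y|$. With $k=|Y|$ your $\Hyp$ also has size $|Y|+1$, so it can stay $\leq$-minimal although $\Psi$ is true. For example, take $Y=\{y\}$, $Z=\{z\}$ and $\psi=(y\wedge z)\vee(y\wedge\bar z)$. You need $k\geq|Y|+1$.

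Your justification of the $\Psi$-false direction is also misstated. What it needs is not that hypotheses avoiding $F,G_i$ contain a full $Y$-assignment. It needs that $\calK$ has no non-trivial \ar-hypothesis at all, including ones that are not conflict-confining. That is \cref{thm:elbot-exist-ar-nt}, not literally \cref{claim:nt-verification-el}, which is about conflict-confining hypotheses.

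In the \brave\ step, the claim that $\{D(a)\}$ works iff $\calK\models_\brave B(a)$ is too quick. In fact $\{D(a)\}$ is a \brave-hypothesis iff some \emph{single} repair of $\calK$ entails both $A(a)$ and $B(a)$. For arbitrary $\calK$ this differs from $\calK\models_\brave B(a)$. Take $\calT=\{A'\sqcap B'\sqsubseteq\bot,\ A'\sqsubseteq A,\ B'\sqsubseteq B\}$ and $\calA=\{A'(a),B'(a)\}$. This is a no-instance of the source problem, yet $\{C(a),D(a)\}$ is a $\leq$-minimal non-trivial \brave-hypothesis. The step is sound once the source instance is a disjoint union of two KBs over disjoint signatures, as set up in the proof of \cref{thm:cc-nt-verification-el}, since then repairs are unions of repairs of the two parts. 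The paper's own claim proof takes the same shortcut, so you should state this assumption explicitly.
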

\begin{proof}
	Brave semantics:
	For membership, one can guess a set $\Hyp'$ of assertions over the signature of $\calK$ such that $\Hyp'$ is a counter-witness to $\Hyp$ being a $\leq$-minimal $\calS$-hypothesis for $\tup{\calK, \alpha}$.
	Precisely, $\Hyp$ is a $\leq$-minimal non-trivial \brave-hypothesis iff
  \begin{enumerate*}[label=(\arabic*)]
    \item it is a non-trivial \brave-hypothesis, and
    \item there is no $\Hyp'$ over the signature of $\calK$ with $\alpha\not\in \Hyp'$ and $|\Hyp'| < |\Hyp|$ s.t.\ $\tup{\calT,\calA\cup\Hyp'}\models_\brave \alpha$.
  \end{enumerate*}
	Here, (1) can be verified in \NP, while for (2), we guess as a counter-example both a set of assertions $\Hyp'$ of size at most $|\Hyp|-1$ over the signature of \calK as well as a candidate repair $\calR \subseteq \calA \cup \Hyp'$.
  We then verify that $\alpha\not\in\Hyp'$, \calR is a repair, and $\tup{\calT, \calR} \models \alpha$, which can all be checked in polynomial time.
	This shows that (2) can be checked in \coNP, yielding \DP-membership in total.
	The membership also applies when requiring \SignatureRestriction instead of (or in addition to) non-triviality, as verification of (non-trivial) \SignatureRestricted \brave-hypotheses is again in \NP, and instead of (or in addition to) checking non-triviality $\Hyp'$, we can check that $\Hyp'$ only uses names from $\Sigma$ in polynomial time.
	
	For hardness, we reconsider the reduction employed in the case of verification of $\subseteq$-minimal \brave-hypotheses (\Cref{cor:verification-nt-min-el}).
	We observe that the considered hypothesis $\Hyp$ is $\subseteq$-minimal iff $\Hyp$ is $\leq$-minimal. 
	The reduction already applies for the non-trivial case as $\Hyp$ is non-trivial.
  The case of \SignatureRestriction can again be handled using \cref{obs:sigma}.
	
	\ar semantics: The $\PiP$-membership can be shown similarly to \DP-membership for \brave semantics above.
  Precisely, $\Hyp$ is a non-trivial $\leq$-minimal \ar-hypothesis iff
  \begin{enumerate*}[label=(\arabic*)]
    \item it is a non-trivial \ar-hypothesis, and
    \item there is no $\Hyp'$ over the signature of $\calK$ with $\alpha\not\in \Hyp'$ and $|\Hyp'| < |\Hyp|$ s.t.\ $\tup{\calT,\calA\cup\Hyp'}\models_\ar \alpha$.
  \end{enumerate*}
  Here, (1) can be checked in \coNP, while $\Hyp'$ can be guessed as before.
  The main difference is that we do not guess a candidate repair, and instead check $\tup{\calT, \calA \cup \Hyp'} \models_\ar \alpha$ using a \coNP oracle.
  The cases obtained by replacing non-triviality by \SignatureRestriction or combining both can be handled in the same way as above for \brave semantics.

	For \PiP-hardness, we reduce from the complement of the existence problem for non-trivial \ar-hypotheses, which is $\SigmaP$-complete.
	Let $\tup{\calK,\alpha}$ be an \ar-abduction problem over some signature $\Sigma_{\calK}$ and $\alpha\dfn C(m)$.
	We consider $\calA_\Sigma\dfn \{A(a)\mid A,a\in\Sigma_\calK\}\cup \{r(a,b)\mid r,a,b\in\Sigma_\calK\}$ as the set of all the ABox assertions over $\Sigma_\calK$.
	It holds that $\Hyp\subseteq\calA_\Sigma$ for any \ar-hypothesis $\Hyp$ for $\tup{\calK, \alpha}$.
	As a result, any such hypothesis for $\alpha$ must have its size bounded by $N=|\calA_\Sigma|$.
	To complete the reduction, we add fresh concept names $\{X_i \mid i\leq N+1, X_i\not\in\Sigma_\calK\}$, and define the new KB $\calK'\dfn \tup{\calT',\calA}$, where $\calT' \dfn \calT\cup\{ \bigsqcap_{1 \leq i \leq N+1} X_i \subsum C\}$.
%
%
%
	Then, we use the observation $\alpha\dfn C(m)$ as before and define $\Hyp\dfn \{\,X_i(m)\mid i\leq N+1\,\}$. 

	For correctness, we observe that $\Hyp$ is a non-trivial \ar-hypothesis for $\tup{\calK', C(m)}$ of size $N+1$, whereas any non-trivial \ar-hypothesis for $\tup{\calK, \alpha}$ has size $\leq N$ (since $|\calA_\Sigma|=N$).
	This implies the following equivalences:
  $\Hyp$ is not a $\leq$-minimal \ar-hypothesis for $\tup{\calK', C(m)}$ iff there is a non-trivial \ar-hypothesis for $\tup{\calK', \alpha}$ of size $\leq N$ iff there is a non-trivial \ar-hypothesis for $\tup{\calK, \alpha}$.
	This yields the mentioned $\PiP$-hardness for the \ar semantics.
  The case of \SignatureRestriction can again be handled using \cref{obs:sigma}.
\end{proof}
%

Considering conflict-confinement with $\leq$-minimality, the complexity under \ar semantics stays the same as for $\leq$-minimality alone, which follows from \cref{lem:ar-triv}.
In contrast, the complexity again increases for \brave semantics.
Similar to before, this comes from the fact that we do not have the trivial size bound of $1$ in this case, and is closely related to existence of conflict-confining \brave-hypotheses.

\begin{theorem}\label{cor:verification-cc-min-el}
	Verification of $\leq$-minimal conflict-confining $\calS$-hypotheses is 
	\begin{enumerate*}[label=(\roman*)]
    \item \PiP-complete for $\calS=\brave$, and
	  \item $\coNP$-complete for $\calS=\ar$.
  \end{enumerate*}
\end{theorem}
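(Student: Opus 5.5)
For \brave semantics, membership in \PiP follows from the characterisation that \Hyp is a $\leq$-minimal conflict-confining \brave-hypothesis iff
\begin{enumerate*}[label=(\arabic*)]
  \item $\tup{\calT,\calA\cup\Hyp}\models_\brave\alpha$,
  \item \Hyp is conflict-confining, and
  \item there is no conflict-confining \brave-hypothesis $\Hyp'$ over the signature of $\tup{\calK,\alpha}$ with $|\Hyp'|<|\Hyp|$.
\end{enumerate*}
Condition (1) is in \NP and (2) is in \coNP by \cref{lem:cc-compl}. The complement of (3) is decided in \SigmaP: we guess $\Hyp'$ together with a candidate repair $\calR\subseteq\calA\cup\Hyp'$, check $|\Hyp'|<|\Hyp|$, that \calR is a repair, and that $\tup{\calT,\calR}\models\alpha$, all in polynomial time. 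We then verify conflict-confinement of $\Hyp'$ with a single \NP-oracle call. Hence (3) is in \PiP, and the conjunction with (1) and (2) remains in \PiP.

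For \PiP-hardness we reduce from the complement of the existence problem for conflict-confining \brave-hypotheses, which is \SigmaP-complete (\Cref{fig:results}). The construction mirrors the \PiP-hardness proof of \cref{cor:verification-nt-min-card-el}. Given $\tup{\calK,\alpha}$ with $\alpha=C(m)$, let $N=|\calA_\Sigma|$ bound the size of any hypothesis over the signature $\Sigma_\calK$. We add fresh concept names $X_1,\dots,X_{N+1}$, set $\calT'\dfn\calT\cup\{\bigsqcap_{i} X_i\sqsubseteq C\}$, and take $\Hyp\dfn\{X_i(m)\mid i\leq N+1\}$.

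Three things then need to be checked. First, \Hyp is a \brave-hypothesis for $\tup{\calK',\alpha}$: since the $X_i$ are fresh and occur in no conflict, they are contained in every repair of $\tup{\calT',\calA\cup\Hyp}$, so $C(m)$ is entailed by any repair. Second, \Hyp is conflict-confining, because fresh names cannot take part in a minimal inconsistent subset, the $X_i$ appearing only on the left of an axiom whose right-hand side is $C$. This is the main obstacle: adding $C(m)$ could indeed create new conflicts through $\calT$. However, that would require $C(m)$ itself to create a conflict together with repair assertions. To rule this out, we would replace $\alpha$ by a fresh observation $Q(m)$ with $C\sqsubseteq Q$, so that deriving $Q$ never creates conflicts beyond those already caused by $C(m)$, and argue exactly as in the original instance. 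Third, any hypothesis strictly smaller than \Hyp either omits some $X_i$, and then works only through the original names, or is a hypothesis of size at most $N$ restricted to $\Sigma_\calK$. Dropping the fresh assertions preserves conflict-confinement and \brave-entailment, by \cref{obs:downward} and because the fresh names are irrelevant. So \Hyp fails to be $\leq$-minimal iff $\tup{\calK,\alpha}$ has a conflict-confining \brave-hypothesis.

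For \ar semantics, membership is immediate from \cref{lem:ar-triv}: an \ar-hypothesis exists iff $\{\alpha\}$ is a conflict-confining one. Hence the $\leq$-minimal conflict-confining \ar-hypotheses are exactly the conflict-confining \ar-hypotheses of size $1$, and the empty set is never a hypothesis because $\calK\not\models_\ar\alpha$. Verification therefore amounts to checking $|\Hyp|=1$, \ar-entailment (in \coNP) and conflict-confinement (in \coNP). Hardness follows from the reduction from \ar-entailment in the proof of \cref{thm:verification-subset-min-el}: the hypothesis $\{X(a)\}$ there is conflict-confining, has size $1$, and is therefore $\leq$-minimal exactly when it is an \ar-hypothesis.
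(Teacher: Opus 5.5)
\textbf{Verdict: there is a genuine gap, in the \PiP-hardness for \brave semantics.} Your membership arguments for both semantics, and your whole treatment of \ar semantics, are correct and match the paper.

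\emph{Why your construction fails.} Your first version of the ``Second'' step is false, as you notice yourself. Since $\Hyp$ derives the observation in every repair containing it, new conflicts can arise. For instance, instantiating your construction with \cref{ex:brave-cc}, where the observation is $A(a)$, the set $\{X_1(a),\dots,X_{N+1}(a),B(a)\}$ becomes a new conflict.

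Your proposed fix does not work either. If $Q$ is fresh and occurs only on right-hand sides ($C\sqsubseteq Q$ and the new axiom), then $\{Q(m)\}$ is a conflict-confining \brave-hypothesis of size $1$ for $\tup{\calK',Q(m)}$ in \emph{every} instance: nothing is derived from $Q$, so $Q(m)$ takes part in no conflict. Hence $\Hyp$, of size $N+1\geq 2$, is never $\leq$-minimal, and the reduction maps every input to a no-instance. If instead the new axiom still concludes $C$, the original problem with new conflicts remains, and $\{Q(m)\}$ is still a size-$1$ witness.

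This is no accident. \SigmaP-hardness of the source problem rests on the trivial hypothesis $\{\alpha\}$ \emph{not} being conflict-confining. Making the observation harmless erases exactly that.

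\emph{The missing idea.} You need a guard that makes $\Hyp$ harmless while keeping $\{\alpha\}$ harmful, and such a guard is not available for an arbitrary instance. The paper therefore reduces from the concrete instances built in \cref{thm:elbot-exist-brave-cc}. There, the only axiom with $C$ on the left is $C\sqcap B_d\sqsubseteq\bot$, and $\{C_d(m),B_d(m)\}$ is already a conflict. The paper uses the guarded axiom $C_d\sqcap\bigsqcap_{i\leq N+1}X_i\sqsubseteq C$, which gives three things:
\begin{itemize}
\item Any subset of $\calA\cup\Hyp$ whose inconsistency depends on some $X_i(m)$ must contain both $C_d(m)$ and $B_d(m)$. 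So it is not a minimal conflict, and $\Hyp$ is conflict-confining.
\item A repair containing $C_d(m)$ and $\Hyp$ witnesses \brave-entailment of $C(m)$.
\item The trivial hypothesis $\{C(m)\}$ still creates the new conflict $\{C(m),B_d(m)\}$, so hardness is preserved. This is exactly the trick of \cref{ex:brave-cc}.
\end{itemize}
The specific structure of these instances also yields the bound $N=|Y|$, since \Cref{claim:el-cc-brave} produces a witness of size exactly $|Y|$ whenever one exists.
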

  \begin{proof}[Proof idea]
	For \ar-semantics, \cref{lem:ar-triv} implies that a given \ar-hypothesis is $\leq$-minimal iff it is of size $1$.
  The complexity now readily follows from \coNP-completeness of \ar entailment. 	
	
  We now turn to \brave semantics.	
  For membership, we need to determine whether
  \begin{enumerate*}[label=(\arabic*)]
    \item $\tup{\calT,\calA\cup\Hyp}\models_\brave \alpha$,
    \item $\Hyp$~is conflict-confining, and
    \item there is no $\Hyp'$ over the signature of \calK with $|\Hyp'| < |\Hyp|$ that is conflict-confining for \calK and satisfies $\tup{\calT,\calA\cup\Hyp'} \models_\brave \alpha$.
  \end{enumerate*}
	Here, one can guess the ABox $\Hyp'$ as a counter-witness, and handle all the remaining checks needed in (1)--(3) via oracle calls.
	Consequently, we obtain membership in \PiP.
		
	For hardness, we reduce from the complement of the existence problem for conflict-confining \brave-hypotheses.
	To this aim, we extend and reuse the reduction for this case from our earlier work~\cite{HaakKMT2026}.
	The trick we employ here (details in the appendix) is to use candidate hypothesis $\Hyp$ of very large size following a similar idea as in the proof of \cref{cor:verification-nt-min-card-el} for the case of \ar semantics.
	However, this trick cannot be applied as directly: We know that there might exist a conflict-confining hypothesis even though the observation itself is not conflict-confining.
  In this case, \Hyp would also not be conflict-confining, interfering with the rest of the construction.
	Nevertheless, we invented a way to construct a hypothesis that avoids new conflicts \cite{HaakKMT2026}.
	We apply the same strategy to our considered hypothesis $\Hyp$ here, which requires slightly changing the newly added TBox axiom in the proof of \Cref{cor:verification-nt-min-card-el}.
	This yields the mentioned $\PiP$-hardness.
\end{proof}
%

%

Finally, we consider $\subseteq_c$-minimality. Recall that the combination of $\subseteq_c$-minimality and conflict-confining is not interesting, since conflict-confining hypotheses are always $\subseteq_c$-minimal. 
We consider the remaining cases.
%
Here, we have to determine whether a given non-trivial (or $\Sigma$-restricted) hypothesis is $\subseteq_c$-minimal among all \textit{such} hypotheses. 
Considering \ar semantics, although we know that the observation $\alpha$ has to be a conflict-confining \ar-hypothesis in order for there to exist any \ar-hypothesis by \cref{lem:ar-triv}, we do not know whether and how non-trivial ($\Sigma$-restricted) hypotheses create conflicts. 
Concretely, while it might appear counter-intuitive, \ar-hypotheses are not necessarily conflict-confining.
The following example highlights this observation. 
\begin{example}\label{ex:AR}
	Let $\calK\dfn \tup{\calT,\calA}$ where,
	\begin{align*}
		\calT \dfn \{ & A_1\sqcap B_1 \subsum \bot,\quad A_1\sqcap B_2 \subsum \bot,\quad A_2\sqcap B_1 \subsum \bot,\quad A_2\sqcap B_2\subsum \bot, \\ 
		& A_1\sqcap A_2 \subsum C, \quad B_1\sqcap B_2\subsum C \},\\
		\calA\dfn \{ & A_1(m), B_1(m)\}.
	\end{align*}
	Now, let $\alpha\dfn C(m)$ be our observation.
	Then, the ABox $\Hyp\dfn \{A_2(m), B_2(m)\}$ is not conflict-confining in~$\calK$ since, for example, $\Hyp$ itself is a new conflict in $\calK$.
	Nevertheless, $\tup{\calT,\calA\cup\Hyp} \models_\ar \alpha$, as the only repairs of $\tup{\calT,\calA\cup\Hyp}$ are $\{A_1(m),A_2(m)\}$ and $\{B_1(m),B_2(m)\}$.
  Further, any non-trivial \ar-hypothesis for $\tup{\calK, C(m)}$ is a superset of \Hyp, highlighting that non-trivial \ar-hypotheses might need to introduce new conflicts.
\end{example}

\begin{theorem}\label{thm:verification-nt-min-conf-el}
	Verification of $\subseteq_c$-minimal non-trivial $\calS$-hypotheses is
	$\PiP$-complete for $\calS\in\{\brave,\ar\}$.
	The complexity remains the same when we replace non-triviality by \SignatureRestriction or combine both.
\end{theorem}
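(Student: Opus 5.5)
For membership, I would give a direct guess-and-check argument. A given ABox $\Hyp$ fails to be a $\subseteq_c$-minimal non-trivial $\calS$-hypothesis exactly when one of two things happens. Either (a) $\Hyp$ is not a non-trivial $\calS$-hypothesis, or (b) there is a non-trivial $\calS$-hypothesis $\Hyp'$ over the signature of $\tup{\calK,\alpha}$ with $\Conf(\tup{\calT,\calA\cup\Hyp'}) \subsetneq \Conf(\tup{\calT,\calA\cup\Hyp})$.

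Condition (a) lies in \coNP, respectively \NP, for the two semantics. For (b), I guess $\Hyp'$, which has polynomial size because it is a subset of the assertions over the signature. I then check non-triviality of $\Hyp'$ in \Ptime\ and $\calS$-entailment with an \NP\ oracle. For the strict inclusion of conflict sets, I split it into two oracle queries. The first is a \coNP\ query: every conflict of $\calA\cup\Hyp'$ is a conflict of $\calA\cup\Hyp$, where a counterexample is a guessed subset that is a minimal inconsistent set and is checkable in \Ptime. The second is an \NP\ query: some conflict of $\calA\cup\Hyp$ is not a conflict of $\calA\cup\Hyp'$. Hence the complement of the problem is in $\SigmaP$, and the problem itself is in $\PiP$. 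Replacing non-triviality by $\Sigma$-restriction, or combining both, only changes the polynomial-time filter on $\Hyp'$, so the same bound holds.

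For hardness, I expect to reuse the earlier $\PiP$-hardness reductions for verification of $\subseteq_c$-minimal $\calS$-hypotheses (\Cref{fig:results}). For brave semantics, I would inspect that reduction and argue that the hypothesis it constructs does not contain the observation. I would also argue that every competing hypothesis which could witness non-minimality can be chosen non-trivial, so that restricting the competitors to non-trivial ones changes nothing. For \ar\ semantics, the base case alone is only \coNP, so the hardness must come from the interaction with non-triviality. Here I would reduce from the complement of $\exists\forall$-QBF validity, using the construction from the proof of \Cref{thm:elbot-exist-nt-cc} as a starting point. The gadget would be modelled on \Cref{ex:AR}: the given $\Hyp$ is a non-trivial \ar-hypothesis that deliberately introduces an extra conflict, built from fresh names such as $A_2,B_2$, that forces \ar-entailment. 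A non-trivial \ar-hypothesis with fewer conflicts should exist iff some assignment to $Y$ makes $\forall Z\,\psi$ true. Such a hypothesis is exactly a conflict-confining one encoding the $Y$-literals, as in \Cref{claim:nt-verification-el}, and its conflict set is $\Conf(\calK)\subsetneq\Conf(\tup{\calT,\calA\cup\Hyp})$. So $\Psi$ is false iff $\Hyp$ is $\subseteq_c$-minimal. The $\Sigma$-restricted variants then follow from \Cref{obs:sigma}.

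The main obstacle is the \ar\ hardness. I must ensure that the gadget hypothesis $\Hyp$ is really an \ar-hypothesis regardless of the truth of $\Psi$. I must also show that any non-trivial hypothesis whose conflicts are strictly included in those of $\Hyp$ necessarily yields a satisfying $Y$-assignment, and in particular cannot exploit the gadget's names partially. I would try first to make the gadget's conflicts pairwise incomparable with the QBF part, using disjoint fresh names as in \Cref{ex:AR}. That way any hypothesis avoiding the gadget's new conflicts must entail $A_\psi(m)$ through the QBF part alone.
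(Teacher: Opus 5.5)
\textbf{Same approach as the paper, with one failing step.} Your membership argument and both hardness routes follow the paper's proof. For \brave, your remark that competing witnesses can be taken non-trivial is right: in that reduction a witness must be conflict-confining, and the trivial $\{C(m)\}$ conflicts with $B_d(m)$. The step that fails as written is ``the $\Sigma$-restricted variants then follow from \Cref{obs:sigma}'', in the case where $\Sigma$-restriction \emph{replaces} non-triviality under \ar\ semantics. \Cref{obs:sigma} only transfers hardness for the same property. Verification of plain $\subseteq_c$-minimal \ar-hypotheses is only \coNP-complete (\Cref{fig:results}), whereas the $\PiP$-hardness of your reduction comes precisely from excluding the trivial hypothesis. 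You must simulate non-triviality by the signature: take $\Sigma$ to be all names of $\calK'$ except $A_\psi$. Since $m$ is the only individual, every $\Sigma$-restricted hypothesis is non-trivial. Conversely, $\Hyp_e$ and the witnesses $\{C(m)\}\cup\{A_\ell(m)\mid \ell\in\theta_Y\}$ avoid $A_\psi$, and assertions over names not occurring in $\calT'$ can be dropped from any witness without affecting entailment or conflicts. Hence the two verification questions coincide. The paper's proof passes over this case in the same way. For \brave, and for the combination of both properties, \Cref{obs:sigma} does suffice.

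\textbf{On the obstacle you flag.} A witness $\Hyp'$ with $\Conf(\tup{\calT',\calA'\cup\Hyp'})\subsetneq\Conf(\tup{\calT',\calA'\cup\Hyp_e})$ need not be conflict-confining; it may add $A_2(m)$ alone. So ``exactly a conflict-confining one'' should read ``without loss of generality''. The argument closes as follows. First, $\Hyp'$ cannot contain both $A_2(m)$ and $B_2(m)$, since it would then have all three gadget conflicts. Nor can it contain both $A_y(m)$ and $A_{\bar y}(m)$, since that conflict does not occur for $\Hyp_e$. Next, no conflict mixes gadget names with QBF names, and $A_\psi$ never occurs on a left-hand side. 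Hence the repairs are exactly the unions of a gadget-part repair and a QBF-part repair. So $A_\psi(m)$ is \ar-entailed iff one of the two parts derives it in all of its own repairs. The gadget part cannot, because in some of its repairs neither $A_1\sqcap A_2$ nor $B_1\sqcap B_2$ holds. Therefore the QBF part of $\Hyp'$ is a non-trivial conflict-confining \ar-hypothesis for $\tup{\calK,\alpha}$, and \Cref{claim:nt-verification-el} gives that $\Psi$ is true. Also, let the gadget axioms conclude $A_\psi$ rather than the $C$ of \Cref{ex:AR}, which clashes with the $C$ already used in the QBF TBox.
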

\begin{proof}
	We begin with \brave semantics.
	For membership, one can guess as a counter-witness a set of assertions $\Hyp'$ over the signature of $\tup{\calK, \alpha}$ and verify that $\Hyp'$ is a non-trivial \brave-hypothesis for $\alpha$ and $\Hyp'$ admits fewer new conflicts than $\Hyp$ does.
	Precisely, $\Hyp$ is a $\subseteq_c$-minimal non-trivial \brave-hypothesis for $\tup{\calK, \alpha}$ iff 
	\begin{enumerate*}[label=(\arabic*)]
    \item \Hyp is a non-trivial \brave-hypothesis for $\tup{\calK, \alpha}$, and
    \item there is no ABox $\Hyp'$ over the signature of $\tup{\calK, \alpha}$ with $\alpha\not\in\Hyp'$ and $\tup{\calT,\calA\cup\Hyp'}\models_\brave\alpha$ s.t.\ $\Conf(\tup{\calT,\calA\cup\Hyp'}) \subsetneq \Conf(\tup{\calT,\calA\cup\Hyp})$. 
  \end{enumerate*}
  For (2), we can guess a counter-witness $\Hyp'$, while the remaining checks can be handled via oracle calls.
  To compare the sets of conflicts, we can check that every conflict in the first set is contained in the second and that there is a conflict in the second set that does not occur in the first, which can be checked in \coNP and \NP, respectively.
	This results in $\PiP$-membership.
	For \SignatureRestriction, we check whether $\Hyp$ and $\Hyp'$ only use names over $\Sigma$ instead of (or in addition to) checking non-triviality.

	For hardness we reuse the same reduction as for the case of verification of $\subseteq_c$-minimal hypotheses (\Cref{fig:results}).
	Here, we observe that the reduction employs a non-trivial hypothesis (see appendix for details).
  Adding \SignatureRestriction can again be handled using \cref{obs:sigma}.
	
	We now turn to \ar semantics.
	Membership follows by the same argument as for \brave semantics, since the verification of non-trivial \ar-hypotheses and the additional checks can again be handled via oracle calls after guessing a	counter-witness $\Hyp'$.
	For hardness, we incorporate the idea  from \Cref{ex:AR} into	our reduction for the existence of non-trivial conflict-confining \ar-hypotheses from \Cref{thm:elbot-exist-nt-cc}.
	Let $\tup{\Kmc,\alpha}$ be the \ar-abduction problem constructed from an $\exists\forall$-QBF $\Psi$
	as in the proof for \Cref{thm:elbot-exist-nt-cc}.
	We let $A_1,A_2,B_1,B_2$ denote fresh concepts and consider the KB $\calK_e\dfn \tup{\calT_e,\calA_e}$ as constructed in \Cref{ex:AR}.
	To complete the reduction, we set $\calK'\dfn \tup{\calT',\calA'}$ where
	$\calT'\dfn \calT\cup \calT_e$ and $\calA'\dfn \calA\cup \calA_e$ and use the same observation $\alpha$.
	Finally, we take the hypothesis $\Hyp_e\dfn \{A_2(m),B_2(m)\}$ as in the example, which creates precisely three new conflicts $\{\{A_1(m),B_2(m)\}$, $\{A_2(m),B_1(m)\}$, and $\Hyp_e$.
	Then, the correctness follows from the following claim.
	\begin{restatable}{claim}{ClaimVerCminSubsetNT}
		$\Psi$ is true iff $\Hyp_e$ is not a non-trivial $\subseteq_c$-minimal \ar-hypothesis for $\tup{\calK',\alpha}$.
	\end{restatable}
\begin{claimproof}
	Recall that $\Psi$ is true iff $\tup{\calK,\alpha}$ admits a non-trivial conflict-confining \ar-hypothesis (\Cref{claim:nt-verification-el}).
	We observe that $\tup{\calK,\alpha}$ admits a non-trivial conflict-confining \ar-hypothesis iff $\tup{\calK',\alpha}$ admits a non-trivial conflict-confining \ar-hypothesis (the same one) iff $\Hyp_e$ is not a $\subseteq_c$-minimal non-trivial \ar-hypothesis for $\tup{\calK',\alpha}$.

	Indeed, if $\Psi$ is false, there is no \ar-hypothesis over the signature of \calK that entails $\alpha$.
	Hence, the only way to obtain the \ar-entailment of $\alpha$ is via using new axioms in $\calK'$. 
	This renders $\Hyp_e$ to be the only \ar-hypothesis for $\alpha$ in $\calK'$, and hence makes it trivially $\subseteq_c$-minimal.
\end{claimproof}%
\end{proof}

The membership in \Cref{thm:verification-nt-min-conf-el} only applies to $\subseteq_c$-minimality and does not seem to extend to $\leq_c$-minimality.
While in the case of the former, the necessary comparison of the sets of conflicts introduced by \Hyp and the guessed ABox $\Hyp'$ can be handled using an \NP oracle,
this is not the case when we need to compare the cardinalities of these sets.
Indeed, the following example illustrates that a relatively small hypothesis can trigger exponentially many new conflicts, hinting at potential $\#\Ptime$-hardness of counting the number of conflicts.
How this affects the complexity for $\leq_c$-minimality is left for future work.
\begin{example}
	Let $\calK\dfn \tup{\calT,\calA}$ where, 
	\begin{align*}
	  \calT &\dfn  \{ C \subsum D \} \cup \{\, A_i \subsum X_i, B_i \subsum X_i \mid i \leq n \,\} \cup \{ D \sqcap X_1 \sqcap \dots \sqcap X_n \subsum \bot\}, \\
	  \calA &\dfn \{\, A_i(m), B_i(m) \mid i\leq n \,\}.
	\end{align*}
	Now, let  $\alpha \dfn D(m)$ be our observation.
	Then, the ABox $\Hyp\dfn \{C(m)\}$ is a \brave hypothesis for $\tup{\calK, \alpha}$ of size one.
	Nevertheless, $\Hyp$ triggers exponentially many new conflicts in $\calK$, each of the form $\{C(m), Y_1(m), Y_2(m), \dots, Y_n(m)\}$ where $Y_i \in \{A_i,B_i\}$ for all $i$.
	Note that $\calK$ is a consistent KB to begin with.
\end{example}

\section{Conclusion and Outlook}
We observe that combining basic properties (excluding minimality criteria) and potentially subset-minimality does not increase the complexity for either existence or verification of \calS-hypotheses in \ELbot:
The complexity for all considered cases is governed by the highest complexity among the individual properties involved.
In contrast, we often see an increase in the complexity, when combining cardinality-minimality or conflict-minimality with other properties.

Observe that the hardness proofs for some of our results (e.g., Theorems~\ref{thm:elbot-exist-nt-cc}, \ref{cor:verification-nt-min-el} and~\ref{thm:verification-subset-min-el}) do not use role assertions.
Consequently, several of our results already hold for propositional Horn logic.
Nevertheless, we leave a detailed complexity characterization for this fragment as future work.

It will be interesting to complete the picture by also extending the study to conflict-minimality when considering the number of conflicts (instead of subset-minimality), and closing the gap for the case of $\subseteq$-minimal conflict-confining \ar-hypotheses, which both remain open for now.

Further, we previously
also obtained complexity results for ABox abduction in the description logic
\DLLite~\cite{HaakKMT2026} .
Preliminary work for combinations of properties in this setting shows that in some cases the complexity is again governed by the highest complexity among the individual properties, but there is a notable exception.
It turns out that in case of \DLLite, conflict-confining \ar-hypotheses have a much simpler structure:
Every conflict-confining \ar-hypothesis contains a \emph{singleton} \ar-hypothesis, which is of course again conflict-confining.
This often makes abduction problems \emph{easier} when adding conflict-confinement in addition to some other property (or minimality condition) compared to the same problem without requiring conflict-confinement.
It will be interesting to fully understand the complexity landscape for combinations of properties in this setting.

It should be noted that in case of expressive DLs such as \ALC and its extensions, the complexity of the considered abduction problems seems significantly less interesting.
The reason is that for these DLs, entailment is often already \ExpTime-complete.
Consequently, the complexity of abduction tends to be dominated by the complexity of entailment.

Beside considering different DLs, another important direction for the future is to consider other repair semantics.
Most prominently, IAR semantics are currently missing from our study.
While there has been previous work on abduction under IAR semantics, it was more practical in nature, obtaining an upper bound for one specific case in \DLLite.
The complexity landscape for different (combinations of) properties remains widely unexplored.

\bibliography{main}


\section{Technical Appendix}
For constructions referenced in the main paper, we provide full proofs of the corresponding results here.

The following example from our previous paper~\cite{HaakKMT2026}
will be useful in some of our reductions
(e.g., when translating \SignatureRestriction to conflict-confinement).

\begin{example}\label{ex:brave-cc}
	Let $\calK = \tup{\calT, \calA}$, where
	\begin{align*}
		\calT &= \{\, A \sqcap B \sqsubseteq \bot, B \sqcap C \sqsubseteq \bot, C \sqcap D \sqsubseteq A \,\}, \text{ and} \\
		\calA &= \{\, B(a), C(a) \,\},
	\end{align*}
	and let $\alpha = A(a)$. 
	It is easy to see that $\{\alpha\}$ is a \brave-hypothesis for $\tup{\calK, \alpha}$, but results in a new conflict $\{A(a),B(a)\}$ of $\tup{\calT,\calA\cup\{\alpha\}}$.
	Hence, $\{\alpha\}$ is not conflict-confining in $\calK$.
	However, $\calH\dfn \{D(a)\}$ entails $A(a)$ in the repair $\{C(a), D(a)\}$ and is consistent with all repairs of \calK. 
	Intuitively, the only repair of $\calA\cup\calH$ where $\alpha$ is entailed, is the one that already got rid of the conflict with $\alpha$.
\end{example}

\subsection{Constructions used for \Cref{thm:elbot-exist-nt-cc}}\label{sec:cc}

\begin{theorem}\label{thm:elbot-exist-brave-cc}
  The existence problem for conflict-confining \brave-hypotheses is \SigmaP-hard.
\end{theorem}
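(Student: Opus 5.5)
We reduce from the validity problem for $\exists\forall$-QBFs $\Psi = \exists Y\forall Z\,\psi$ with $\psi$ in DNF. The construction has two layers. The first is a \brave-entailment gadget, so that the universal quantifier is pushed into the \coNP part hidden in conflict-confinement. The second is the trick of \Cref{ex:brave-cc}, so that a hypothesis can bravely entail $\alpha$ without creating new conflicts. The hypothesis will be an assignment to $Y$: for each $y \in Y$ it contains one of $A_y(m)$ and $A_{\bar y}(m)$, with $A_y \sqcap A_{\bar y} \sqsubseteq \bot$. Brave entailment of the observation will be easy to obtain from such an assignment. The real content is that the hypothesis must not create a new conflict, and we will arrange that this happens exactly when $\forall Z\,\psi$ fails under the assignment. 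So we encode the negation: $\Psi$ is true iff the counter-formula $\forall Y\exists Z\,\neg\psi$ is false. Equivalently, we reduce from the complement form and take care that satisfaction of $\psi$ corresponds to \emph{absence} of a new conflict.

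Concretely, let $m$ be an individual. The ABox contains $A_z(m)$ and $A_{\bar z}(m)$ for every $z \in Z$, with $A_z \sqcap A_{\bar z} \sqsubseteq \bot$. Each repair then picks a $Z$-assignment, and the conflicts $\{A_z(m),A_{\bar z}(m)\}$ are already present in $\calK$. For each term $t \in \psi$ we add $\bigsqcap_{\ell\in t} A_\ell \sqsubseteq S$, so $S$ marks that $\psi$ is satisfied. We add a further concept $F$ with $F(m)\in\calA$, and the axiom $F \sqcap \bigsqcap_{\ell} \dots$; more simply, we make the new conflict arise exactly when some repair, together with $\Hyp$, fails to derive $S$. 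Since failure of $S$ is not monotone, we cannot express it directly in \ELbot. We therefore dualise: use $\neg\psi$ in DNF, so that each of its terms triggers $\bot$ together with an ABox marker $G(m)$ that is consistent on its own. The conflict $\{G(m)\}\cup\Hyp_Y\cup\{\text{$Z$-literals}\}$ then exists iff some $Z$-assignment falsifies $\psi$ under $\Hyp_Y$. To obtain brave entailment of $\alpha = Q(m)$, we add $G \sqcap \bigsqcap$-free axioms $A_y \sqsubseteq P_y$, $A_{\bar y}\sqsubseteq P_y$ and $\bigsqcap_y P_y \sqsubseteq Q$. Any full $Y$-assignment then bravely (indeed classically) entails $Q(m)$.

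Correctness has three parts. (i) If a $Y$-assignment $\Hyp_Y$ witnesses $\Psi$, then every $Z$-choice satisfies $\psi$, so no term of $\neg\psi$ fires. The only conflicts of $\calA\cup\Hyp_Y$ are therefore the old $Z$-conflicts, $\Hyp_Y$ is conflict-confining, and it entails $Q(m)$. (ii) Conversely, let $\Hyp$ be a conflict-confining \brave-hypothesis. Any new concept assertion other than $Y$-literals must be prevented from helping. We achieve this by arranging, as in \Cref{ex:brave-cc}, that $Q(m)$ itself and the intermediate names $P_y, S$ clash with $G(m)$; the corresponding axioms are $Q\sqcap G\sqsubseteq\bot$ and $P_y \sqcap G\sqsubseteq\bot$, with the $Y$-path routed through a copy that avoids $G$. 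Then $\Hyp$ must contain, for each $y$, at least one of $A_y(m)$ and $A_{\bar y}(m)$. It cannot contain both, since that would create the new conflict $\{A_y(m),A_{\bar y}(m)\}$. Hence $\Hyp$ restricted to $Y$ is an assignment. (iii) Conflict-confinement then forbids every falsifying $Z$-extension, so $\Psi$ is true.

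The main obstacle is step (ii): ruling out ``cheating'' hypotheses that add the observation, the intermediate concepts, or the $Z$-literals directly. I would handle this with the \Cref{ex:brave-cc} pattern. Every shortcut concept is made inconsistent with an original ABox assertion, and the legitimate path derives $Q$ only within a repair that drops that assertion. Adding a shortcut would then create a new conflict, while the legitimate $Y$-literals do not. Adding extra $Z$-literals would duplicate existing conflicts only if we also ensure $G$-conflicts via them are already present. I would avoid this by letting the $\neg\psi$ conflicts use separate copies $A'_z$, derived from $A_z$, so that any direct addition falls into a new minimal conflict. Verifying the minimality of all conflicts in $\calK$ is then routine.
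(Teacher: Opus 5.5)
Your overall plan matches the paper's. Hypotheses are $Y$-assignments, brave entailment is the easy half, the universal quantifier is absorbed by the \coNP{} check inside conflict-confinement, and shortcuts are blocked via \Cref{ex:brave-cc}. But the construction you write down fails, and it lacks the two ingredients that make the paper's reduction work. First, your legitimate path $A_y\sqsubseteq P_y$, $A_{\bar y}\sqsubseteq P_y$, $\bigsqcap_y P_y\sqsubseteq Q$ is unguarded; you even say $Q(m)$ follows ``classically'' from $\Hyp_Y$. Combine this with your blocking axioms $P_y\sqcap G\sqsubseteq\bot$, $Q\sqcap G\sqsubseteq\bot$ and with $G(m)\in\calA$. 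Then $\{G(m),A_y(m)\}$ (or $\{G(m)\}\cup\Hyp_Y$) is a new conflict for \emph{every} $Y$-assignment, so step (i) fails on all instances. Routing through ``a copy that avoids $G$'' does not help, because the copy must still end in $Q$. The pattern of \Cref{ex:brave-cc} needs a \emph{guard}: an assertion of $\calA$ disjoint from the blocker that is a conjunct of every legitimate derivation step. In the paper these are $C_d(m)$ and $B_d(m)$, with $C_d\sqcap B_d\sqsubseteq\bot$, axioms $C_d\sqcap A_y\sqsubseteq V_y$, $C_d\sqcap A_\ell\sqsubseteq A_c$, and so on, and $B_d$ disjoint from $C$, $A_\varphi$, $A_c$, $V_y$. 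Then the $Y$-literals derive something only in repairs that have already dropped $B_d(m)$. Asserting any intermediate concept directly, by contrast, creates a new conflict with $B_d(m)$.

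Second, the formula part does not work as written. Since $\psi$ is in DNF, $\neg\psi$ is a CNF. Converting it to DNF is exponential, and starting from a CNF matrix instead puts $\exists Y\forall Z\,\psi$ in \NP. You need clause concepts ($A_\ell\sqsubseteq A_c$ for $\ell\in c$) and one conjunction over all of them; this is what the paper does with $A_c$ and $A_\varphi$, where its $\varphi$ is your $\neg\psi$. More seriously, step (iii) is false. If some $Z$-assignment falsifies $\psi$ on its own, the corresponding formula conflict already lies in $\calA$, so it is an old conflict that conflict-confinement cannot see. Take $Y=\{y\}$, $Z=\{z\}$, $\psi=\bar z$, so $\Psi$ is false. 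Your encoding puts $\{G(m),A_z(m)\}$ in $\Conf(\calK)$, and even after adding a guard, $\{A_y(m)\}$ is a conflict-confining \brave-hypothesis. The paper excludes this by requiring the whole $Y$-assignment in the formula conflict, through the conjunct $\bigsqcap_{y\in Y}V_y$ in $C_d\sqcap\bigsqcap_{y}V_y\sqcap\bigsqcap_{c}A_c\sqsubseteq A_\varphi$. With it, $A_\varphi$ is underivable in $\calK$, and every repair of $\calK$ containing $C_d(m)$ contains $A_{\bar\varphi}(m)$ and a full $Z$-assignment. So $\Hyp_Y$ adds a conflict (via $A_\varphi\sqcap A_{\bar\varphi}\sqsubseteq\bot$) exactly when some $Z$-assignment, together with the assignment encoded by $\Hyp_Y$, falsifies $\psi$. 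Your copies $A'_z$ address a non-issue: the $Z$-literals are already in $\calA$, so adding them to a hypothesis changes nothing.
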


\begin{proof}
We reduce from non-validity of $\forall\exists$-QBFs to existence of conflict-confining \brave-hypotheses.
Let $\Phi\dfn \forall Y\exists Z.\varphi$, where $\varphi$ is a CNF represented as a set of clauses, where clauses are sets of literals.
Let $X \coloneqq Y \cup Z$.
We construct the KB $\calK = \tup{\calT, \calA}$, using concept names $N = \{A_x, A_{\bar x} \mid x \in X\} \cup \{V_y \mid y \in Y\} \cup \{A_c \mid c \in \varphi\} \cup \{A_{\varphi}, A_{\bar\varphi}, C\}$. 
Intuitively, our encoding is based on a translation for the setting of \SignatureRestriction, but we use conflict-confinement to encode this restriction instead by using the idea from \cref{ex:brave-cc}. 
Precisely, we ensure that any hypothesis that uses symbols outside the intended signature $\Sigma = \{\, A_y, A_{\bar y} \mid y \in Y \,\} \cup \{m\}$ introduces new conflicts, and is thus not conflict-confining.
For this, we use additional concept names $C_d$ and $B_d$.

Formally, we define
\begin{align*}
	\calT \coloneqq {} & \{\, C_d\sqcap \bigsqcap_{y\in Y} V_y \sqcap A_{\bar \varphi} \subsum C\, \} \cup {} \\
	%
	& \{\,C_d\sqcap A_y \subsum V_y, C_d\sqcap A_{\bar y} \subsum V_y \mid y\in Y\,\} \cup {} \\
	& \{\,A_x \sqcap A_{\bar x} \subsum \bot \mid x\in X\, \}  \cup {} \\
	%
	& \{\, C_d\sqcap A_\ell \subsum A_c \mid \ell \in c, c\in\varphi\, \}\cup {} \\ 
	& \{\,C_d\sqcap \bigsqcap_{y\in Y} V_y\sqcap \bigsqcap_{c\in \varphi}A_c\subsum A_\varphi\, \} \cup {} \\ 
	& \{\, A_\varphi \sqcap A_{\bar\varphi}\subsum \bot\,\} \cup \{\, A_\varphi\sqcap B_d \subsum \bot\, \}  \cup {} \\ 
	%
	& \{\, C_d\sqcap B_d \subsum \bot\, \} \cup {} 
	\{\, C\sqcap B_d \subsum \bot\, \} \cup {}\\ 
	& \{\, A_c\sqcap B_d \subsum \bot \mid c\in\varphi\, \} \cup {} 
	\{\, V_y\sqcap B_d \subsum \bot\, \} \text{ and} \\
	\calA \coloneqq & \{\, A_z(m), A_{\bar z}(m) \mid z\in Z \,\} \cup \{A_{\bar\varphi}(m), B_d(m), C_d(m)\},
\end{align*}
where $m$ is an individual.
Finally, let $\calK \coloneqq \tup{\calT, \calA}$ and $\alpha \coloneqq C(m)$.

Axioms in each line of $\calT$ encode the following intuition:
(i) and (ii) enforce a complete assignment over $Y$ as a counter-example to the satisfaction of $\varphi$,
(iii) ensures a valid assignment over $X$,
(iv) satisfaction of clauses via their literals,
(v) satisfaction of $\varphi$, but not solely by a partial assignment over $Z$ variables, and 
(vi) satisfaction of $\varphi$ causes a new conflict.
The remaining axioms use the idea from \cref{ex:brave-cc} to implicitly enforce a \SignatureRestriction, disallowing certain assertions in any conflict-confining hypothesis.

Now $\tup{\calK, \alpha}$ is our desired abduction problem.
It is a \brave-abduction problem:
The KB \calK is inconsistent and as it contains no assertions of the form $A_\ell(m)$ for any $\ell\in\{y,\bar y\}$, we have $\calK \not\models C(m)$.

\begin{claim}\label{claim:el-cc-brave}
	$\Phi$ is false iff $\tup{\calK, \alpha}$ admits a non-trivial conflict-confining \brave-hypothesis.
\end{claim}
\begin{claimproof}
	We first observe that the trivial hypothesis $\calH_t\dfn \{C(m)\}$ is not conflict-confining in $\calK$.
	This follows easily by considering the new conflict $\{B_d(m), C(m)\}\in\Conf(\tup{\calT,\calA\cup\calH_t})$.
	Therefore, we only prove the equivalence between satisfying assignments of $\Phi$ and conflict-confining \brave-hypothesis for $\alpha$ in $\calK$.

	Suppose $\Phi$ is false and let $\theta_Y$ (seen as a set of literals over $Y$) be an assignment over $Y$ s.t.\ $\forall Z \varphi[\theta_Y]$ is false, where $\varphi[\theta_Y]$ denotes the formula obtained from $\varphi$ by applying the partial assignment $\theta_Y$.
	Define $\calH \coloneqq \{\, A_\ell(m) \mid \ell \in \theta_Y \,\}$. 
	We now show that \Hyp is a conflict-confining \brave-hypothesis for $\tup{\calK, \alpha}$ by showing (i) $\tup{\calT, \calA\cup\calH} \models_\brave \alpha$ and (ii) $\calH$ is conflict-confining.
	
	For (i), let $\calB \coloneqq \Hyp \cup \{C_d(m), A_{\bar\varphi}(m)\}$.
	It is easy to see that \calB is \calT-consistent, so there is a repair $\calR \supseteq \calB$ of $\tup{\calT, \calA \cup \Hyp}$.
	Further, we have $\tup{\calT, \calB} \models C(m)$ and therefore $\tup{\calT, \calR} \models C(m)$.
	
	To see (ii), we first observe that $\calH$ does not trigger any inconsistency due to an axiom of the form $A_y\sqcap A_{\bar y}\subsum \bot$, as $\theta_Y$ is an assignment.
	Hence, \Hyp is \calT-consistent.
	Next, we show that for any repair $\calR'$ of \calK, the set $\calR' \cup \Hyp$ is \calT-consistent.
	Due to the axioms of the form $A_x \sqcap A_{\bar x} \sqsubseteq \bot$, the set $\calR' \cap \{\, A_z(m), A_{\bar z}(m) \mid z \in Z \,\}$ corresponds to a (potentially partial) assignment over $Z$.
	Since $\varphi[\theta_Y]$ is false for all assignments over $Z$ by assumption, we have $\tup{\calT, \calR' \cup \Hyp} \not\models A_\varphi$ by construction of \calT.
	This means that $A_\varphi \sqcap A_{\bar \varphi} \sqsubseteq \bot$ does not trigger.
	The remaining disjointness axioms use the idea from \cref{ex:brave-cc} to avoid conflicts in $\tup{\calT, \calR' \cup \Hyp}$:
	Due to the axiom $C_d \sqcap B_d \sqsubseteq \bot$, we have $C_d(m) \not\in \calR'$ or $B_d(m) \not\in \calR'$.
	In both cases, none of the remaining disjointness axioms can trigger, as either the assertion $B_d(m)$ is missing (and not entailed) or no assertion of the form $C(m)$, $A_\varphi(m)$, $A_c(m)$, or $V_y(m)$ is entailed from $\tup{\calT, \calR' \cup \Hyp}$.
	
	Conversely, suppose that there is a conflict-confining \brave-hypothesis \Hyp for $\tup{\calK, \alpha}$.
	We first argue that, w.l.o.g., \Hyp only contains assertions of the form $A_y(m)$ and $A_{\bar y}(m)$.
	To this end, note that \Hyp may only contain assertions over the individual $m$, since it is the only individual in $\tup{\calK, \alpha}$.
	Further, we can assume w.l.o.g.\ that \Hyp only contains assertions using concept or role names occurring in \calK, as fresh concept and role names cannot help entailment of $\alpha$ in \calT.
	Additionally we can assume w.l.o.g.\ that it does not contain any assertions already present in \calA.
	Finally, we observe that all other assertions over the signature of $\tup{\calK, \alpha}$ would introduce new conflicts, contradicting the assumption that \Hyp is conflict-confining:
	The assertion $C(m)$ would introduce the new conflict $\{C(m), B_d(m)\}$, while the assertion $A_\varphi(m)$ would introduce the new conflict $\{A_\varphi(m), B_d(m)\}$.
	Any assertion $A_c(m)$ for $c \in \varphi$ would introduce the new conflict $\{A_c(m), B_d(m)\}$, and any assertion $V_y(m)$ for $y \in Y$ would introduce the new conflict $\{V_y(m), B_d(m)\}$.
	
	Next, it is easy to see that \Hyp contains exactly one of the assertions $A_y(m)$ and $A_{\bar y}(m)$ for each $y \in Y$:
	It contains at least one, since $C(m)$ is entailed in some repair, and this is only possible when $\bigsqcap_{y \in Y} V_y$ is entailed in that repair.
	It contains at most one, since otherwise the axiom $A_y \sqcap A_{\bar y} \sqsubseteq \bot$ would lead to a new conflict.
	Consequently, \Hyp corresponds to an assignment $\theta_Y$ over $Y$.
	
	Now consider an assignment $\theta_Z$ over $Z$, represented as a set of literals.
	We now show that $\varphi[\theta_Y \cup \theta_Z]$ is false.
	As this applies to all assignments over $Z$, it implies that $\forall Y \exists Z \varphi(Y,Z)$ is false.
	Let
	\[\calA_Z \coloneqq \{\, A_\ell(m) \mid \ell \in \theta_Z \,\},\]
	i.e., the subset of \calA corresponding to assignment $\theta_Z$.
	The set $\calA_Z$ is \calT-consistent, since $\theta_Z$ is an assignment.
	As for all $y \in Y$ we have $\tup{\calT, \calA_Z} \not\models V_y(m)$, $\calA_Z \cup \{A_{\bar\varphi}(m)\}$ is \calT-consistent.
	Due to \Hyp being conflict-confining, this implies that $\calA_Z \cup \{A_{\bar \varphi}(m)\} \cup \Hyp$ is also \calT-consistent.
	Consequently, we have $\tup{\calT, \calA_Z \cup \Hyp} \not\models A_\varphi(m)$, implying that $\varphi[\theta_Y \cup \theta_Z]$ is false by construction of \calT.
\end{claimproof}
\end{proof}

\begin{theorem}\label{thm:elbot-exist-ar-nt}
  The existence problem for non-trivial \ar-hypotheses is \SigmaP-hard.
\end{theorem}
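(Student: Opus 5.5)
We reduce from validity of $\exists\forall$-QBFs and reuse the construction already given in the proof idea of \Cref{thm:elbot-exist-nt-cc}. Let $\Psi = \exists Y\forall Z\,\psi$ with $\psi$ in DNF and $X = Y\cup Z$. Take the KB $\calK = \tup{\calT,\calA}$ with
\[
\calT = \{A_x\sqcap A_{\bar x}\subsum\bot \mid x\in X\}\cup\{C\sqcap \textstyle\bigsqcap_{\ell\in t}A_\ell\subsum A_\psi\mid t\in\psi\}
\quad\text{and}\quad
\calA=\{A_z(m),A_{\bar z}(m)\mid z\in Z\},
\]
and the observation $\alpha = A_\psi(m)$. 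We also add $C(m)$ to $\calA$. Without it, a non-trivial hypothesis would always have to contain $C(m)$; including it is harmless for membership, and either choice can be made to work.

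First we check that $\tup{\calK,\alpha}$ is an \ar-abduction problem. The KB is inconsistent whenever $Z\neq\emptyset$, which we may assume by adding a dummy universal variable. Non-entailment holds because the repair choosing $A_{\bar z}$ or $A_z$ according to an assignment falsifying some term gives no $A_\psi$. If instead $\forall Z\psi$ is already valid with $Y$ empty, we add a dummy $Y$-variable $y_0$ and the literal $y_0$ to every term.

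For the direction ($\Rightarrow$), let $\theta_Y$ be a witness assignment and set $\Hyp = \{A_\ell(m)\mid \ell\in\theta_Y\}$. This $\Hyp$ is non-trivial. The repairs of $\tup{\calT,\calA\cup\Hyp}$ are exactly $\Hyp\cup\{C(m)\}$ together with one of $A_z(m), A_{\bar z}(m)$ for each $z$. Hence every repair corresponds to a full assignment $\theta_Y\cup\theta_Z$. That assignment satisfies some term $t$, so $A_\psi(m)$ is entailed in every repair.

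For the direction ($\Leftarrow$), let $\Hyp$ be a non-trivial \ar-hypothesis. Only the individual $m$ is available, and fresh names do not help, so we may assume $\Hyp$ consists of assertions $A_\ell(m)$. We then read off a partial assignment on $Y$ by taking $y$ true iff $A_y(m)\in\Hyp$. Now fix an arbitrary $\theta_Z$. We build a repair containing the $Z$-literals of $\theta_Z$, together with $C(m)$ and, for each $y$, the $Y$-literal consistent with the chosen assignment (preferring $A_y$ if present). We extend this set maximally without introducing any other literal of $\theta_Y$'s complement where possible. Since this repair entails $A_\psi(m)$, some term is contained in its literals.

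The main obstacle is when $\Hyp$ contains both $A_y(m)$ and $A_{\bar y}(m)$, or $Z$-literals. We handle this by choosing the repair adversarially. For each $y$ we pick only one side, fixed per $y$ and independent of $\theta_Z$; picking by the positive literal whenever $A_y\in\Hyp$ works. For $z$ we pick exactly the $\theta_Z$ side. The repair then contains exactly one literal per variable, so the term it entails is satisfied by $\theta_Y\cup\theta_Z$. Because $\theta_Y$ does not depend on $\theta_Z$, this shows $\exists Y\forall Z\psi$.

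Since the reduction is polynomial, \SigmaP-hardness follows.
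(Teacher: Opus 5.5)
Your proof is correct and uses the paper's reduction: the same TBox, the same ABox of complementary $Z$-literals, the observation $A_\psi(m)$, and the same witness $\{A_\ell(m)\mid \ell\in\theta_Y\}$ in the ($\Rightarrow$) direction. It differs from the paper in two places.

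\emph{Placing $C(m)$ in $\mathcal{A}$.} The paper keeps $C(m)$ out of $\mathcal{A}$. Then $\mathcal{K}\not\models_{\mathrm{AR}} A_\psi(m)$ is immediate, and every non-trivial hypothesis is simply forced to contain $C(m)$. Putting $C(m)$ into $\mathcal{A}$ gains nothing and creates the non-entailment issue you patch with $y_0$. That patch must be applied unconditionally. Deciding whether the terms without $Y$-literals already cover all $Z$-assignments is coNP-hard, so a case split on it is not a polynomial reduction. Also, your stated reason for non-entailment needs a $Z$-assignment falsifying \emph{every} $Y$-free term, not merely some term.

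\emph{The ($\Leftarrow$) direction.} This is the more substantive difference. The paper proves its claim for non-trivial \emph{conflict-confining} hypotheses so that it can reuse it for \Cref{thm:elbot-exist-nt-cc}. In ($\Leftarrow$) it invokes conflict-confinement to exclude $\mathcal{H}$ containing both $A_y(m)$ and $A_{\bar y}(m)$. For the statement at hand the hypothesis need not be conflict-confining, so that step is not literally available. The paper leaves implicit the observation that dropping $A_{\bar y}(m)$ from such a pair only removes repairs and hence preserves AR-entailment. Your adversarial repair makes this explicit: it contains $C(m)$, the $\theta_Z$-literals, $A_y(m)$ whenever it is in $\mathcal{H}$, and otherwise $A_{\bar y}(m)$ if present. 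It is a repair for every $\mathcal{H}$, and its $Y$-part does not depend on $\theta_Z$, so it proves the theorem as stated directly. It contains at most, not exactly, one literal per $Y$-variable, but the entailed term still lies in $\theta_Y\cup\theta_Z$.

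Since your ($\Rightarrow$) witness introduces no complementary $Y$-pair, it is also conflict-confining. So your argument yields the paper's stronger claim as well.
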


We reduce from validity problem for $\exists\forall$-QBFs.
Let $\Psi = \exists Y\forall Z \; \psi$ be a $\exists\forall$-QBF, where $\psi$ is in DNF, and let $X = Y \cup Z$.
We construct a TBox \calT using concept names $N = \{A_x, A_{\bar x} \mid x \in X\} \cup \{C, A_\psi\}$.
Intuitively, \calT expresses that for any term $t \in \psi$, the conjunction of concepts corresponding to the literals in $t$ entails the concept $A_\psi$, which represents satisfaction of $\psi$.
To this end, we define   $\calK\dfn \tup{\calT,\calA}$, where 
\begin{align*}
	\calT \dfn {} &\{\,A_x \sqcap A_{\bar x} \subsum \bot \mid x\in X\, \} \cup {} \\ 
	& \left\{\,C\sqcap \bigsqcap_{\ell\in t}A_\ell \subsum A_\psi \mid t\in\psi\,\right\}, \\ 
	\calA \coloneqq {} & \{\,A_z(m), A_{\bar z}(m)\mid z\in Z\}, 
\end{align*}
for an individual name $m$.
Here, the first set of axioms ensures that repairs encode (potentially partial) assignments over $X$ and the second encodes that $\psi$ is satisfied iff at least one its terms is satisfied.
Now, $\tup{\calK, A_\psi(m)}$ is the desired abduction problem.
It is a valid abduction instance, as \calK is obviously inconsistent, and $\calK \not\models_{\ar} A_\psi(m)$, as \calK does not contain the assertion $C(m)$.
Observe that $\calH_{\text{triv}} \dfn \{A_\psi(m)\}$ is the trivial \ar-hypothesis for $\tup{\calK, \alpha}$.
However, we are considering existence of non-trivial hypotheses here, i.e., one that does not contain $A_\psi(m)$.

%
%
%
\ClaimExistNTCC*
%
%
	\begin{claimproof}
		($\Rightarrow$) Let $\theta_Y$ be an assignment over $Y$, represented as a set of literals, witnessing that $\Psi$ is true, i.e., f.a.\ assignments $\theta_Z$ over $Z$, $\theta_Y \cup \theta_Z \models t$ for some $t \in \psi$.
		Define $\Hyp \dfn \{C(m)\} \cup \{\, A_\ell(m) \mid \ell \in \theta_Y \,\}$.
		$\calH$ is indeed non-trivial.
		We next show that $\calH$ is in fact a conflict-confining \ar-hypothesis for $\alpha$ in $\calK$.
%
%
		Consider any repair \calR of $\calK' \coloneqq \tup{\calT, \calA \cup \Hyp}$.
		Since $\theta_Y$ is an assignment, \Hyp is \calT-consistent.
		Even more, no subset of \Hyp is contained in any conflict of $\calK'$ and hence $\calH$ is conflict-confining in $\calK$.
		Moreover, we have $\Hyp \subseteq \calR$.
		Now let $\theta_Z \dfn \{\, \ell \in Z \cup \bar Z \mid A_\ell(m) \in \calR \,\}$.
		We argue that $\theta_Z$ is a (full) assignment over $Z$.
		As \calR is \calT-consistent, at most one of the assertions $A_x(m)$ and $A_{\bar x}(m)$ is in \calR for each $x \in X$ due to the first form of axioms in \calT.
		On the other hand, since \calR is subset-maximal among the \calT-conistent subsets of $\calA \cup \Hyp$, it contains at least one of those to assertions for each $x \in X$.    Hence, there is some term $t \in \psi$ s.t.\ $t \subseteq \theta_Y \cup \theta_Z$ by our assumption on $\theta_Y$.
		Since $\Hyp \subseteq \calR$ and by construction of \calT, this implies that $\tup{\calT, \calR} \models A_\psi(m)$.
		This proves that $\calK' \models_\ar A_\psi(m)$, since the argument applies to all repairs \calR of $\calK'$.
		
		($\Leftarrow$) Let \Hyp be a non-trivial conflict-confining \ar-hypothesis for $\tup{\calK, \alpha}$.
		We first argue that w.l.o.g., \Hyp only contains assertions of the form $A_\ell(m)$ and $C(m)$:
		It can only contain assertions over individual $m$, as it may not use fresh individuals.
		W.l.o.g.\ it does not contain assertions using concept and role names not occuring in $\tup{\calK, \alpha}$ or already present in \calA, as including these assertions in \Hyp does not contribute to the entailment of $A_\psi(m)$.
		Finally, $A_\psi(m) \not\in \Hyp$, as \Hyp is non-trivial.
		
		It is now easy to see that $C(m) \in \Hyp$, as otherwise $A_\psi(m)$ could not be entailed.
		Further, \Hyp contains at most one of the assertions $A_y(m)$ and $A_{\bar y}(m)$ for each $y \in Y$ as $\calH$ is conflict-confining (containing both assertions for some $y\in Y$ would create a new conflict due to $A_y\sqcap A_{\bar y}\subsum \bot$).
%
		Now, define the (potentially partial) assignment $\theta_Y \dfn \{\, \ell \mid A_\ell(m) \in \calH \,\}$ over $Y$.
		To finish the proof, we show that for each assignment $\theta_Z$ over $Z$, we have $\theta_Y \cup \theta_Z \models \psi$.
		Let $\calR_Z \dfn \{\, A_\ell(m) \mid \ell \in \theta_Y \cup \theta_Z \,\} \cup \{C(m)\}$.
		As $\theta_Y$ and $\theta_Z$ are assignments, it is easy to see that $\calR_Z$ is a repair of $\calK'$.
		Hence, we have $\tup{\calT, \calR_Z}$ by assumption.
		But this implies that $\tup{\calT,\calR_Z} \models A_\psi(m)$ by assumption, and hence there is some term $t \in \psi$, such that $t \subseteq \calR_Z$ by construction of \calT.
		Consequently, $\theta_Y \cup \theta_Z \models \psi$.
		Since $\theta_Z$ is an arbitrary assignment over $Z$, we conclude that $\Psi$ is true.
	\end{claimproof}

\subsection{Constructions used for \Cref{thm:cc-nt-verification-el}}

\begin{theorem}\label{thm:elbot-verif-brave-cc}
  Verification of conflict-confining \brave-hypotheses is \DP-hard.
\end{theorem}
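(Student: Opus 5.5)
We reduce from the \DP-complete problem used in the proof idea of \Cref{thm:cc-nt-verification-el}. An instance is a triple $\tup{\calK, \alpha_1, \alpha_2}$ with $\calK = \tup{\calT,\calA}$ inconsistent, and it is a yes-instance iff $\calK \models_\brave \alpha_1$ and $\calK \not\models_\brave \alpha_2$. We first justify that this problem is \DP-hard even with a single KB. Start from the two-KB variant, which is \DP-hard since brave entailment is \NP-complete. Rename the symbols so that the two KBs use disjoint signatures, and take their union. Because there are no shared symbols, every conflict lies entirely within one part. Hence the repairs of the union are exactly the unions of a repair of each part, and brave entailment of an assertion over one part's signature is unaffected by the other part. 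Finally, we rename individuals so that $\alpha_1 = A(a)$ and $\alpha_2 = B(a)$, with $A \neq B$ and a common individual $a$. This can be done by identifying the two target individuals, which is harmless because the signatures are disjoint.

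Given such an instance, we build $\calK' = \tup{\calT', \calA}$ with $\calT' = \calT \cup \{C \sqcap A \sqsubseteq D,\ C \sqcap B \sqsubseteq \bot\}$, where $C$ and $D$ are fresh. We take the observation $\alpha = D(a)$ and the candidate $\Hyp = \{C(a)\}$. First we check that $\tup{\calK', \alpha}$ is a valid \brave-abduction problem. $\calK'$ is inconsistent, since any conflict of $\calK$ is still a conflict. Also $\calK' \not\models_\brave D(a)$, because $C$ does not occur in $\calA$, so $D$ can never be derived.

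For the entailment part, the repairs of $\tup{\calT', \calA \cup \Hyp}$ are of two kinds. The first kind consists of repairs $\calR \cup \{C(a)\}$, where $\calR$ is a repair of $\calK$ with $\tup{\calT,\calR} \not\models B(a)$. The second kind consists of the remaining repairs of $\calK$, where $C(a)$ is excluded because $C(a)$ together with a support of $B(a)$ is inconsistent. From this, $\tup{\calT',\calA\cup\Hyp}\models_\brave D(a)$ holds iff some repair $\calR$ of $\calK$ entails $A(a)$ but not $B(a)$.

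For conflict-confinement, we use the repair characterisation stated after \Cref{def:properties}: $\Hyp$ is conflict-confining iff $\calR \cup \{C(a)\}$ is $\calT'$-consistent for every repair $\calR$ of $\calK$. This holds iff no repair of $\calK$ entails $B(a)$, that is, iff $\calK \not\models_\brave B(a)$. Combining the two facts: if $\calK \not\models_\brave B(a)$, then no repair entails $B(a)$, so the entailment condition reduces to $\calK \models_\brave A(a)$. Conversely, if $\Hyp$ is a conflict-confining \brave-hypothesis, then $\calK \not\models_\brave B(a)$ and some repair entails $A(a)$. This establishes the equivalence, and the same argument gives \Cref{claim:el-verif-brave-nt-cc}, since $\alpha \notin \Hyp$.

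The main obstacle is the description of the repairs of $\tup{\calT',\calA\cup\Hyp}$, in particular ruling out that the new axioms create unexpected conflicts or derivations. We handle this by using that $C$ and $D$ are fresh and occur only in the two added axioms. Then the only new minimal inconsistent sets are $\{C(a)\} \cup S$, where $S$ is a minimal $\calT$-consistent support of $B(a)$. Likewise, $D(a)$ is derivable from a consistent set exactly when that set contains $C(a)$ and entails $A(a)$ under $\calT$. Once this is established, both parts of the equivalence follow as above.
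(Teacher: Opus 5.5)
Your proposal is correct and follows the paper's proof. It uses the same reduction ($\calT' = \calT \cup \{C \sqcap A \sqsubseteq D,\ C \sqcap B \sqsubseteq \bot\}$, $\alpha = D(a)$, $\Hyp = \{C(a)\}$), shows that conflict-confinement is equivalent to $\calK \not\models_\brave B(a)$, and that the hypothesis condition then reduces to $\calK \models_\brave A(a)$.

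There is one small slip, which does not affect the result. The repairs of $\tup{\calT', \calA \cup \Hyp}$ containing $C(a)$ have the form $\mathcal{S} \cup \{C(a)\}$, where $\mathcal{S}$ is maximal among the $\calT$-consistent subsets of $\calA$ that do not entail $B(a)$. Such an $\mathcal{S}$ need not be a repair of $\calK$ when $\calK \models_\brave B(a)$; for example, take $\calT = \{E \sqcap F \sqsubseteq \bot\}$ and $\calA = \{A(a), B(a), E(a), F(a)\}$. So your intermediate ``iff'' for $D(a)$ is false in general. It is harmless, because you only need two facts, and both hold: brave entailment of $D(a)$ implies $\calK \models_\brave A(a)$, and the two are equivalent once $\calK \not\models_\brave B(a)$.
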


\begin{proof}
We reduce from a combination of entailment and non-entailment under \brave semantics to verification of conflict-confining \brave-hypotheses.
Given an instance $\tup{\calK, \alpha_1, \alpha_2}$ for some inconsistent KB \calK, the problem asks whether $\calK\models_\brave \alpha_1$ and $\calK\not\models_\brave \alpha_2$.
This problem is \DP-complete because the first question is \NP-complete and the second question is \coNP-complete under \brave semantics.
For the reduction, assume w.l.o.g.\ that $\alpha_1$ and $\alpha_2$ are BIQs over the same individual, and let $\alpha_1 = A(a)$, $\alpha_2= B(a)$, and $\calK = \tup{\calT,\calA}$.
We construct a KB $\calK'$, an observation $\alpha$, and a hypothesis \Hyp next.
Let $\calK' \coloneqq \tup{\calT',\calA}$ with
\[\calT' \coloneqq \calT \cup \{C \sqcap A\subsum D, C\sqcap B\subsum \bot\},\]
$\alpha \coloneqq D(a)$, and $\Hyp \coloneqq \{C(a)\}$ for fresh concepts $C$ and $D$.
The instance is a valid abduction problem, since $\calK'$ is inconsistent and $\calK' \not\models_\brave \alpha$.
Intuitively, \Hyp is a Brave-hypothesis for $\tup{\calK', \alpha}$ iff $\calK \models_\brave A(a)$ and \Hyp is conflict-confining for $\calK'$ iff $\calK \not\models_\brave B(a)$.
It remains to show correctness, which is stated in the following claim.

\ClaimVerNTCC*
\begin{claimproof}
	Observe that $\calH\dfn \{C(a)\}$ is non-trivial indeed.
	Hence, we only prove the correctness of the claim regarding conflict-confinement.

	($\Rightarrow$)
	Suppose \Hyp is a conflict-confining \brave-hypothesis for $\tup{\calK', D(a)}$.
	Notice that the only way to obtain the entailment $\calK' \models_\brave D(a)$ is via the TBox axiom $C\sqcap A\subsum D$, since no axiom in \calK contains $D$.
	Therefore, we must have $\calK \models_\brave A(a)$.
	Moreover, we have $\calK \not\models_\brave B(a)$:
	Suppose to the contrary that $\calK \models_\brave B(a)$ and let $\calR$ be a witnessing repair such that $\tup{\calT, \calR} \models B(a)$.
	Then, we have that $\tup{\calT', \calR \cup \calH} \models \bot$, in particular due to the axiom $C \sqcap B \subsum \bot$ and the assertion $C(a)$.
	Since \calR is a repair, and hence \calT-consistent, there must be a conflict of $\tup{\calT', \calR \cup \Hyp}$ that is not a conflict of $\tup{\calT',\calA}$.
	But this leads to a contradiction to our assumption that \Hyp is conflict-confining for $\calK'$.
	As a result, $\calK'\not\models_\brave B(a)$ must be true.
	
	($\Leftarrow$) Suppose $\calK \models_\brave A(a)$ and $\calK \not\models_\brave B(a)$.
	Then, $\calK' \models_\brave A(a)$ and hence
	\[\tup{\calT', \calA\cup\Hyp} \models_\brave D(a).\]
	Therefore \Hyp is indeed a \brave-hypothesis for $\tup{\calK', \alpha}$.
	To show that \Hyp is conflict-confining for $\calK'$, suppose to the contrary that there is a conflict $\calC\in \Conf(\tup{\calT', \calA\cup\Hyp})$ such that $\calC\not\in \Conf(\tup{\calT', \calA})$.
	In particular, this implies that $C(a)\in\calC$ since $\calH=\{C(a)\}$.
	As a result, we have $\calC\setminus\{C(a)\}\subseteq \calA$ and $\calC$ is $\calT'$-consistent, and hence also \calT-consistent.
	But this implies that $\tup{\calT', \calC} \models B(a)$, since the only conflict involving $C(a)$ is via the axiom $C\sqcap B\subsum \bot$.
	Further, this means that $\tup{\calT, \calC} \models B(a)$, as the new axioms in $\calT'$ do not help entailment of $B(a)$.
	As \calC is \calT-consistent, there exists a repair $\calR \supseteq \calC$ of \calK, and we have $\tup{\calT, \calR} \models B(a)$.
	Consequently, $\calK \models_\brave B(a)$.
	But this is a contradiction to our assumption, so \Hyp must be conflict-confining.
\end{claimproof}
\end{proof}

\begin{theorem}\label{thm:elbot-verif-ar-cc}
  Verification of conflict-confining \ar-hypotheses is \coNP-hard.
\end{theorem}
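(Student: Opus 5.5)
We reduce from the complement of \ar-entailment, i.e., from the problem of deciding $\calK \models_\ar A(a)$ for an inconsistent KB $\calK = \tup{\calT, \calA}$ and a concept assertion $A(a)$. This problem is \coNP-complete for \ELbot. The construction is the one already sketched in the proof of \Cref{thm:verification-subset-min-el}. Let $X$ and $C$ be fresh concept names, and define $\calK' \coloneqq \tup{\calT', \calA}$ with $\calT' \coloneqq \calT \cup \{X \sqcap A \sqsubseteq C\}$, the observation $\alpha \coloneqq C(a)$, and the candidate $\Hyp \coloneqq \{X(a)\}$. This is computable in polynomial time.

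We first check that $\tup{\calK', \alpha}$ is a valid \ar-abduction problem. The axiom added to $\calT'$ contains no $\bot$, so the conflicts and repairs of $\calK'$ coincide with those of $\calK$; hence $\calK'$ is inconsistent. Since $X$ does not occur in $\calA$, no repair entails $C(a)$, so $\calK' \not\models_\ar \alpha$.

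The central step is to show that $\Hyp$ is conflict-confining for $\calK'$. The name $X$ is fresh and occurs only on the left-hand side of an axiom whose conclusion $C$ is also fresh and is never used in any $\bot$-axiom. Hence for every $\calT'$-consistent $\calB \subseteq \calA$, the set $\calB \cup \{X(a)\}$ is again $\calT'$-consistent. Formally, take a model of $\tup{\calT', \calB}$, put $a$ into the extension of $X$, and close the extension of $C$ under the new axiom. All other axioms remain satisfied because $X$ and $C$ occur nowhere else. It follows that no conflict of $\tup{\calT', \calA \cup \Hyp}$ contains $X(a)$, so $\Conf(\tup{\calT', \calA \cup \Hyp}) = \Conf(\calK')$.

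By the same argument, the repairs of $\tup{\calT', \calA \cup \Hyp}$ are exactly the sets $\calR \cup \{X(a)\}$ for the repairs $\calR$ of $\calK$. Moreover, $\tup{\calT', \calR \cup \{X(a)\}} \models C(a)$ holds iff $\tup{\calT, \calR} \models A(a)$, because the new axiom is the only way to derive $C$ and it adds no consequences over the old signature. Therefore $\Hyp$ is an \ar-hypothesis iff $\calK \models_\ar A(a)$. Combined with conflict-confinement, this shows that $\Hyp$ is a conflict-confining \ar-hypothesis for $\tup{\calK', \alpha}$ iff $\calK \models_\ar A(a)$, which gives \coNP-hardness.

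Finally, $\Hyp$ is non-trivial and has size $1$, which is why the same reduction is reused for the combinations with non-triviality and minimality. The main technical obstacle is the model-extension argument that the fresh names can neither produce inconsistency nor alter entailments over the original signature. This is routine for \ELbot, since the names are fresh and the added axiom is positive.
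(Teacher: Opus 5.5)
Your proposal is correct and follows the paper's proof: it uses the same reduction from \ar-entailment with $\calT' = \calT \cup \{X \sqcap A \sqsubseteq C\}$, $\alpha = C(a)$ and $\Hyp = \{X(a)\}$, where conflict-confinement follows from the freshness of $X$, and you additionally spell out the model-extension and repair-correspondence arguments that the paper leaves implicit. One small wording slip: the source problem is \ar-entailment itself (deciding $\calK \models_\ar A(a)$, which is \coNP-complete), not its complement, and your reduction correctly maps yes-instances to yes-instances.
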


\begin{proof} 
	We reduce from \ar-entailment.
	To achieve this, let $\calK=\tup{\calT,\calA}$ and $A(a)$ be an instance of \ar-entailment.
	Define $\calK' \coloneqq \tup{\calT',\calA}$, where
  \[\calT' \coloneqq \calT \cup \{X \sqcap A \subsum C\}.\]
	Finally, let $\alpha \dfn C(a)$ and $\Hyp \dfn \{X(a)\}$.
	We observe that $\calK \models_\ar A(a)$ iff $\tup{\calT', \calA \cup \Hyp} \models_\ar \alpha$.
	Note that $\Hyp$ is trivially conflict-confining as it uses a fresh concept name $X$ that cannot participate in any conflict.
	
	We now turn to \brave semantics.
	For membership, observe that $\Hyp$ is a conflict-confining $\brave$-hypothesis iff (1) $\tup{\calT, \calA\cup \Hyp} \models_\brave \alpha$, and (2)  $\Conf(\tup{\calT, \calA \cup \Hyp}) = \Conf(\tup{\calT, \calA})$.
	
	In the case of $\brave$-hypotheses, (1) is instance checking for \ELbot and hence in \NP, while (2) can be checked in \coNP by universally guessing a conflict $\calC$ such that $\calC\in \Conf(\tup{\calT, \calA \cup \Hyp})$ and $\calC\not\in \Conf(\tup{\calT, \calA})$.
	The last check can be performed  in polynomial time.
	Hence, the problem is contained in \DP.
	
	For hardness, we reduce from a combination of entailment and non-entailment under \brave semantics to verification of conflict-confining \brave-hypotheses.
	Given an instance $\tup{\calK, \alpha_1, \alpha_2}$ for some inconsistent KB \calK, the problem asks whether $\calK\models_\brave \alpha_1$ and $\calK\not\models_\brave \alpha_2$.
	This problem is \DP-complete because the first question is \NP-complete and the second question is \coNP-complete under \brave semantics.
	For the reduction, assume w.l.o.g.\ that $\alpha_1$ and $\alpha_2$ are BIQs over the same individual, and let $\alpha_1 = A(a)$, $\alpha_2= B(a)$, and $\calK = \tup{\calT,\calA}$.
	We construct a KB $\calK'$, an observation $\alpha$, and a hypothesis \Hyp next.
	Let $\calK' \coloneqq \tup{\calT',\calA}$ with
  \[\calT' \coloneqq \calT \cup \{C \sqcap A\subsum D, C\sqcap B\subsum \bot\},\]
  $\alpha \coloneqq D(a)$, and $\Hyp \coloneqq \{C(a)\}$ for fresh concepts $C$ and $D$.
	The instance is a valid abduction problem, since $\calK'$ is inconsistent and $\calK' \not\models_\brave \alpha$.
	Intuitively, \Hyp is a Brave-hypothesis for $\tup{\calK', \alpha}$ iff $\calK \models_\brave A(a)$ and \Hyp is conflict-confining for $\calK'$ iff $\calK \not\models_\brave B(a)$.
	It remains to show correctness, which is stated in the following claim.
	\begin{claim}\label{claim:el-verif-brave-cc}
		\Hyp is a conflict-confining \brave-hypothesis for $\alpha$ in $\calK'$ iff $\calK \models_\brave \alpha_1$ and $\calK \not\models_\brave \alpha_2$.
	\end{claim}
	\begin{claimproof}
	($\Rightarrow$)
	Suppose \Hyp is a conflict-confining \brave-hypothesis for $\tup{\calK', D(a)}$.
	Notice that the only way to obtain the entailment $\calK' \models_\brave D(a)$ is via the TBox axiom $C\sqcap A\subsum D$, since no axiom in \calK contains $D$.
	Therefore, we must have $\calK \models_\brave A(a)$.
	Moreover, we have $\calK \not\models_\brave B(a)$:
	Suppose to the contrary that $\calK \models_\brave B(a)$ and let $\calR$ be a witnessing repair such that $\tup{\calT, \calR} \models B(a)$.
	Then, we have that $\tup{\calT', \calR \cup \calH} \models \bot$, in particular due to the axiom $C \sqcap B \subsum \bot$ and the assertion $C(a)$.
  Since \calR is a repair, and hence \calT-consistent, there must be a conflict of $\tup{\calT', \calR \cup \Hyp}$ that is not a conflict of $\tup{\calT',\calA}$.
	But this leads to a contradiction to our assumption that \Hyp is conflict-confining for $\calK'$.
	As a result, $\calK'\not\models_\brave B(a)$ must be true.
	
	($\Leftarrow$) Suppose $\calK \models_\brave A(a)$ and $\calK \not\models_\brave B(a)$.
	Then, $\calK' \models_\brave A(a)$ and hence
  \[\tup{\calT', \calA\cup\Hyp} \models_\brave D(a).\]
	Therefore \Hyp is indeed a \brave-hypothesis for $\tup{\calK', \alpha}$.
	To show that \Hyp is conflict-confining for $\calK'$, suppose to the contrary that there is a conflict $\calC\in \Conf(\tup{\calT', \calA\cup\Hyp})$ such that $\calC\not\in \Conf(\tup{\calT', \calA})$.
	In particular, this implies that $C(a)\in\calC$ since $\calH=\{C(a)\}$.
	As a result, we have $\calC\setminus\{C(a)\}\subseteq \calA$ and $\calC$ is $\calT'$-consistent, and hence also \calT-consistent.
  But this implies that $\tup{\calT', \calC} \models B(a)$, since the only conflict involving $C(a)$ is via the axiom $C\sqcap B\subsum \bot$.
  Further, this means that $\tup{\calT, \calC} \models B(a)$, as the new axioms in $\calT'$ do not help entailment of $B(a)$.
	As \calC is \calT-consistent, there exists a repair $\calR \supseteq \calC$ of \calK, and we have $\tup{\calT, \calR} \models B(a)$.
  Consequently, $\calK \models_\brave B(a)$.
	But this is a contradiction to our assumption, so \Hyp must be conflict-confining.
\end{claimproof}
	We conclude by observing that the above reduction can be achieved in polynomial time. 
\end{proof}

\subsection{Proof Details and Constructions Used for \Cref{cor:verification-nt-min-el}}

\ClaimVerSubsetNT*
\begin{claimproof}
	($\Rightarrow$)
	Suppose $\Hyp$ is a $\subseteq$-minimal hypothesis for $Q(a)$ in $\calK'$.
	Observe that the only way to obtain the entailment $\calK' \models_\brave Q(a)$ is via the TBox axiom $C \sqcap D \sqcap A \subsum Q$, since no axiom in $\calK$ contains $Q$ and $Q(a) \not\in \Hyp$.
	Therefore, $\calK \models_\brave A(a)$, since otherwise $\calK' \not\models_\brave A(a)$, which would imply $\calK' \not\models_\brave Q(a)$.
	Moreover, we have $\calK \not\models_\brave B(a)$:
	Suppose to the contrary that $\calK \models_\brave B(a)$.
	Then $\calK' \models C(a)$ due to the axiom $B \sqsubseteq C$.
	Consequently, $\{D(a)\}$ is a \brave-hypothesis for $\alpha$ in $\calK'$, which is a contradiction to $\subseteq$-minimality of $\Hyp$.
	
	($\Leftarrow$) Suppose $\calK \models_\brave A(a)$ and $\calK \not\models_\brave B(a)$.
	Then, we also have $\calK' \models_\brave A(a)$ and hence
	\[\tup{\calT', \calA\cup\Hyp} \models_\brave Q(a),\]
	so $\Hyp$ is a $\brave$-hypothesis for $\tup{\calK', \alpha}$.
	For $\subseteq$-minimality, suppose to the contrary that there is a \brave-hypothesis $\Hyp' \subsetneq \Hyp$ for $\tup{\calK', \alpha}$.
	Since $D(a)$ cannot be entailed via any axiom in $\calK'$, we have $\Hyp' = \{D(a)\}$.
	However, this implies that $\calK' \models_\brave C(a)$, which can only be true if $\calK' \models_\brave B(a)$.
	But then $\calK \models_\brave B(a)$, which is a contradiction.
\end{claimproof}

The following example shows that the set of \ar-hypotheses for a given \ar-abduction problem does not need to be convex.
This underlies the hardness proof for verification of $\subseteq$-minimal \ar-hypotheses presented after.
\begin{example}\label{ex:ar-non-convex}
  Define
  \begin{align*}
  	\calT &\coloneqq \{ B_1 \sqcap B_2 \sqsubseteq \bot, \quad C_1 \sqcap C_2 \sqsubseteq \bot, \qquad B_1 \sqsubseteq A, \quad B_3 \sqsubseteq A\}, \\
    \calA &\coloneqq \{C_1(a), C_2(a)\}, \\
    \calB_1 &\coloneqq \{B_1(a)\}, \qquad \calB_2 \coloneqq \calB_1 \cup \{B_2(a)\}, \qquad \calB_3 \coloneqq \calB_2 \cup \{B_3(a)\}.
  \end{align*}
  and let $\calK \coloneqq \tup{\calT, \calA}$.
  We have $\calB_1 \subseteq \calB_2 \subseteq \calB_3$, and $\calB_1$ and $\calB_3$ are \ar-hypotheses for $\tup{\calK, A(a)}$, while $\calB_2$ is not.
\end{example}

\begin{theorem}\label{thm:elbot-verif-ar-subset}
	Verification of $\subseteq$-minimal \ar-hypotheses is \PiP-hard.
\end{theorem}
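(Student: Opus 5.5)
We reduce from the complement of validity of $\exists\forall$-QBFs, which is $\PiP$-complete. The construction extends the one used for \Cref{thm:elbot-exist-ar-nt}. Let $\Psi = \exists Y\forall Z\,\psi$ with $\psi$ in DNF. We start from $\calK=\tup{\calT,\calA}$ with
\begin{itemize}
  \item the disjointness axioms $A_x\sqcap A_{\bar x}\sqsubseteq\bot$ for $x\in Y\cup Z$,
  \item the term axioms $C\sqcap\bigsqcap_{\ell\in t}A_\ell\sqsubseteq A_\psi$ for $t\in\psi$,
  \item the ABox $\{A_z(m),A_{\bar z}(m)\mid z\in Z\}$,
  \item the observation $\alpha = A_\psi(m)$.
\end{itemize}
By the claim in that proof, an \ar-hypothesis of the form $\{C(m)\}\cup\{A_\ell(m)\mid \ell\in\theta_Y\}$ exists for a (partial) assignment $\theta_Y$ iff $\Psi$ is true. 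The aim is to add a gadget and a fixed candidate $\Hyp$ such that three things hold:
\begin{enumerate}
  \item $\Hyp$ is always an \ar-hypothesis;
  \item every proper subset of $\Hyp$ that is an \ar-hypothesis has the shape above, modulo irrelevant assertions;
  \item $\Hyp$ contains $C(m)$ and all $A_y(m),A_{\bar y}(m)$ for $y\in Y$.
\end{enumerate}
Then $\Hyp$ is $\subseteq$-minimal iff $\Psi$ is false.

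For (1) we exploit non-convexity, in the spirit of \Cref{ex:ar-non-convex} and \Cref{ex:AR}. With both $A_y(m)$ and $A_{\bar y}(m)$ present, every repair keeps exactly one of them, so the assertions $\{A_y(m),A_{\bar y}(m)\mid y\in Y\}$ alone do not force $A_\psi(m)$. We therefore add a gadget in the style of \Cref{ex:AR} that yields $A_\psi(m)$ in \emph{every} repair, but only when the whole of $\Hyp$ is present.

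Concretely, we add fresh names $A_1,A_2,B_1,B_2$ with the following axioms:
\begin{itemize}
  \item the four cross disjointness axioms $A_i\sqcap B_j\sqsubseteq\bot$;
  \item $A_1\sqcap A_2\sqsubseteq A_\psi$ and $B_1\sqcap B_2\sqsubseteq A_\psi$.
\end{itemize}
We put $A_1(m),B_1(m)$ into $\calA$ and $A_2(m),B_2(m)$ into $\Hyp$. As computed in \Cref{ex:AR}, every repair of $\calA\cup\Hyp$ then contains $\{A_1,A_2\}$ or $\{B_1,B_2\}$ and entails $A_\psi(m)$.

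To tie the gadget to the $Y$-literals, we make the gadget conclusions conditional. We replace $A_\psi$ on the right-hand side of the gadget axioms by a fresh $G$, and add $G\sqcap\bigsqcap_{y\in Y}V_y\sqcap C\sqsubseteq A_\psi$. The concept $V_y$ should be derivable in all repairs only when both literal assertions of $y$ are in $\Hyp$. This is the delicate part. A plain encoding $A_y\sqsubseteq V_y$, $A_{\bar y}\sqsubseteq V_y$ fails, because a single literal already gives $V_y$ in every repair. We would therefore again use a cross-conflict gadget per $y$. The literals $A_y,A_{\bar y}$ play the role of $A_2,B_2$ against fresh ABox assertions $P_y(m),N_y(m)$ playing $A_1,B_1$, with $P_y\sqcap A_y\sqsubseteq V_y$ and $N_y\sqcap A_{\bar y}\sqsubseteq V_y$. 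The cross conflicts are $P_y\sqcap A_{\bar y}\sqsubseteq\bot$ and $N_y\sqcap A_y\sqsubseteq\bot$. The resulting properties are:
\begin{itemize}
  \item With both literals present, every repair derives $V_y$.
  \item With only $A_y(m)$ present, a repair keeping $N_y(m)$ and dropping $A_y(m)$ loses $V_y$.
  \item The original disjointness $A_y\sqcap A_{\bar y}\sqsubseteq\bot$ is unaffected.
\end{itemize}
For the original term axioms, each literal $A_\ell(m)$ must still survive in the repairs relevant for the $\Psi$-direction. A single literal kept in a subset hypothesis can be dropped in some repair only if its cross partner is present. We therefore let the term axioms use fresh copies $A'_\ell$ with $A_\ell\sqsubseteq A'_\ell$ only on the $Y$-side, and add $P_y\sqcap N_y$-free alternative derivations. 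Checking that this does not break the $\Psi$-direction is the step I expect to be the \textbf{main obstacle}.

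With a working gadget, correctness proceeds in two directions.

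\emph{If $\Psi$ is true.} Take the witnessing $\theta_Y$. The subset $\{C(m)\}\cup\{A_\ell(m)\mid\ell\in\theta_Y\}$ of $\Hyp$ is an \ar-hypothesis; this reuses the argument of \Cref{claim:nt-verification-el}, after checking that the new conflicts (with $P_y,N_y$) only remove literal assertions in ways already covered by the universal quantification over repairs. Hence $\Hyp$ is not $\subseteq$-minimal.

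\emph{If $\Psi$ is false.} Let $\Hyp'\subsetneq\Hyp$. If $\Hyp'$ omits a gadget assertion or some literal, then some repair blocks the $G$/$V_y$-route, as in \Cref{ex:AR}. Only the term-axiom route via $C$ then remains. As in the ($\Leftarrow$) direction of \Cref{claim:nt-verification-el}, that route yields an assignment $\theta_Y$ witnessing $\Psi$, which is a contradiction. Keeping both literals of some $y$ only helps the adversarial choice of repair, so this case needs no separate treatment.

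Finally, the construction is polynomial, and $\tup{\calK',\alpha}$ is a valid \ar-abduction problem: $\calK'$ is inconsistent, and since $C(m)\notin\calA$, we have $\calK'\not\models_\ar\alpha$.
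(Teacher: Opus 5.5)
\textbf{There is a genuine gap.} Your high-level plan matches the paper's: make the full $\Hyp$ an \ar-hypothesis through a non-convexity gadget, and force every proper sub-hypothesis to encode an existential assignment. But your gadget does not work, and the step you flag as the main obstacle is exactly where it breaks.

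As written, the per-variable gadget has the conflicts $P_y\sqcap A_{\bar y}\sqsubseteq\bot$, $N_y\sqcap A_y\sqsubseteq\bot$ and $A_y\sqcap A_{\bar y}\sqsubseteq\bot$, but not $P_y\sqcap N_y\sqsubseteq\bot$. So it does not have the shape of \Cref{ex:AR}. The set $\{P_y(m),N_y(m)\mid y\in Y\}$ extends to a repair of $\calA\cup\Hyp$ that contains no $Y$-literal at all. In that repair no $V_y$ is derived, so $A_\psi(m)$ follows only if pure-$Z$ terms already cover every $\theta_Z$. Hence property (1) fails.

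If you add $P_y\sqcap N_y\sqsubseteq\bot$ to restore the shape of the example, the other direction breaks. In the witness subset $\{C(m)\}\cup\{A_\ell(m)\mid\ell\in\theta_Y\}$, each single literal $A_y(m)$ (resp.\ $A_{\bar y}(m)$) conflicts with $N_y(m)$ (resp.\ $P_y(m)$) from $\calA$. So some repair drops all $Y$-literals, and the subset is in general not a hypothesis. The direction ``$\Psi$ true $\Rightarrow$ $\Hyp$ not minimal'' therefore collapses.

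Your proposed copies $A'_\ell$ with $A_\ell\sqsubseteq A'_\ell$ cannot help. Once $A_\ell(m)$ is removed from a repair, nothing derives $A'_\ell$ there. ``Alternative derivations'' would need another source assertion, which you do not specify.

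The root cause is the unconditional axiom $A_y\sqcap A_{\bar y}\sqsubseteq\bot$. Because of it, no single repair can see both polarities. Any indirect detection of ``both polarities are present'' must then give the literals conflicts, and those same conflicts let repairs drop the literals in the sub-hypotheses.

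The missing idea is the paper's global splitting of the repairs:
\begin{itemize}
\item It uses fresh assertions $B_1,B_1',B_2$ in $\calA$ with $B_1\sqcap B_2\sqsubseteq\bot$ and $B_1'\sqcap B_2\sqsubseteq\bot$.
\item It has no disjointness between $T_{x,i}$ and $F_{x,i}$ at all.
\item Every conflict involving hypothesis literals is conditional on $B_1'$, e.g.\ $B_1'\sqcap T_{x,i}\sqcap C_j\sqsubseteq\bot$, where $C_j(a)\in\calA$ marks a term of $\neg\varphi$.
\end{itemize}
In repairs containing $B_2$, the literals of any $\Hyp'$ are conflict-free and therefore always survive. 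There, $B_2\sqcap\bigsqcap_i\mathrm{HAVE}_i\sqsubseteq A$ forces each sub-hypothesis to cover every variable. In repairs containing $B_1$, $A$ follows in one of two ways. Either some $C_j$ survives, encoding that $\neg\varphi$ is satisfied under the repair's universal assignment. Or all $C_j$ are knocked out, the hypothesis literals then all survive, and $B_1\sqcap\bigsqcap_i(T_{x,i}\sqcap F_{x,i})\sqsubseteq A$ fires. This second route is available only for the full $\Hyp$, precisely because both polarities can coexist in one repair. Your construction needs some such switch separating ``the literals must survive'' from ``the literals are tested''; without it, it cannot satisfy both directions of the reduction.
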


\begin{proof}
  We reduce from checking whether a given $\Pi_2$-QBF is true.
  Let $\Phi = \forall X \exists Y \varphi(X,Y)$ be a $\Pi_2$-QBF, where $X = \{x_1, \dots, x_n\}$ and $Y = \{y_1, \dots, y_m\}$.
  Note that we can assume w.l.o.g.\ that $\neg \varphi$ is in DNF with set of terms $\{c_1, \dots, c_k\}$.
  We construct an \ar-abduction problem $\tup{\calK, A(a)}$ and \ar-hypothesis $\Hyp$ for it s.t.\ $\forall X \exists Y \varphi$ is true iff $\Hyp$ is a $\subseteq$-minimal \ar-hypothesis.
  Equivalently, $\exists X \forall Y \neg \varphi$ is true iff there is some subset $\Hyp' \subsetneq \Hyp$ s.t.\ $\Hyp'$ is an \ar-hypothesis for $\tup{\calK, A(a)}$.

  The proof idea is as follows.
  We construct \calK and \Hyp in such a way that that subsets $\Hyp' \subsetneq \Hyp$ encode assignments over $X$, and repairs of $\tup{\calT, \calA \cup \Hyp'}$ range over encodings of all assignments over $Y$.
  Entailment of $A(a)$ in $\tup{\calT, \calA \cup \Hyp'}$ is then equivalent to the corresponding assignment over $X \cup Y$ satisfying $\varphi$.
  Further, we use the idea from \cref{ex:ar-non-convex}:
  To ensure entailment of $A(a)$ in $\tup{\calT, \calA \cup \Hyp}$, we use an additional axiom in \calT that circumvents the construction encoding $\Phi$, but cannot be triggered in $\tup{\calT, \calA \cup \Hyp'}$ for any subset $\Hyp' \subsetneq \Hyp$.
  Another important component of the construction will be two disjoint concepts $B_1$ and $B_2$ that allow us to split the set of repairs in two parts:
  The repairs containing the assertion $B_1(a)$ will ensure that $\varphi$ is satisfied for all assignments over $Y$, while the repairs containing the assertion $B_2(a)$ will ensure that \Hyp contains a full assignment over $X$.

  We now provide the definition of \calK, argue that $\tup{\calK, A(a)}$ is an \ar-abduction problem, and define the ABox \Hyp.
  We then provide further intuition on the components of the construction, followed by the proof of correctness.
  Define $\calK = \tup{\calT, \calA}$, where $\calT \coloneqq \calT_0 \cup \calT_1 \cup \calT_2$, and $\calT_0$, $\calT_1$, $\calT_2$ and \calA are defined as follows:
  \[
    \calT_0 \coloneqq \{\, B_1 \sqcap B_2 \sqsubseteq \bot, \quad B_1' \sqcap B_2  \sqsubseteq \bot \,\},
  \]
  \begin{align*}
    \calT_1 \coloneqq {} & \left\{ B_1 \sqcap \bigsqcap_{1 \leq i \leq n} (T_{x,i} \sqcap F_{x,i}) \sqsubseteq A \right\} \cup {} \\
    & \{\, B_1 \sqcap C_j \sqsubseteq A \mid 1 \leq j \leq k \,\} \cup {} \\
    & \{\, B_1' \sqcap T_{y,i} \sqcap F_{y,i} \sqsubseteq \bot \mid 1 \leq i \leq m \,\} \cup {} \\
    & \{\, B_1' \sqcap T_{x,i} \sqcap C_j \sqsubseteq \bot \mid \neg x_i \in c_j \,\} \cup {} \\
    & \{\, B_1' \sqcap F_{x,i} \sqcap C_j \sqsubseteq \bot \mid x_i \in c_j \,\} \cup {} \\
    & \{\, B_1' \sqcap T_{y,i} \sqcap C_j \sqsubseteq \bot \mid \neg y_i \in c_j \,\} \cup {} \\
    & \{\, B_1' \sqcap F_{y,i} \sqcap C_j \sqsubseteq \bot \mid y_i \in c_j \,\}
  \end{align*}
  \begin{align*}
    \calT_2 \coloneqq {} & \left\{ B_2 \sqcap \bigsqcap_{1 \leq i \leq n} \mathrm{HAVE}_i \sqsubseteq A \right\} \cup {} \\
    & \{\, T_{x,i} \sqsubseteq \mathrm{HAVE}_i,\; F_{x,i} \sqsubseteq \mathrm{HAVE}_i \mid 1 \leq i \leq n \,\}
  \end{align*}    
  \begin{align*}
    \calA \coloneqq {} & \{ B_1(a), B_1'(a), B_2(a) \} \cup {} \\
    & \{\, T_{y,i}(a), F_{y,i}(a) \mid 1 \leq i \leq m \,\} \cup {} \\
    & \{\, C_j(a) \mid 1 \leq j \leq k \,\}.
  \end{align*}
  First, note that $\tup{\calK, A(a)}$ is an \ar-abduction problem:
  the KB \calK is inconsistent, for example it has the conflict $\{B_1(a), B_2(a)\}$.
  Also, $\calK \not\models_\ar A(a)$ as there is a repair \calR of \calK with $B_2(a) \in \calR$.
  By the axioms in $\calT_0$, we have $B_1(a) \not\in \calR$, so $A(a)$ cannot be entailed by the aximos in $\calT_1$.
  But there is also no assertion of the form $T_{x,i}(a)$ or $F_{x,i}(a)$ in \calR, as these are not contained in \calA.
  Hence, $A(a)$ cannot be entailed by the first axiom in $\calT_2$, so $\tup{\calT, \calR} \not\models A(a)$.
  Now, define the \ar-hypothesis \Hyp for $\tup{\calK, A(a)}$ by
  \begin{align*}
    \Hyp \coloneqq \{\, T_{x,i}(a), F_{x,i}(a) \mid 1 \leq i \leq n \,\}.
  \end{align*}

  We now provide some intuition on the construction of \calK and \Hyp.
  The axioms in $\calT_0$ split the set of repairs into two parts:
  Those repairs that contain $B_1(a)$ and potentially $B_1'(a)$, and those repairs that contain $B_2(a)$.
  The axioms in $\calT_1$ only affect the former repairs while those in $\calT_2$ only affect the latter.
  Regarding the encoding of $\varphi$ in the construction, presence of an assertion of the form $T_{z,i}(a)$ or $F_{z,i}(a)$ encodes that variable $z$ is assigned to \texttt{true} or \texttt{false}, resp.
  Accordingly, repairs containing $B_1(a)$ encode assignments over $Y$, due to the axioms of the form $B_1' \sqcap T_{y,i} \sqcap F_{y,i} \sqsubseteq \bot$, while similarly subsets of \Hyp encode assignments over $Z$.
  (W.l.o.g., one can assume that \Hyp contains at most one of the assertions $T_{y,i}$ and $F_{y,i}$ for each $i$, due to how satisfaction of $\varphi$ is encoded.)
  Further, the assertions of the form $C_j(a)$ correspond to the terms $c_j$ of $\varphi$.

  Based on these ideas, $\calT_1$ ensures that for a given subset $\Hyp' \subsetneq \Hyp$, $A(a)$ is entailed in all repairs containing $B_1(a)$, iff the assignment $\theta_{\Hyp'}$ corresponding to $\Hyp'$ satisfies the formula $\forall Z \varphi$:
  The last $4$ kinds of axioms in $\calT_1$ ensure that $C_j(a)$ conflicts with assignments that falsify $c_j$, so some $C_j(a)$ only remains in all repairs (yielding entailment of $A(a)$), if at least one term of $\varphi[\theta_{\Hyp'}]$ is satisfied by each assignment over $Z$.
  This construction is circumvented for the full set \Hyp: Here, entailment of $A(a)$ is ensured via the first axiom in $\calT_1$.
  On the other hand, $\calT_2$ ensures that each \ar-hypothesis $\Hyp' \subsetneq \Hyp$ of $\tup{\calK, A(a)}$ contains at least one of the assertions $T_{y,i}(a)$, $F_{y,i}(a)$ (using the repairs containing $B_2(a)$).

  The full proof of correctness is now split into three parts, namely the proof that \Hyp is an \ar-hypothesis for $\tup{\calK, A(a)}$ and the two directions of the equivalence that there is an \ar-hypothesis $\Hyp' \subsetneq \Hyp$ of $\tup{\calK, A(a)}$ iff $\Phi$ is false, shown in Claims~\ref{cl:elbot-subsetmin-1}, \ref{cl:elbot-subsetmin-2} and~\ref{cl:elbot-subsetmin-3}.

  \begin{claim}\label{cl:elbot-subsetmin-1}
    \Hyp is an \ar-hypothesis for $\tup{\calK, A(a)}$.
  \end{claim}
  \begin{claimproof}
    Consider any repair $\calR$ of $\tup{\calT, \calA \cup \Hyp}$.
    We consider two cases.

    Case ``$B_1(a) \in \calR$''.
    By the axioms in $\calT_0$, we have $B_2(a) \not\in \calR$, so $A(a)$ cannot be entailed by the first axiom in $\calT_2$.
    If we have $C_j(a) \in \calR$ for some $j$, $A(a)$ is entailed by the axiom $B_1 \sqcap C_j \sqsubseteq A$.
    Otherwise, the only remaining disjointness axioms that could trigger are those of the form $B_1' \sqcap T_{y,i} \sqcap F_{y,i} \sqsubseteq \bot$, which do not affect the concepts of the form $T_{x,i}$ or $F_{x,i}$.
    Hence, by maximality of repairs, we have $T_{x,i}(a), F_{x,i}(a) \in \calR$ f.a.\ $1 \leq i \leq n$.
    Consequently, $A(a)$ is entailed by the first axiom in $\calT_1$.

    Case ``$B_1(a) \not\in \calR$''.
    First note that $B_1'(a) \not\in \calR$: Otherwise, we would have $B_2(a) \not\in \calR$ by the axioms in $\calT_0$.
    But then, $B_1(a) \in \calR$ by maximality of repairs, as $B_1$ does not occur in any disjointness axioms in $\calT_1$ or $\calT_2$.
    As $B_1'(a) \not\in \calR$, none of the disjointness axioms in $\calT_1$ can trigger, so we only need to consider those in $\calT_2$.
    Hence, we have $B_2(a), T_{x,i}(a), F_{x,i}(a) \in \calR$ for all $1 \leq i \leq n$ by maximality of repairs.
    Consequently, $A(a)$ is entailed via the axioms of the form $T_{x,i} \sqsubseteq \mathrm{HAVE}_i(a)$ and $F_{x,i} \sqsubseteq \mathrm{HAVE}_i(a)$ as well as the axiom $B_2 \sqcap \bigsqcap_{1 \leq i \leq n} \mathrm{HAVE}_i \sqsubseteq A$.
  \end{claimproof}

  \begin{claim}\label{cl:elbot-subsetmin-2}
    If $\forall X \exists Y \varphi(X,Y)$ is false, then there is a subset $\Hyp' \subsetneq \Hyp$ s.t.\ $\Hyp'$ is an \ar-hypothesis for $\tup{\calK, A(a)}$.
  \end{claim}
  \begin{claimproof}
    In this case, there is an assignment $s_X$ over $X$ s.t.f.a. assignments $s_Y$ over $Y$, there is conjunction term $c_j$ in $\neg \varphi$ with
    \[s_X \cup s_Y \models c_j.\]
    Define
    \[\Hyp' = \{\, T_{x,i}(a) \mid s_X(x_i) = 1 \,\} \cup \{\, F_{x,i}(a) \mid s_X(x_i) = 0 \,\}.\]
    Consider any repair $\calR$ of $\tup{\calT, \calA \cup \Hyp'}$.
    Similar to the above proof of claim, we distinguish two cases based on whether $B_1(a) \in \calR$.

    Case ``$B_1(a) \in \calR$''.
    By the axioms in $\calT_0$, we have $B_2(a) \not\in \calR$, so $A(a)$ cannot be entailed by the first axiom in $\calT_2$.
    If we have $B_1'(a) \not\in \calR$, then none of the disjointness axioms in $\calT_1$ can trigger, so there is some $C_j(a) \in \calR$ by maximality of repairs and we get the entailment of $A(a)$ via the axiom $B_1 \sqcap C_j \sqsubseteq A$.
    Otherwise, the disjointness axioms of the form $B_1' \sqcap T_{y,i} \sqcap F_{y,i}$ ensure that \calR contains at most one of the assertions $T_{y,i}(a)$ and $F_{y,i}(a)$ for each $1 \leq i \leq m$.
    Hence, the present assertions in \calR correspond to the assignment $s_\calR$ over $Y$ defined by
    \[s_\calR(y_i) \coloneqq \begin{cases}
      1, & \text{if } T_{y,i}(a) \in \calR \\
      0, & \text{if } F_{y,i}(a) \in \calR.
    \end{cases}\]
    By our assumption, there is some $c_j$ s.t. $s_X \cup s_\calR \models c_j$.
    By construction of $\calT_1$, this means that $C_j(a)$ is not in conflict with any of the assertions in $\Hyp'$ or any of the assertions $T_{y,i}(a)$ or $F_{y,i}(a)$ remaining in \calR.
    Hence, $A(a)$ is entailed by the axiom $B_1 \sqcap C_j \sqsubseteq A$.

    Case ``$B_1(a) \not\in \calR$''.
    First note that $B_1'(a) \not\in \calR$ by the same argument as in the previous proof of claim.
    As $B_1'(a) \not\in \calR$, none of the disjointness axioms in $\calT_1$ can trigger.
    Hence, we have $B_2(a) \in \calR$ by maximality of repairs.
    Further, by maximality of repairs and the fact that $s_X$ is an assignment over $X$, we have $T_{x,i}(a) \in \calR$ or $F_{x,i}(a) \in \calR$ for each $1 \leq i \leq n$.
    Consequently, $A(a)$ is entailed via the axioms of the form $T_{x,i} \sqsubseteq \mathrm{HAVE}_i$ and $F_{x,i} \sqsubseteq \mathrm{HAVE}_i$ as well as the axiom $B_2 \sqcap \bigsqcap_{1 \leq i \leq n} \mathrm{HAVE}_i \sqsubseteq A$.
  \end{claimproof}

  \begin{claim}\label{cl:elbot-subsetmin-3}
    If there is a subset $\Hyp' \subsetneq \Hyp$ s.t.\ $\Hyp'$ is an \ar-hypothesis for $\tup{\calK, A(a)}$, then $\forall X \exists Y \varphi(X,Y)$ is false.
  \end{claim}
  \begin{claimproof}
    By assumption we have $\tup{\calT, \calA \cup \Hyp'} \models_\ar A(a)$.
    First, note that $\Hyp' \cup \{ B_2(a) \}$ is \calT-consistent.
    Hence, there is a repair $\calR$ of $\tup{\calT, \calA \cup \Hyp'}$ with $\Hyp' \cup \{B_2(a)\} \subseteq \calR$.
    By the axioms in $\calT_0$, we have $B_1(a) \not\in \calR$.
    Hence, $A(a)$ cannot be entailed using axioms in $\calT_1$, so it must be entailed by the axiom $B_2 \sqcap \bigsqcap_{1 \leq i \leq n} \mathrm{HAVE}_i \sqsubseteq A$.
    This means that we have $T_{x,i}(a) \in \calR$ or $F_{x,i}(a) \in \calR$ for each $1 \leq i \leq n$.
    As these assertions do not occur in \calA, they must be contained in $\Hyp'$.

    Define the assignment $s_{\Hyp'}$ over $X$ by
    \[s_{\Hyp'}(x_i) \coloneqq \begin{cases}
      1, & \text{if } T_{x,i}(a) \in \Hyp' \\
      0, & \text{otherwise}
    \end{cases}\]
    and consider any assignment $s_Y$ over $Y$.
    As $s_Y$ is a function, the set
    \begin{align*}
      \calA_{s_Y} \coloneqq {} & \{B_1(a), B_1'(a)\} \cup \Hyp' \cup {} \\
      & \{\, T_{y,i}(a) \mid s_Y(y_i) = 1 \,\} \cup \{\, F_{y,i}(a) \mid s_Y(y_i) = 0 \,\}
    \end{align*}
    is $\calT$-consistent, so there are repairs containing all assertions in $\calA_{s_Y}$.
    In these repairs, $A(a)$ cannot be entailed by the axioms in $\calT_2$, since $B_2(a)$ is in conflict with $B_1(a)$.
    Further, it cannot be entailed by the first axiom in $\calT_1$, as $\Hyp' \subsetneq \Hyp$.
    Hence, there is some index $j$ s.t.\ $C_j(a)$ is not in conflict with any of the assertions in $\calA_{s_Y}$.
    (Otherwise, there is a repair $\calR$ among these that does not contain any of the assertions of the form $C_j(a)$, meaning that $\tup{\calT, \calR} \not\models A(a)$.)
    In particular, this is also true for the subset $\calA_{s_Y}' \subseteq \calA_{s_Y}$ defined by
    \[\calA_{s_Y}' \coloneqq \calA_{s_Y} \setminus \{\, F_{x,i}(a) \mid 1 \leq i \leq n, T_{x,i}(a) \in \calA_{s_Y}' \,\}.\]
    But the assertions of the forms $T_{x,i}(a)$, $F_{x,i}(a)$, $T_{y,i}(a)$ and $F_{y,i}(a)$ present in $\calA_{s_Y}'$ correspond to the assignment $s_{\Hyp'} \cup s_Y$, so this means that $c_j$ is not falsified by $s_{\Hyp'} \cup s_Y$ by construction of $\calT_1$. 
    Consequently, $s_{\Hyp'} \cup s_Y \models \neg \varphi$.
    As $s_Y$ was picked arbitrarily, this means that $\exists X \forall Y \neg \varphi(X,Y)$ is true and $\forall X \exists Y \varphi(X,Y)$ is false.
  \end{claimproof}
  Combined, the three above claims show that $\forall X \exists Y \varphi(X,Y)$ is true iff \Hyp is a $\subseteq$-minimal \ar-hypothesis for $\tup{\calK, A(a)}$, finishing the hardness proof.  
\end{proof}

\subsection{Proof details for \Cref{thm:verification-subset-min-el}}
To see that \Cref{claim:nt-subset-verification} still applies when we replace non-triviality by conflict-confinement, observe that $\calH\dfn \{C(a), D(a)\}$ is indeed conflict-confining since no axiom in $\calT'$ causes a new conflict involving either of the two (or both) assertions.

\subsection{Proof details for \Cref{cor:verification-cc-min-el} ($\PiP$-hardness for Brave semantics)}

We reuse the translation from $\forall\exists$-QBFs as in \Cref{sec:cc} and define $\calK\dfn \tup{\calT,\calA}$, where:
\begin{align*}
	\calT \coloneqq {} & \{\, C_d\sqcap \bigsqcap_{y\in Y} V_y \sqcap A_{\bar \varphi} \subsum C\, \} \cup {} \\
	%
	& \{\,C_d\sqcap A_y \subsum V_y, C_d\sqcap A_{\bar y} \subsum V_y \mid y\in Y\,\} \cup {} \\
	& \{\,A_x \sqcap A_{\bar x} \subsum \bot \mid x\in X\, \}  \cup {} \\
	%
	& \{\, C_d\sqcap A_\ell \subsum A_c \mid \ell \in c, c\in\varphi\, \}\cup {} \\ 
	& \{\,C_d\sqcap \bigsqcap_{y\in Y} V_y\sqcap \bigsqcap_{c\in \varphi}A_c\subsum A_\varphi\, \} \cup {} \\ 
	& \{\, A_\varphi \sqcap A_{\bar\varphi}\subsum \bot\,\} \cup \{\, A_\varphi\sqcap B_d \subsum \bot\, \}  \cup {} \\ 
	%
	& \{\, C_d\sqcap B_d \subsum \bot\, \} \cup {} 
	\{\, C\sqcap B_d \subsum \bot\, \} \cup {}\\ 
	& \{\, A_c\sqcap B_d \subsum \bot \mid c\in\varphi\, \} \cup {} 
	\{\, V_y\sqcap B_d \subsum \bot\, \} \text{ and} \\
	\calA \coloneqq & \{\, A_z(m), A_{\bar z}(m) \mid z\in Z \,\} \cup \{A_{\bar\varphi}(m), B_d(m), C_d(m)\},
\end{align*}

Let $N = |Y|$.
We now introduce fresh concept names $\{X_i \mid i\leq N+1\}$ and add additional axioms to obtain a new KB $\calK'\dfn \tup{\calT',\calA}$ with
$\calT' \dfn \calT\cup\{C_d\sqcap \bigsqcap_{1 \leq i \leq N+1} X_i \subsum C\}$.
Finally, let $\alpha\dfn C(m)$ as before and take  $\calH\dfn \{X_i(m)\mid i\leq N+1\}$. 
For correctness, we observe that \Hyp is a conflict-confining \brave-hypothesis for $C(m)$ of size $N+1$, whereas any conflict-confining \brave-hypothesis for $\tup{\calK, \alpha}$ has size $\leq N$ without loss of generality (since $N = |Y|$.
This implies the following equivalences:
\Hyp is not a $\leq$-minimal conflict-confining \brave-hypothesis for $\tup{\calK', C(m)}$ iff there is a conflict-confining \brave-hypothesis for $\tup{\calK', \alpha}$ of size $\leq N$ iff there is a conflict-confining \brave-hypothesis for $\tup{\calK, \alpha}$.

\begin{claim}\label{claim:el-cc-cmin-brave}
	$\Phi$ is false iff $\calH$ is not a $\leq$-minimal conflict-confining \brave-hypothesis for $\tup{\calK', \alpha}$.
\end{claim}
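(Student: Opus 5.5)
I will establish the claim in three steps:
\begin{enumerate*}[label=(\arabic*)]
  \item $\Hyp$ is a conflict-confining \brave-hypothesis for $\tup{\calK',\alpha}$;
  \item $\tup{\calK',\alpha}$ admits a conflict-confining \brave-hypothesis of size at most $N$ iff $\tup{\calK,\alpha}$ admits a conflict-confining \brave-hypothesis;
  \item the latter holds iff $\Phi$ is false, by \Cref{claim:el-cc-brave}.
\end{enumerate*}
Since $|\Hyp|=N+1$, $\Hyp$ fails to be $\leq$-minimal among conflict-confining \brave-hypotheses exactly when a strictly smaller one exists. Step~(1) therefore guarantees that the only way $\Hyp$ can fail is through such a smaller hypothesis, which gives the claim.

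For step~(1), I take $\calB\dfn\Hyp\cup\{C_d(m),A_{\bar\varphi}(m)\}$. This set is $\calT'$-consistent, because no $X_i$ occurs in a disjointness axiom, and $\tup{\calT',\calB}\models C(m)$ via the new axiom. Any repair extending $\calB$ thus witnesses brave entailment. For conflict-confinement, the only axiom of $\calT'$ involving the $X_i$ derives $C$, and only together with $C_d$. Any new conflict must therefore contain $C_d(m)$ together with all $X_i(m)$, and must become inconsistent through $C\sqcap B_d\sqsubseteq\bot$, so it also contains $B_d(m)$. But $\{C_d(m),B_d(m)\}$ is already a conflict, which contradicts minimality of the new conflict. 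This is exactly the idea of \cref{ex:brave-cc}.

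For step~(2), the direction from $\calK$ to $\calK'$ takes the hypothesis $\{A_\ell(m)\mid\ell\in\theta_Y\}$ from the proof of \Cref{claim:el-cc-brave}. It has size $|Y|=N$ and remains a conflict-confining \brave-hypothesis in $\calK'$, since the new axiom only adds consequences involving the fresh $X_i$. Conversely, let $\Hyp'$ be a conflict-confining \brave-hypothesis for $\tup{\calK',\alpha}$ with $|\Hyp'|\leq N$. Then $\Hyp'$ cannot contain all $N+1$ assertions $X_i(m)$, so the new axiom never fires in any subset of $\calA\cup\Hyp'$. Dropping the remaining $X_i(m)$ assertions, which are inert, yields an ABox that is a conflict-confining \brave-hypothesis for $\tup{\calK,\alpha}$. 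Here I use that entailments and conflicts coincide for $\calT$ and $\calT'$ on ABoxes not containing all the $X_i(m)$.

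The main obstacle is the converse direction of step~(2). It requires a careful argument that conflicts in $\calK'$ and $\calK$ agree on such ABoxes, and that removing assertions preserves both brave entailment and conflict-confinement. The latter follows from \cref{obs:downward} together with the fact that the removed assertions are irrelevant to deriving $C(m)$. Combining steps~(1) to~(3) yields the claim.
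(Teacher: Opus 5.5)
Your proposal is correct and matches the paper's argument: the same chain of equivalences through \Cref{claim:el-cc-brave}, plus the observation that $\Hyp$ is conflict-confining because any new conflict would have to contain the existing conflict $\{C_d(m),B_d(m)\}$. Your version is somewhat more careful than the paper's. It states the size bound $N$ explicitly in the middle equivalence; the paper's chain omits it, even though $\Hyp$ itself is always a conflict-confining hypothesis for $\tup{\calK',\alpha}$. It also justifies the converse direction by noting that a hypothesis with at most $N$ assertions cannot contain all $X_i(m)$, so the new axiom never fires.
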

\begin{claimproof}
	The claim follows by observing that
	$\Phi$ is false iff $\tup{\calK, \alpha}$ admits a conflict-confining \brave-hypothesis (\Cref{claim:el-cc-brave})
	iff $\tup{\calK', \alpha}$ admits a conflict-confining \brave-hypothesis
	iff 
	$\calH$ is not a conflict-confining cardinality-minimal \brave hypothesis for $\tup{\calK', \alpha}$.
	Observe that $X_i$ for $i\leq N+1$ are fresh names not in $\calK$, hence $\calH$ is conflict-confining. 
	In other words, $\calH$ alone can not trigger any conflict in $\calK'$. 
	Moreover, even though entailing $C(m)$ in $\calK'$ triggers a conflict due to $C\sqcap B_d\subsum \bot$, this is not a new conflict as $\{C_d(m),B_d(m)\}$ is already a conflict in $\calK$ (and hence also in $\calK'$).
\end{claimproof}

\subsection{Proof details for \Cref{thm:verification-nt-min-conf-el}}

\begin{theorem}
   Verification of $\subseteq_c$-minimal non-trivial \brave-hypotheses is $\PiP$-hard.
\end{theorem}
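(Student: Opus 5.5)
We plan to reduce from the complement of validity of $\forall\exists$-QBFs, i.e.\ from falsity of $\Phi = \forall Y\exists Z.\varphi$, which is $\SigmaP$-complete. This means that \emph{failure} of $\subseteq_c$-minimality should correspond to $\Phi$ being false. The idea mirrors the \ar case of \Cref{thm:verification-nt-min-conf-el}. We start from the \brave-abduction problem $\tup{\calK,\alpha}$ built in the proof of \Cref{thm:elbot-exist-brave-cc}. By \Cref{claim:el-cc-brave}, $\Phi$ is false iff $\tup{\calK,\alpha}$ admits a non-trivial conflict-confining \brave-hypothesis. We then add a fallback hypothesis $\Hyp_e$ which is a non-trivial \brave-hypothesis but necessarily creates a new conflict.

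Concretely, we take fresh concept names $E, F$ and set $\calT' \dfn \calT \cup \{E \sqsubseteq C,\; E\sqcap F \sqsubseteq \bot\}$ and $\calA' \dfn \calA \cup \{F(m)\}$, keeping $\alpha = C(m)$, and choose $\Hyp_e \dfn \{E(m)\}$. Then $\Hyp_e$ is non-trivial and is a \brave-hypothesis: $\{E(m)\}$ is consistent, so it lies in some repair, which entails $C(m)$. Its new conflicts are $\{E(m),F(m)\}$, and possibly further conflicts through $C \sqcap B_d \sqsubseteq \bot$, such as $\{E(m),B_d(m)\}$. Hence $\Hyp_e$ is not conflict-confining. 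We must also check that $\tup{\calK',\alpha}$ is still a \brave-abduction problem. This holds because $F$ and $E$ do not help derive $C(m)$ from $\calA'$.

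The key equivalence is: $\Hyp_e$ is not a $\subseteq_c$-minimal non-trivial \brave-hypothesis for $\tup{\calK',\alpha}$ iff some non-trivial \brave-hypothesis $\Hyp'$ has $\Conf(\tup{\calT',\calA'\cup\Hyp'})\subsetneq\Conf(\tup{\calT',\calA'\cup\Hyp_e})$ iff $\Phi$ is false.

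For ($\Leftarrow$), we take the conflict-confining hypothesis $\Hyp$ from \Cref{claim:el-cc-brave}, which consists only of $A_y(m)$/$A_{\bar y}(m)$ assertions. Since $\Hyp$ is conflict-confining for $\calK$, we argue it is conflict-confining for $\calK'$: the new names only interact with $\calA$ via $F$, which $\Hyp$ does not touch. So its conflicts are exactly $\Conf(\tup{\calT',\calA'})$. This set is a strict subset of the conflicts of $\Hyp_e$, because every old conflict remains a conflict and $\{E(m),F(m)\}$ is new.

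For ($\Rightarrow$), suppose some non-trivial hypothesis $\Hyp'$ has strictly fewer conflicts than $\Hyp_e$. We need to argue that such an $\Hyp'$ can be turned into a non-trivial conflict-confining \brave-hypothesis for $\calK$. This is the main obstacle. The plan is as follows. If $E(m)\in\Hyp'$, then $\Hyp'$ already has every conflict of $\Hyp_e$, since those conflicts only involve $E(m)$ and $\calA'$; so its conflict set is a superset, not a strict subset. Hence $E(m)\notin\Hyp'$. Then $\Hyp'$ must derive $C(m)$ in some repair via $\calT$ alone.

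Next, we show that any conflict of $\Hyp'$ not in $\Conf(\tup{\calT',\calA'})$ is not a conflict of $\Hyp_e$. Conflicts of $\Hyp_e$ all contain $E(m)$ or are old, while new conflicts of $\Hyp'$ do not contain $E(m)$. Therefore $\Conf(\tup{\calT',\calA'\cup\Hyp'})\subseteq \Conf(\tup{\calT',\calA'\cup\Hyp_e})$ forces $\Hyp'$ to be conflict-confining. Dropping $F(m)$-related parts, it is then a conflict-confining \brave-hypothesis for $\tup{\calK,\alpha}$. By \Cref{claim:el-cc-brave}, $\Phi$ is false.

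The delicate points are to verify that $F(m)$ does not change the repairs relevant to $\alpha$, and that $\Hyp'$ restricted to the original signature still works. Finally, hardness for the $\Sigma$-restricted variants follows from \Cref{obs:sigma}.
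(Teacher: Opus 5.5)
Your proposal is correct and follows essentially the paper's route: start from the construction of \Cref{thm:elbot-exist-brave-cc}, add a fresh fallback hypothesis that brave-entails $C(m)$ but clashes with a fresh ABox assertion, and conclude via \Cref{claim:el-cc-brave}. The only difference is that the paper guards the new axiom as $C_d \sqcap X \sqsubseteq C$, so $\{X(m)\}$ creates exactly one new conflict and ``not $\subseteq_c$-minimal $\Rightarrow$ conflict-confining'' becomes immediate; your unguarded $E \sqsubseteq C$ also creates $\{E(m),B_d(m)\}$, which your case split on whether $E(m)\in\Hyp'$ handles correctly.
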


\begin{proof}
	We reuse the construction from the hardness proof in \Cref{sec:cc}.
	Let $\Phi$ be a $\forall\exists$-QBF, and $\calK = \tup{\calT,\calA}$ be the KB obtained $\Phi$ using that construction.
	We add new axioms and assertions to \calK to obtain the KB $\calK' \dfn \tup{\calT,\calA'}$ as follows:
	\begin{align*}
		\calT' &\dfn \calT \cup \{C_d \sqcap X \subsum C, X\sqcap Y\subsum \bot\}, \text{ and} \\
		\calA' &\dfn \calA\cup \{Y(m)\}
	\end{align*}
	Then, we let $\alpha\dfn C(m)$ as before and take $\calH\dfn\{X(m)\}$.
	Here, $\calH$ induces exactly one more conflict in $\calK'$, namely $\{X(m), Y(m)\}$.
	
	For correctness, it is easy to see that there is a conflict-confining \brave-hypothesis for $\tup{\calK, C(m)}$ iff there is such a hypothesis for $\tup{\calK', C(m)}$.
	Furthermore, the letter is equivalent to \Hyp not being a $\subseteq_c$-minimal hypothesis for $\tup{\calK', C(m)}$, sincce existence \Hyp introduces exactly one new conflict for $\calK'$, while a conflict-confining \brave-hypothesis introduces no new conflicts.
	Thus the correctness follows due to the proof of Claim~\ref{claim:el-cc-brave}.
	This yields the mentioned $\PiP$-hardness.
\end{proof}

\end{document}
